\let\LaTeXcline\cline
\let\cline\LaTeXcline
\newtheoremstyle{slanted}
  {0.4\topsep}
  {\topsep}
  {\slshape}
  {}
  {\bfseries}
  {.}
  {0.5em}
  {}
\theoremstyle{slanted}
\newtheorem{theorem}{Theorem}[section]
\newtheorem*{theorem*}{Theorem}
\newtheorem{proposition}[theorem]{Proposition}%
\newtheorem{question}{Question}
\newtheorem{definition}{Definition}[section]%
\newtheorem*{property*}{Property}%
\newtheorem{lemma}{Lemma}[section]%
\newtheorem{corollary}[theorem]{Corollary}
\declaretheoremstyle[%
  spaceabove=-6pt,%
  spacebelow=6pt,%
  headfont=\normalfont\itshape,%
  postheadspace=1em,%
  qed=\qedsymbol%
]{mystyle} 
\declaretheorem[name={Proof},style=mystyle,unnumbered,
]{prf}
\begin{document}

\title[Article Title]{Quantum Wave Atom Transforms}


\author[1,3]{\fnm{Marianna} \sur{Podzorova}}

\author[2,3]{\fnm{Yi-Kai} \sur{Liu}}


\affil[1]{\orgname{\small Department of Computer Science, University of Maryland},
\city{College Park}, 
\state{MD}}

\affil[2]{\orgname{\small Applied and Computational Mathematics Division, National Institute of Standards and Technology (NIST)},
\city{Gaithersburg}, 
\state{MD}}

\affil[3]{\orgname{\small Joint Center for Quantum Information and Computer Science (QuICS), NIST/University of
Maryland},
\city{College Park}, 
\state{MD}}

\renewcommand{\i}{\mathbf{i}}


\abstract{This paper constructs the first {efficient implementation of a quantum} wavelet packet transform with a ``parabolic scaling'' tree structure, sometimes called {a quantum} wave atom transform. Classically, wave atom transforms are used to construct sparse representations of differential operators, which enable fast {classical} algorithms for {solving wave equations}. 
Compared to previous work {on quantum wavelet transforms}, our quantum algorithm can implement a larger class of wavelet and wave atom transforms, by using an efficient representation for a larger class of possible tree structures. Our quantum implementation has $O(\mathrm{poly}(n))$ gate complexity for applying a transform of dimension $2^n$, while classical implementations use $O(n 2^n)$ floating point operations. {This is potentially useful for designing} quantum algorithms for solving wave equations {that achieve an exponential speedup over classical algorithms}.}

\maketitle

\section{Introduction}\label{section:Introduction}
The quantum Fourier transform plays an important role in many quantum algorithms that achieve an exponential speedup over classical algorithms \cite{Jozsa_1998, childs2010quantum}. {Thus it is natural to ask whether quantum wavelet transforms could be similarly useful? Here, a \textit{quantum wavelet transform} is a unitary operation that can be performed on a quantum computer with $n$ qubits.\footnote{{In principle, one can also consider more general kinds of operations, such as isometries or quantum channels. These can be implemented using a quantum computer with ancilla qubits, or an open quantum system.}} It acts on quantum superposition states:
\begin{equation}\label{eqn-intro-qwt}
    \sum_{x=0}^{2^n-1} f(x) \ket{x} \mapsto \sum_{j,m,n\in\Gamma} c_{j,m,n} \ket{j,m,n},
\end{equation}
i.e., given a superposition state whose complex amplitudes encode some complex-valued function $f$ on the spatial domain $\lbrace 0,1,\ldots,2^n-1 \rbrace \subset \mathbb{R}$, the quantum wavelet transform returns a superposition state whose complex amplitudes encode the coefficients $c_{j,m,n} \in \mathbb{C}$ that appear when $f$ is expanded in some wavelet basis. Typically these coefficients $c_{j,m,n}$ are indexed by variables $(j,m,n)$ that range over some set $\Gamma$, and have an interpretation in terms of scale, frequency and position. Here we have assumed that $f$ is defined on a 1-dimensional spatial domain, but generalizations to more dimensions are also possible.}

{There are two natural questions that arise, when using wavelets to design quantum algorithms:
\begin{question}\label{question-1}
Can a quantum wavelet transform be implemented efficiently on a quantum computer, that is, can it be implemented using a family of quantum circuits whose size grows at most polynomially with the number of qubits $n$?
\end{question}
\noindent
This is different from the implementation of classical wavelet transforms on classical computers, in two respects: the number of gates in the quantum circuit is only allowed to grow polylogarithmically (not polynomially) with the dimension of the vector $(f(0),f(1),\ldots,f(2^n-1))$, and the individual gates in the quantum circuit can act simultaneously on the entire vector ``in superposition'' (though the action of each gate can only depend on a constant-sized subset of the qubits).
\begin{question}\label{question-2}
Can one use a quantum wavelet transform to design a quantum algorithm that solves a useful computational problem, exponentially faster than any classical algorithm?
\end{question}
\noindent
Here, the adjective \textit{useful} expresses the idea that one ultimately wants to solve a computational problem that has a broader motivation, beyond the mathematical theory of wavelets. This requires combining the quantum wavelet transform in Eq.~(\ref{eqn-intro-qwt}) with other techniques that are relevant to the problem being solved.
}

{The answers to Questions \ref{question-1} and \ref{question-2} both depend crucially on the choice of a particular wavelet transform. In this paper, we consider a family of wavelet transforms, known as \textit{wave atom} transforms \cite{DemanetThesis, DEMANET2007368, Villemoes2002WaveletPW}, which are used in classical algorithms for solving wave equations and simulating wave propagation \cite{Demanet2009-zt, Demanet2012}, but (to the best of our knowledge) have not been applied to the design of quantum algorithms.}

{There is a particularly strong motivation for studying \textit{quantum} wave atom transforms: if such an operation can be implemented efficiently on a quantum computer, then there is a promising (albeit heuristic) approach for using this operation to obtain fast quantum algorithms for solving wave equations. (We will describe this in more detail below.) If these steps could be made rigorous, they would provide positive answers to both Questions \ref{question-1} and \ref{question-2}, which would have practical applications in optics, seismology and other fields.}

{Yet for wave atoms, both of the above questions have remained open, due to the complexity of the wave atom construction. For example, wave atoms are constructed using a tree structure, which is not handled in previous works on efficient implementation of quantum wavelet transforms \cite{hoyer1997efficientquantumtransforms, fijany1998quantumwavelettransformsfast, Klappenecker1999-ol, Arguello, Li2019-mp, Zhang2022-bt, Bagherimehrab_2024, ni2024quantumwavepackettransforms}. Furthermore, in order to use a wave atom transform to solve a wave equation, one constructs sparse approximations of the solution operator, which are quite delicate. These sparse approximations have been studied in the context of classical algorithms \cite{candes2004curveletrepresentationwavepropagators, DEMANET2007368, DemanetThesis, Candes2003-vd}, but not in the context of quantum algorithms, which depend on different notions of sparsity and approximation error.}

{Our main contribution in this paper is to overcome the first difficulty (involving the tree structure), leading to the first efficient implementation of a quantum wave atom transform. This provides a positive answer to Question \ref{question-1}, and is a step towards addressing the technical issues involved in Question \ref{question-2}. In the following sections, we will describe the background and motivation for wave atoms, the strengths and weaknesses of our results, and how this relates to previous work, in more detail.}

{\subsection{Wavelets and wave atoms}\label{sec-intro-wavelets}}

In classical signal processing, wavelets are a powerful tool for constructing \textit{sparse representations} of complicated signals, using specially-tailored basis functions that are well-concentrated in both time and frequency. This has motivated the study of a vast variety of different wavelet families, characterized by various choices of vanishing moments, size of wave packet support, symmetry, smoothness, and other properties: Shannon, Meyer, Haar, Battle-Lemarié, or Daubechies compactly supported wavelets \cite{ 10lectures,WaveletTourofSignalProcessing}, Coiflets \cite{Daubechies1993-jm}, Ricker hats \cite{Ricker}, Gabor atoms \cite{Gabor1946},  wave atoms \cite{DemanetThesis, DEMANET2007368, Demanet2009-zt, Villemoes2002WaveletPW}, and curvelets \cite{candes2004curveletrepresentationwavepropagators, DemanetThesis}. 

Among these families of wavelets, wave atoms and curvelets are known to provide especially sparse representations for certain differential operators that arise in the study of wave equations \cite{Candes2003-vd,candes2004curveletrepresentationwavepropagators, DEMANET2007368, DemanetThesis,Demanet2009-zt, Villemoes2002WaveletPW}. This allows their application as a potentially effective preconditioner for partial differential equations \cite{DemanetThesis, Demanet2012}. 
So, some wave packet bases can be valuable for creating sparse representations of linear operators. 

{These sparse representations depend on a specific geometric property of the wave atom basis functions: the basis functions obey a so-called \textit{parabolic scaling law}, 
\begin{equation}\label{eqn-parabolic}
    (\text{wavelength}) \propto (\text{diameter})^2,
\end{equation}
where the wavelength (of each wave packet) is proportional to the square of the diameter (of the set on which the wave packet is supported). This is in contrast to conventional wavelets, which obey a linear scaling law (wavelength $\propto$ diameter), and Gabor wavelets, where the diameter is fixed while the wavelength is allowed to vary \cite{DEMANET2007368}.}

{In order to satisfy this scaling law, the wave atom basis functions are constructed by a recursive procedure, which can be viewed as building a \textit{wave packet tree} \cite{WaveletTourofSignalProcessing}, where the branching pattern of the tree is determined by the scaling law. This branching pattern is one of the main features that distinguish wave atoms from other wavelet bases. For most wavelet bases, the branching pattern is simple, e.g., it corresponds to a complete binary tree, or a dyadic tree, shown in Fig.~\ref{fig:Simple-tree-example}. Wave atoms require a branching pattern that is less regular, and lies somewhere in between a complete binary tree and a dyadic tree, as shown in Fig.~\ref{fig-wave-atom-tree-example}.} 

\subsection{Our results}

{In this paper, we address Question \ref{question-1}, by constructing efficient quantum circuits for implementing a large family of quantum wavelet transforms that are characterized by wave packet trees (Theorems \ref{theorem:algorithms-shannon} and \ref{theorem:complexity-general}). This construction is quite general, subject to some mild technical conditions (e.g., the existence of efficiently-computable encodings of the branching patterns of the trees, and the existence of efficient quantum circuits for performing certain 2-dimensional unitary rotations that are related to the shape of the mother wavelet).}

{By making appropriate choices for the mother wavelet and the branching pattern of the tree, we obtain the first efficient implementation a quantum wave atom transform that satisfies a parabolic scaling law (Corollary \ref{theorem:complexity-specific}). As mentioned above, this parabolic scaling law is crucial for constructing sparse representations of linear operators that appear, e.g., in solving wave equations \cite{DEMANET2007368}.}

{Note that there is an issue of approximation error, when running the quantum circuits described in this paper. These quantum circuits provide an \textit{exact} implementation of the desired quantum wave atom transform, assuming the ability to perform arbitrary 1- and 2-qubit gates. In practice, however, one only has the ability to perform 1- and 2-qubit gates that are chosen from a fixed, finite set $G$, which generates a dense subgroup of SU(2). Thus, one must use the gates in $G$ to approximate the arbitrary 1- and 2-qubit gates that appear in our quantum circuits for the quantum wave atom transform. This will increase the size of the quantum circuits, and produce some nonzero error in their output. However, these effects can be controlled, while preserving the efficiency of the implementation, using standard techniques. For example, using the Solovay-Kitaev theorem \cite{Nielsen_Chuang_2010}, for any $\epsilon>0$, one can ensure that the error in the output of the quantum circuit is at most $\epsilon$ in operator norm, while the size of the quantum circuit is increased by at most a multiplicative factor that is polylogarithmic in $n$ and $1/\epsilon$.}

{\subsection{Related work}}

{There is a large literature on quantum wavelet transforms \cite{hoyer1997efficientquantumtransforms, fijany1998quantumwavelettransformsfast, Klappenecker1999-ol, Arguello, Li2019-mp, Zhang2022-bt, Bagherimehrab_2024, ni2024quantumwavepackettransforms}. Many of these works are complementary to ours: they solve technical problems related to generating quantum superposition states that have more complicated shapes (compared to the ones used in our work), but they do not handle wave packet trees with complicated branching patterns (which is the main focus of our construction). As discussed in Section \ref{sec-intro-wavelets}, the use of wave packet trees with complex branching patterns is crucial for applications that involve solving wave equations.}

{Among these previous works, the work of Ni et al \cite{ni2024quantumwavepackettransforms} is particularly relevant, because it provides efficient implementations of quantum transforms using wavelets and Gabor atoms, using a ``blending'' technique, which we adapt to our setting in Section \ref{section:Decomposition}. Our main contribution, compared to \cite{ni2024quantumwavepackettransforms}, is to handle a much larger class of wave packet trees, namely ``monotonic trees,'' defined in Section \ref{section: Wave packet trees}; these include parabolic-scaling trees as a special case.}

{Our results are a first demonstration that a quantum wave atom transform can be implemented efficiently. This raises an interesting question: can one extend these techniques to work with different choices of the mother wavelet? Recent work by Bagherimehrab et al \cite{Bagherimehrab_2024}, on implementing quantum wavelet transforms using a powerful technique called linear combinations of unitaries, may be helpful here. (Note that the constructions in \cite{Bagherimehrab_2024} are quite flexible, and can already handle some wavelet transforms that have simple tree structures. The hope is that these techniques can be extended to handle wave atom transforms, where the tree structure is more elaborate.)}

{Finally, we expect that our results can be extended to handle wave atoms defined on $\mathbb{R}^2$, by adapting ideas from the existing literature on classical wave atom transforms \cite{DEMANET2007368, DemanetThesis}.}

{\subsection{Potential applications of our results}}

{Turning to Question \ref{question-2}, we now describe how our results are potentially useful for quantum algorithms for solving wave equations. At a high level, we argue that the special properties of wave atom bases (namely, the existence of sparse representations of wave propagators using wave atom bases) can be used to improve the performance of quantum algorithms for solving wave equations and differential equations, such as those developed in Costa et al \cite{costa2019quantum} and Bagherimehrab et al \cite{bagherimehrab2023fastquantumalgorithmdifferential}.}

{Suppose one wants to solve a linear wave equation, describing wave propagation in an inhomogenous medium in some finite region of space $\Omega \subset \mathbb{R}^d$, with appropriate boundary conditions. One can use standard techniques to encode this problem into a linear system of equations $Ax=b$ in dimension $N$, where $N$ is proportional to the number of points used to discretize the domain $\Omega$. (One can also use similar techniques to encode the problem into the dynamics of a quantum Hamiltonian \cite{costa2019quantum}.)} 

One can then run a quantum algorithm, such as the HHL (Harrow-Hassidim-Lloyd) algorithm \cite{Harrow_2009, morales2024quantum}, to solve this system of equations, with running time that scales polylogarithmically with $N$ --- an exponential speedup over classical algorithms. (A large body of work has shown similar speedups for solving differential equations, and simulating Hamiltonian time evolution; see, e.g., \cite{berry2014high, childs2018toward, jennings2024cost}.)

However, in order to use the HHL algorithm, certain conditions must be satisfied: the linear system must be \textit{well-conditioned}, and it must have an extremely compact \textit{description}, which is efficiently computable, using space and time polylogarithmic in $N$. Furthermore, there must be ways to \textit{prepare} the quantum state that encodes $b$, and \textit{read-out} interesting properties of the solution $x$, which are similarly efficient (i.e., running in time polylogarithmic in $N$). While these are conditions are quite restrictive, they can be satisfied in certain cases, particularly using techniques that involve sparsity \cite{clader2013preconditioned, berry2014exponential}. 

{The hope is that the above conditions can be satisfied by transforming the system of equations $Ax=b$ into a wave atom basis, using a quantum wave atom transform. That is, one computes matrix elements of $A$ with respect a wave atom basis, and one performs state preparation and read-out in a wave atom basis. Note that this necessarily requires a \textit{quantum} wave atom transform, rather than a classical transform, since we are seeking an algorithm that runs in time polylogarithmic in $N$.}

{There are heuristic reasons why this approach might satisfy the conditions needed by the HHL algorithm. First, in cases where the solution to the wave equation satisfies a conservation law (e.g., conservation of energy), one may expect the system of equations $Ax=b$ to be well-conditioned. (A similar intuition was used in the quantum algorithm of \cite{costa2019quantum}.)}

{Second, in light of classical results on sparse representations of wave propagators \cite{Candes2003-vd, candes2004curveletrepresentationwavepropagators, DEMANET2007368, DemanetThesis}, one may hope that the system of equations $Ax=b$ will be sufficiently sparse in the wave atom basis, so that the computation of $A$, the preparation of the state encoding $b$, and the read-out of the solution $x$, can all be performed in time polylogarithmic in $N$. (Some aspects of this idea were investigated previously in \cite{bagherimehrab2023fastquantumalgorithmdifferential}, which showed that a quantum algorithm for solving differential equations can be sped up by preconditioning using a wavelet basis. The new elements in our work are the choice of a wave atom basis, and the role of sparsity.)}

{Our construction of an efficient quantum wave atom transform is an important step towards realizing this approach to solving wave equations on a quantum computer. However, significant obstacles remain. One obstacle is that this approach requires sparse approximations of wave propagators, where the approximation error is small in operator norm. It is an interesting question whether these kinds of bounds can be proved by adapting the known techniques for bounding the approximation error, which often use the vector $\ell_\infty$ norm.}

{If this approach is successful, it would provide a new quantum algorithm for solving wave equations, which performs well when the solution of the wave equation is sparse in the wave atom basis. In such cases, this quantum algorithm might achieve an exponential speedup over classical algorithms, and it might also outperform existing quantum algorithms that use discretization on a grid \cite{costa2019quantum} or a wavelet basis \cite{bagherimehrab2023fastquantumalgorithmdifferential} (rather than a wave atom basis).}



{Finally, we mention a few other ways in which our efficient implementation of a quantum wave atom transform might be useful for quantum algorithms. One possibility is to use a quantum wave atom transform to \textit{post-process} the quantum state that is obtained from a quantum differential equation solver, as studied in \cite{Kiani_2022} (using wavelet bases). Another (more speculative) possibility is to use a quantum wave atom transform to solve certain geometric problems in $\mathbb{R}^n$, or to prepare quantum superpositions that are potentially useful for solving high-dimensional lattice problems, as studied in \cite{liu2009quantumalgorithmsusingcurvelet, liu2023uncertaintyprinciplecurvelettransform} (using curvelet bases).}





{\subsection{Organization of this paper}}
 Section \ref{section:preliminaries} discusses \ifthenelse{\boolean{originalformat}}{preliminaries}{methods}, including notation and background information about wave packets and wavelet packet trees. \ifthenelse{\boolean{originalformat}}{In}{In particular, in} Section \ref{section: Encoding} we discuss the encoding of wave packets and frequencies, along with an important discussion of the reverse procedure, decoding. This section establishes a foundation to build wave packet transform generated by multilevel wavelet packet trees. In Section \ref{section:quantum-shannon-wavelet-transform} we demonstrate the use of encoding/decoding procedures on the example of the Shannon wavelet transform. Section \ref{section: quantum wave atom transform} provides a quantum algorithm for the wave atom transform, which is the main result of the paper.
Finally, Section \ref{section: conclusion} discusses directions for further research. 
Additional details of the algorithms can be found in the \ifthenelse{\boolean{originalformat}}{Appendix}{Supplementary Material}.

 \vskip 11pt


\ifthenelse{\boolean{originalformat}}{
 \section{Preliminaries}
}{
 \section{Methods}
}
\label{section:preliminaries}
\subsection{Notations and assumptions}
A universal gate set, such as Clifford+$T$, enables the approximation of any unitary transformation with arbitrary precision $\varepsilon$ \cite{Harrow_2002, Nielsen_Chuang_2010}. In analyzing gate complexity, this work assumes unrestricted access to all single-qubit and two-qubit gates. Consequently, the gate complexity is expressed in terms of the number of single-qubit and two-qubit gates, while disregarding the $O(1/\varepsilon)$ factor associated with the approximation. This paper implicitly assumes that a given quantum circuit and its controlled version, with a constant number of control qubits, exhibit the same asymptotic gate complexity. However, the gate complexity may increase when accounting for the quantum device topology, particularly in scenarios where qubits are not fully connected \cite{Beals_2013}.

Throughout the paper, we use some standard quantum gates given in Eq.~( \ref{eq:standard-gates}) with circuit representations described in Figure \ref{fig:circuit-notation}. $X$, $Z$ and $H$ represent $X$-Pauli, $Z$-Pauli, and Hadamard gates, respectively. The  $ \mathrm{SWAP}$ gate swaps the state of two qubits. Phase shift gates $P_k$ map basis states as $P_k\ket{0} = \ket{0}$ and $P_k\ket{1} = e^{2\pi \i / 2^k} \ket{1}$, where $\i$ is the imaginary unit and $\ket{\cdot}$ are vectors in the Dirac notation such that $\ket{0} = (1, 0)^T$ and $\ket{1} = (0, 1)^T$, $\bra{\cdot}$ is a conjugate transpose of the vector. $U^\dagger$ is a conjugate transpose of matrix $U$ and $\overline{z}$ is a conjugate of a complex number $z$. $\otimes$ denotes the tensor product of matrices, and $\oplus$ represents a direct sum of subspaces in one case and a modulo $2$ sum in another, with the meaning clear from the context. Finally, $I(\cdot)$ is an indicator function with some condition, for example, $I( x \le 0)=1$ if $x \le 0$, and $0$ otherwise, while $I_M$ is an identity matrix of size $M \times M$.

\begin{equation}\label{eq:standard-gates}
\begin{gathered}
P_k = \begin{pmatrix}
1 & 0\\
0 & e^{2\pi \i / 2^k}
\end{pmatrix}
\quad
X = \begin{pmatrix}
0 & 1 \\ 1 & 0
\end{pmatrix}\\
Z = \begin{pmatrix}
1 & 0 \\ 0 & -1
\end{pmatrix}
\quad
H = \frac{1}{\sqrt{2}} \begin{pmatrix}
1 & 1 \\
1 & -1
\end{pmatrix}
\end{gathered}\quad
\mathrm{SWAP} = \begin{pmatrix}
1 & 0 & 0 & 0 \\
0 & 0 & 1 & 0 \\
0 & 1 & 0 & 0 \\
0 & 0 & 0 & 1
\end{pmatrix}
\end{equation}

\begin{figure}
\begin{minipage}[c]{8.5cm}
\vspace{0.25cm}
\caption{Circuit representations of standard quantum gates.\\Left column (from top to bottom): phase shift $P_k$-, $X$-, $Z$- and $H$-gates.\\ Center-top: $\mathrm{SWAP}$ gate.\\ Center-bottom: controlled-$U$ gate, which is equivalent to $\ket{1}\bra{1}\otimes U + \ket{0}\bra{0}\otimes I_2$\\Right: example of controlled-$U$ gate, when applied to qubits $3$ and $4$ if qubit~$1$~is~$\ket{0}$ and qubit $2$ is $\ket{1}$. In this case controlled-$U$ is equivalent to $\ket{01}\bra{01}\otimes U + (I_4 - \ket{01}\bra{01}) \otimes I_4$.}\label{fig:circuit-notation}
\end{minipage}
\hspace{0.5cm}
    \begin{minipage}[c]{3.5cm}
      \includegraphics[width=3.5cm]{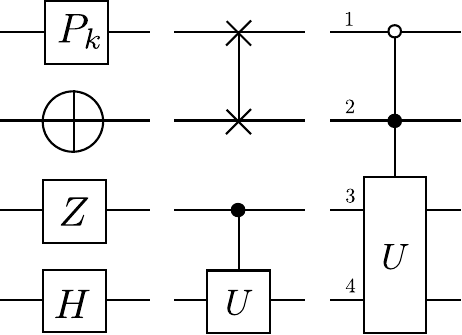}
    \end{minipage}
\end{figure}

\begin{figure}
    \centering
    \includegraphics[scale=0.5,trim= 0 0 540 0,clip]{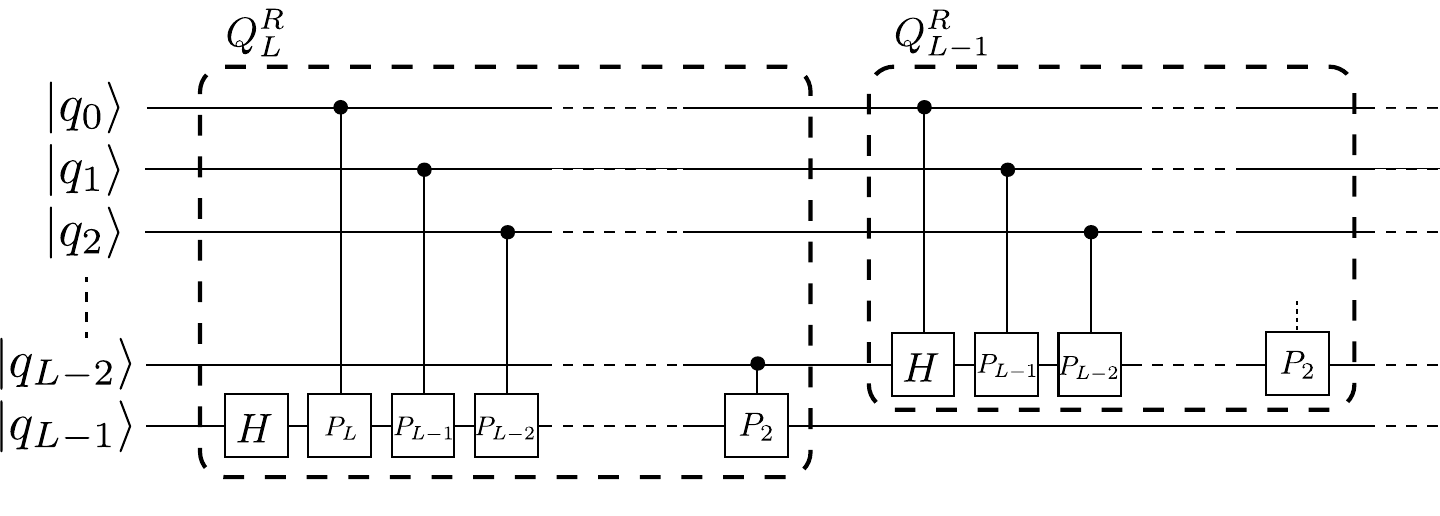}\hspace{1in}
    \\
    \hspace{1in}\includegraphics[scale=0.5,trim= 680 0 0 0,clip]{Preliminaries/Figures/qft.pdf}
    \caption{{The circuit of the QFT} on $L$ qubits, consisting of QFT rotations $Q_i^R$, followed by a QFT swap $Q_L^S$ (see Eq.~(\ref{eq:fourier})).}
    \label{fig:qft-circuit}
\end{figure}

\noindent We denote a subroutine of integer comparison on $j$ qubits to some value $k$ as $\mathrm{Comparator}_j(k)$, formally,
\begin{equation}
\mathrm{Comparator}_j(k)\ket{x}\ket{b} \mapsto \ket{x}\ket{b+I(x \ge k)\bmod 2},
\end{equation}
which can be implemented by adding two's complement of $-k$ with carry bit representing the result, which in turn requires $O(j)$ gates. We denote a subroutine of cyclic shift of $j$ qubits by some value $k$ as $\mathrm{Adder}_j(k)$, formally,
\begin{equation}\label{eq: adder}
\mathrm{Adder}_j(k) \ket{x} \mapsto \ket{x + k \text{ mod } 2^j}.
\end{equation}
There are various implementations of this operation \cite{Vedral_1996, cuccaro2004newquantumripplecarryaddition, draper2000additionquantumcomputer} with various numbers of ancilla qubits required. \cite{Vedral_1996, cuccaro2004newquantumripplecarryaddition} require $O(j)$ gates and $O(j)$ ancilla qubits. \cite{draper2000additionquantumcomputer} implements the cyclic shift with the use of the quantum Fourier transform (QFT) \cite{coppersmith2002approximatefouriertransformuseful} that requires at most $O(j^2)$ gates and no ancilla qubits. 

We extensively use QFT and its subroutines. Recall the transform of size $N=2^L$,
$$
QFT: \ket{x} \mapsto \frac{1}{\sqrt{N}} \sum_{k=0}^{N-1} e^{2\pi \i x k / N} \ket{k},
$$
where $\ket{x}$ and $\ket{k}$ are basis states of $L$ qubits. Moreover,
$$
\ket{k} = \ket{k_{L-1} \ldots k_0} = \ket{k_{L-1}}\otimes \ket{k_{L-2}} \otimes \ldots \otimes \ket{k_0}
$$
where $k=\sum_{j=0}^{L-1} 2^j k_j$ or $k_{L-1}\ldots k_0$ is a binary representation of $k$. This also implies that we follow the big-endian convention.

The implementation of the QFT of $L$ qubits, $Q_L$, has a decomposition:
\begin{equation}\label{eq:fourier}
Q_L = Q^S_L Q^R_1 \ldots Q^R_L = Q^S_L \left(\prod_{i=1}^{L} Q^R_{i}\right),
\end{equation}
where $Q^R_i$ is a QFT rotation of $i$ qubits and $Q^S_L$ reverses the order of $L$ qubits. QFT rotation $Q^R_i$ comprises of $H$-gate applied to qubit $i-1$ followed by phase gates $P_{i-j+1}$ applied to qubit $i-1$ controlled by $j$ for $j=0,\ldots, i-2$. $Q^S_L$ reverses the order by swapping each pair of qubits $i$ and $L-i-1$. Figure \ref{fig:qft-circuit} depicts a circuit of the QFT of $L$ qubits and highlights its subroutines such as QFT rotations $Q_i^R$ and QFT swap $Q_L^S$.

The classical counterpart of the QFT is the discrete Fourier transform (DFT) which has a different sign of the exponent. Specifically, the DFT of size $N$ in Dirac's notation is
$$
DFT: \ket{x} \mapsto \frac{1}{\sqrt{N}} \sum_{k=0}^{N-1} e^{-2\pi \i x k / N} \ket{k}.
$$
Therefore, matrix of the DFT of size $2^L$ is $Q_L^{\dagger}$. The DFT, in turn, is a counterpart of the continuous Fourier transform. The Fourier transform of a measurable function $f \in L_2(\mathbb{R})$ is given by the following equation
\begin{equation}\label{eq:fourier-transform-cont}
\hat{f}(\xi) = \frac{1}{\sqrt{2\pi}}\int\limits_{-\infty}^{\infty} f(x) e^{-\i 2\pi x \xi } dx.
\end{equation}
We refer to $f$ as a function in the spatial domain, and $f(x-x_0)$ is a shift by $x_0$ in the spatial domain by $x_0$. The functions with the circumflex, for example $\hat{f}$, are considered in the (ordinary) frequency domain, and $\hat{f}(\xi-\xi_0)$ is a shift by $\xi_0$ in the frequency domain.

When we consider $f:[0, 1] \mapsto \mathbb{R}$ and uniform discretization in the interval $[0, 1]$, $x_k = k/N$ for $k=0, \ldots, N-1$ where $N=2^L$ for some integer $L \ge 1$. We use brackets to denote the discretization of a function and let $f[k] = f(k/N)$. In the (ordinary) frequency domain, we emphasize that $k\in \mathbb{Z}$ by the square brackets and $\hat{f}[k] = \hat{f}(k)$. Then in the discrete case,
$$
(\hat{f}[0], \ldots, \hat{f}[N-1])^T = Q^{\dagger}_L (f[0], \ldots, f[N-1])^T.
$$
Notice that we use a different range $\{-N/2, \ldots, N/2-1\}$ in the discretization of $\hat{f}$, however, $\hat{f}[-N/2+k] = \hat{f}[N/2 + k]$. 
\subsection{Wave packets}\label{section:wave-packets-background}
In the paper, we design algorithms that implement wave packet transforms for wave packets with compact frequency support. Moreover, we define wave packets exclusively in the frequency domain, although the wave packets in the spatial domain can be obtained via inverse Fourier transform. We index wave packets $\varphi^j_{m,n}$ by three indices: their scale $j \in \mathbb{Z}_+ \setminus \{0\}$, frequency $m \in \mathbb{Z_+}$, and position in space $n \in \mathbb{Z}$. Here, $\mathbb{Z}_+$ denotes nonnegative integer numbers. 

\begin{figure}
    \centering
    \includegraphics[scale=1.1]{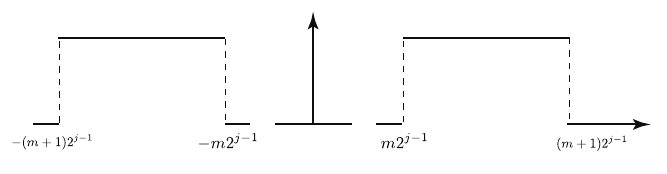}
    \caption{Shannon wavelet $\hat{\varphi}^j_{m}$: one bump per each side of the axis in the frequency domain.}
    \label{fig:Shannon-wave-packet}
\end{figure}

First, we consider wave packets with sharp frequency windows such that generating functions are based on perfect filters:
\begin{equation}
\label{eq:Shannon-definition}
\begin{gathered}
\hat{\varphi}^0_{m}(\xi) = I\left\{2\xi \in [-m-1, -m) \cup [m, m+1)\right\},\\
\hat{\varphi}^j_{m}(\xi) = 2^{-j/2} \hat{\varphi}^0_{m}(2^{-j}\xi),\\
\hat{\varphi}^j_{m, n}(\xi) = e^{-\i 2\pi n 2^{-j} \cdot \xi} \hat{\varphi}^j_m(\xi).
\end{gathered}
\end{equation}
(See Figure \ref{fig:Shannon-wave-packet}.) The support of $\hat{\varphi}^j_{m,n}$ is centered in frequency around $\pm \xi_{j,m} = \pm 2^{j-1} m$ and in space around $x_{j, n} = 2^{-j}n$. The resulting wave packets are \textit{Shannon wavelets}. The classical indexation arising from multipass filters differs from the above and requires a simple permutation (see Chapter 8 of \cite{WaveletTourofSignalProcessing} for details). 

The \textit{1D wave atoms} introduced in \cite{DEMANET2007368} are based on non-standard wavelets of \cite{Villemoes2002WaveletPW}, and are similarly centered in frequency around $\pm \xi_{j,m} = \pm 2^{j-1} m$ and centered in space around $x_{j,n} = 2^{-j}n$:
\begin{equation}\label{def:wave-atoms}
\begin{gathered}
\hat{\psi}_m^j(\xi) = 2^{-j/2} \hat{\psi}_m^0(2^{-j} \xi),\\
\hat{\psi}_{m,n}^j(\xi) = e^{-\i 2\pi n2^{-j} \cdot \xi} \hat{\psi}_m^j(\xi).
\end{gathered}
\end{equation}
Its mother wave packets demonstrate an elegant way to deal with the problem of frequency localization \cite{ Villemoes2002WaveletPW, DEMANET2007368} and involve symmetric pairs of compactly supported bumps in the frequency domain:
\begin{equation}\label{def:psi-hat}
\hat{\psi}_m^0(\xi) = e^{-\i \pi \xi}\big[e^{\i \alpha_m} g((-1)^m (2\pi \xi - 2\alpha_m)) + e^{-\i\alpha_m} g((-1)^{m+1} (2\pi \xi + 2\alpha_m))\big],
\end{equation}
where $\alpha_m = \tfrac{\pi}2(m+\tfrac12)$ and $g$ is a continuous real-valued function compactly supported on an open interval $(-\tfrac76\pi, \tfrac56\pi)$. Furthermore, $g$ must satisfy the following requirements: for any $w \in [-\tfrac{\pi}3, \tfrac{\pi}3]$,
\begin{align}  
& g^2(\tfrac{\pi}2 - w) + g^2(\tfrac{\pi}2 + w) = 1, \label{prop:sum_of_squares}\\
& g(-2w-\tfrac{\pi}2) = g(\tfrac{\pi}2+w) \label{prop:change_of_sign}.
\end{align}

\begin{figure}
    \centering
    \includegraphics[scale=0.75]{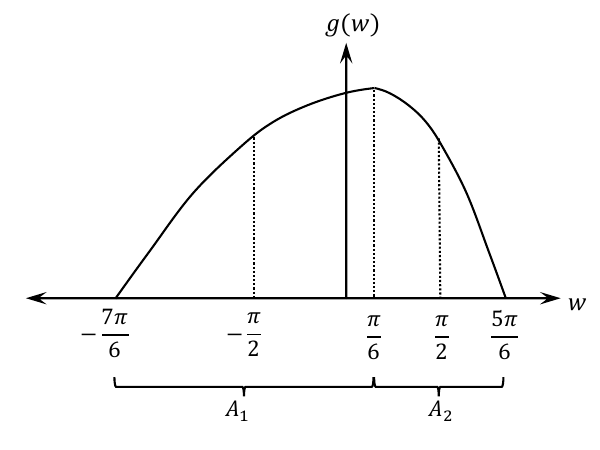}
    \caption{The function $g$, satisfying equations (\ref{prop:sum_of_squares}) and (\ref{prop:change_of_sign})}
    \label{fig:wave-atom-g}
\end{figure}

The above construction can be understood qualitatively in the following way. First, one can write the support of $g$ as the union of two intervals:
\begin{equation}\label{eqn-A1-A2}
A_1 = [-\tfrac{7}{6}\pi, \tfrac{1}{6}\pi],\qquad
A_2 = [\tfrac{1}{6}\pi, \tfrac{5}{6}\pi].
\end{equation}
(See Figure \ref{fig:wave-atom-g}.) Typically $g$ is chosen to be monotonically increasing on $A_1$, and monotonically decreasing on $A_2$. In Eq.~(\ref{prop:sum_of_squares}), $g(\tfrac{\pi}{2} - w)$ and $g(\tfrac{\pi}{2} + w)$ can be understood as mirror images of each other, reflected around the midpoint of the interval $A_2$. In the wave atom transform, these mirror images are associated with neighboring wavepackets, and Eq.~(\ref{prop:sum_of_squares}) says that their squares sum to 1. Eq.~(\ref{prop:change_of_sign}) says that the increasing part of $g$ (on the interval $A_1$, centered at $-\pi/2$) is a mirror image of the decreasing part of $g$ (on the interval $A_2$, centered at $\pi/2$), stretched by a factor of 2. This asymmetric property \eqref{prop:change_of_sign} of profile $g$ is important to preserve the orthogonality of the $\hat{\psi}^j_{m,n}$. In contrast to Shannon wavelets, wave atoms have overlapping support in the frequency domain.

\begin{figure}
    \centering
    \includegraphics[width=\linewidth]{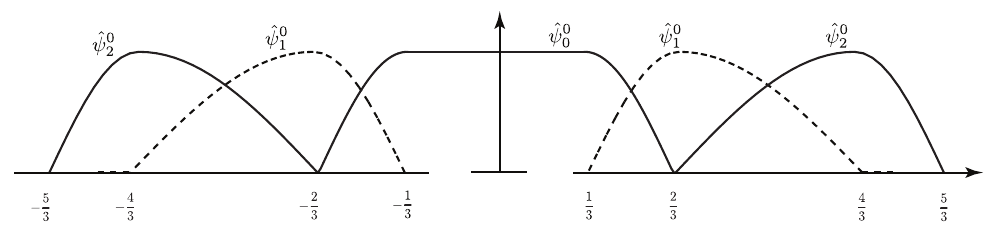}
    \caption{Wave atoms. $\hat{\psi}^0_0$ is unique, $\hat{\psi}^0_1$ is an example of odd $m$'s, $\hat{\psi}^0_2$ represents even $m$'s. }
    \label{fig:Wave-atoms-image}
\end{figure}

One can describe the wavepackets $\hat{\psi}_m^0(\xi)$ qualitatively as follows. (See Figure \ref{fig:Wave-atoms-image}.) In Eq.~(\ref{def:psi-hat}), $\hat{\psi}_m^0(\xi)$ consists of a complex phase factor of $e^{-\i \pi \xi}$, times a linear combination of two bumps, which are supported near $\pm \alpha_m/\pi = \pm \tfrac{1}{2} (m+\frac{1}{2})$. These two bumps are mirror images of each other, with a relative phase difference of $e^{2\i\alpha_m} = \i(-1)^m$. Furthermore, when comparing wavepackets $\hat{\psi}_m^0(\xi)$ and $\hat{\psi}_{m+1}^0(\xi)$ that are adjacent to each other in the frequency domain, one sees that the bumps corresponding to $m$ overlap with, and are mirror images of, the bumps corresponding to $m+1$. In other words, the shapes of the wavepackets $\hat{\psi}_m^0(\xi)$ change according to the parity of $m$.

Because of the above properties, these wavepackets form an orthonormal basis so that for the square-integrable function $f \in L^2(\mathbb{R})$, 
\begin{equation}
f = \sum_{j,m} \sum_n \langle f, \psi^j_{m, n}\rangle \psi^j_{m, n} = \sum_{j,m} \sum_n c_{j, m, n} \psi^j_{m, n}.
\end{equation}
For all $(j, m)$, the coefficients $c_{j, m, n}$ can be interpreted as a convolution at scale $2^{-j}$ and are defined by Plancherel's theorem as follows:
\begin{equation}
\begin{gathered}    
c_{j,m,n} = \langle f, \psi^j_{m, n}\rangle = \int \psi_{m, n}^j(x) f(x) dx \\ = \int \psi_{m}^j(x - 2^{-j}n) f(x) dx = \frac{1}{2\pi} \int e^{\i 2\pi 2^{-j} n \xi} \overline{\hat{\psi}^j_m(\xi)} \hat{f}(\xi) d\xi.
\end{gathered}
\end{equation}

{\subsubsection{Approximation and Discretization}\label{sec-approx-discret}}

{Since we are interested in numerical computations (both classical and quantum), the question naturally arises: is there a way to implement these wavelet transforms approximately, while preserving the orthogonality of the wavepackets? The answer is yes. Intuitively, the reason is that the orthogonality only depends on the overlaps between \textit{pairs} of neighboring wavepackets $\hat{\psi}^j_m$ and $\hat{\psi}^{j'}_{m'}$, i.e., it does not depend on relationships involving more than two wavepackets at a time. Even if both of the wavepackets $\hat{\psi}^j_m$ and $\hat{\psi}^{j'}_{m'}$ are computed approximately on the region where they overlap, one ensure that the approximation errors ``cancel out,'' so that the wavepackets are still orthogonal.}

{The above argument can be made precise, by noticing that the orthogonality of the wavepackets follows from Eqs.~(\ref{prop:sum_of_squares}) and (\ref{prop:change_of_sign}), involving the function $g$. We can ensure that these equations hold exactly, even if $g$ is only computed approximately, in the following way. Recall that the support of $g$ consists of the set $A_1 \cup A_2$ defined in Eq.~(\ref{eqn-A1-A2}). We can split this set into four subsets: the left and right halves of the interval $A_1$, and the left and right halves of the interval $A_2$. Then we can compute the function $g$ approximately on one of these subsets, and then use Eqs.~(\ref{prop:sum_of_squares}) and (\ref{prop:change_of_sign}) to determine the values of $g$ on the rest of the set $A_1 \cup A_2$. In this way, even though $g$ is only computed approximately, Eqs.~(\ref{prop:sum_of_squares}) and (\ref{prop:change_of_sign}) are still satisfied with high precision, hence the wavepackets are still orthogonal.}

{Finally, we introduce the discrete version of the wavelet transform. For simplicity, we make some fairly strong assumptions about the function $f$, in order to motivate the definition of the discrete transform. We consider $f: [0, 1) \rightarrow \mathbb{R}$, and we extend its domain to all of $\mathbb{R}$ while imposing periodic boundary conditions, i.e., $f(x+1) = f(x)$. Then $f$ can be represented by a Fourier series. We assume that $f$ is band-limited, so it can be recovered from the first $N$ terms in its Fourier series, where $N=2^L$ for some integer $L \ge 1$. By the Shannon sampling theorem, $f$ can also be recovered from its uniform discretization on the interval, $f[k] = f(k/N)$ for $k=0, \ldots, N-1$. In this setting, it is natural to define a} discrete version of the wavelet transform:
\begin{equation}\label{ksum1}
c^D_{j,m,n} = \sum_{k=-N/2, \ldots, N/2-1} \overline{\hat{\psi}_{m,n}^j(k)} \hat{f}[k].
\end{equation}
Therefore, the operation of our interest is
\begin{equation}\label{eq:operator-of-interest}
(f[0], \ldots, f[N-1]) \mapsto \{c^D_{j, m, n}\}_{(j, m, n) \in \Gamma},
\end{equation}
where $\{\hat{\psi}^j_{m,n}\}_{(j,m,n)\in \Gamma}$ organize an orthogonal basis for some set of index triplets $\Gamma$. We discuss the selection of orthogonal bases in the next section.

\subsection{Wavelet packet trees}\label{section: Wave packet trees}
\begin{figure}
    \centering
    \resizebox{0.32\textwidth}{!}{\begin{tikzpicture}[scale=7,font=\footnotesize]
\tikzstyle{level 1}=[level distance=1.35mm,sibling distance=3mm]
\tikzstyle{level 2}=[level distance=1.35mm,sibling distance=1.5mm]

\node(00){$\mathrm{W}^3_0$}{
child{node(10){$\mathrm{W}^{2}_0$}
    child{node(20){$\mathrm{W}^{1}_0$}}
    child{node(21){$\mathrm{W}^{1}_1$}}
}
child{node(11){$\mathrm{W}^{2}_1$}
    child{node(22){$\mathrm{W}^{1}_2$}}
    child{node(23){$\mathrm{W}^{1}_3$}}
}
};
\end{tikzpicture}}
    \resizebox{0.27\textwidth}{!}{\begin{tikzpicture}[scale=7,font=\footnotesize]
\tikzstyle{level 1}=[level distance=1mm,sibling distance=2mm]
\tikzstyle{level 2}=[level distance=1mm,sibling distance=2mm]
\tikzstyle{level 3}=[level distance=1mm,sibling distance=2mm]

\node(30){$\mathrm{W}^4_0$}{
child{node(20){$\mathrm{W}^3_0$}{
child{node(10){$\mathrm{W}^{2}_0$}
    child{node(00){$\mathrm{W}^{1}_0$}}
    child{node(01){$\mathrm{W}^{1}_1$}}
}
child{node(11){$\mathrm{W}^{2}_1$}
}}
}
child{node(21){$\mathrm{W}^3_1$}}
};
\end{tikzpicture}}
    \caption{Examples of trees that generate wavelet packet bases. Left: a perfect binary tree, which generates a uniform wavelet packet basis. Right: a dyadic tree, which generates a dyadic wavelet packet basis.}
    \label{fig:Simple-tree-example}
\end{figure}

\begin{figure}
    \centering
\begin{tikzpicture}
    \usetikzlibrary{calc};
    \coordinate (root) at (6,6); 
    \coordinate (l) at ($ (root) + (-3,-0.5) $);
    \coordinate (r) at ($ (root) + (3,-0.5) $);
    \filldraw[black] (root) circle (2pt);
    \draw (l) -- (root) -- (r);
    \coordinate (ll) at ($ (l) + (-1.5,-0.5) $); 
    \coordinate (lr) at ($ (l) + (1.5,-0.5) $);
    \filldraw[black] (l) circle (2pt);
    \draw (ll) -- (l) -- (lr);
    \coordinate (rl) at ($ (r) + (-1.5,-0.5) $);
    \coordinate (rr) at ($ (r) + (1.5,-0.5) $);
    \filldraw[black] (r) circle (2pt);
    \draw (rl) -- (r) -- (rr);
    \coordinate (lll) at ($ (ll) + (-0.75,-0.5) $); 
    \coordinate (llr) at ($ (ll) + (0.75,-0.5) $);
    \filldraw[black] (ll) circle (2pt);
    \draw (lll) -- (ll) -- (llr);
    \coordinate (lrl) at ($ (lr) + (-0.75,-0.5) $);
    \coordinate (lrr) at ($ (lr) + (0.75,-0.5) $);
    \filldraw[black] (lr) circle (2pt);
    \draw (lrl) -- (lr) -- (lrr);
    \coordinate (rll) at ($ (rl) + (-0.75,-0.5) $);
    \coordinate (rlr) at ($ (rl) + (0.75,-0.5) $);
    \filldraw[black] (rl) circle (2pt);
    \draw (rll) -- (rl) -- (rlr);
    \coordinate (rrl) at ($ (rr) + (-0.75,-0.5) $);
    \coordinate (rrr) at ($ (rr) + (0.75,-0.5) $);
    \filldraw[black] (rr) circle (2pt);
    \draw (rrl) -- (rr) -- (rrr);
    \coordinate (llll) at ($ (lll) + (-0.4,-0.5) $); 
    \coordinate (lllr) at ($ (lll) + (0.4,-0.5) $);
    \filldraw[black] (lll) circle (2pt);
    \draw (llll) -- (lll) -- (lllr);
    \coordinate (llrl) at ($ (llr) + (-0.4,-0.5) $);
    \coordinate (llrr) at ($ (llr) + (0.4,-0.5) $);
    \filldraw[black] (llr) circle (2pt);
    \draw (llrl) -- (llr) -- (llrr);
    \coordinate (lrll) at ($ (lrl) + (-0.4,-0.5) $);
    \coordinate (lrlr) at ($ (lrl) + (0.4,-0.5) $);
    \filldraw[black] (lrl) circle (2pt);
    \draw (lrll) -- (lrl) -- (lrlr);
    \coordinate (lrrl) at ($ (lrr) + (-0.4,-0.5) $);
    \coordinate (lrrr) at ($ (lrr) + (0.4,-0.5) $);
    \filldraw[black] (lrr) circle (2pt);
    \draw (lrrl) -- (lrr) -- (lrrr);
    \coordinate (rlll) at ($ (rll) + (-0.4,-0.5) $);
    \coordinate (rllr) at ($ (rll) + (0.4,-0.5) $);
    \filldraw[black] (rll) circle (2pt);
    \draw (rlll) -- (rll) -- (rllr);
    \filldraw[black] (rlr) circle (2pt);
    \filldraw[black] (rrl) circle (2pt);
    \filldraw[black] (rrr) circle (2pt);
    \coordinate (lllll) at ($ (llll) + (-0.2,-0.5) $); 
    \coordinate (llllr) at ($ (llll) + (0.2,-0.5) $);
    \filldraw[black] (llll) circle (2pt);
    \draw (lllll) -- (llll) -- (llllr);
    \coordinate (lllrl) at ($ (lllr) + (-0.2,-0.5) $);
    \coordinate (lllrr) at ($ (lllr) + (0.2,-0.5) $);
    \filldraw[black] (lllr) circle (2pt);
    \draw (lllrl) -- (lllr) -- (lllrr);
    \coordinate (llrll) at ($ (llrl) + (-0.2,-0.5) $);
    \coordinate (llrlr) at ($ (llrl) + (0.2,-0.5) $);
    \filldraw[black] (llrl) circle (2pt);
    \draw (llrll) -- (llrl) -- (llrlr);
    \filldraw[black] (llrr) circle (2pt);
    \filldraw[black] (lrll) circle (2pt);
    \filldraw[black] (lrlr) circle (2pt);
    \filldraw[black] (lrrl) circle (2pt);
    \filldraw[black] (lrrr) circle (2pt);
    \filldraw[black] (rlll) circle (2pt);
    \filldraw[black] (rllr) circle (2pt);
    \coordinate (llllll) at ($ (lllll) + (-0.1,-0.5) $); 
    \coordinate (lllllr) at ($ (lllll) + (0.1,-0.5) $);
    \filldraw[black] (lllll) circle (2pt);
    \draw (llllll) -- (lllll) -- (lllllr);
    \filldraw[black] (llllr) circle (2pt);
    \filldraw[black] (lllrl) circle (2pt);
    \filldraw[black] (lllrr) circle (2pt);
    \filldraw[black] (llrll) circle (2pt);
    \filldraw[black] (llrlr) circle (2pt);
    \filldraw[black] (llllll) circle (2pt); 
    \filldraw[black] (lllllr) circle (2pt);
\end{tikzpicture}
    \caption{{Example of a tree that generates a wave atom basis, due to \cite{DEMANET2007368}. The branching pattern of the tree is chosen so that the wave atom basis functions obey the parabolic scaling law in Eq.~(\ref{eqn-parabolic}). To simplify the picture, the labels on the nodes of the tree are not shown.}}
    \label{fig-wave-atom-tree-example}
\end{figure}

Coifman, Meyer, and Wickerhauser introduced wavelet packets by linking multiresolution approximations (a sequence of embedded vector spaces $(\mathrm{V}_j)_{j \in \mathbb{Z}}$ for approximating $L_2(\mathbb{R})$ functions) with wavelets \cite{Coifman}. They decompose a multiresolution approximation space $\mathrm{V}_j$ into a lower-resolution space $\mathrm{V}_{j+1}$ and a detail space $\mathrm{W}_{j+1}$ by splitting the orthogonal basis into two new orthogonal bases. 
Intuitively, when wavelets are used for image processing, we may lose some details when reducing the resolution.
\cite{Coifman} also show that, instead of just dividing the approximation spaces to create the so-called detail spaces and wavelet bases, we can split the detail spaces to generate new bases. This process of dividing vector spaces can be represented in a structure of a binary tree (see Chapter 8 of \cite{WaveletTourofSignalProcessing}). Any node in this binary tree can be labeled by $(j, m)$, where  $j$ is the level of the tree, and $m$ is the number of nodes on its left at the same level. To each node  we associate a space $\mathrm{W}^j_m$, which admits an orthogonal basis $\{\varphi^j_{m,n}\}_{n \in \mathbb{Z}}$.

We adopt the notion of wavelet packet trees and associate $\mathrm{W}^j_m$ with an orthogonal basis $\{ \hat{\psi}_{m, n}^j \}_{n \in \mathbb{Z}}$ which in the discrete case becomes $\{ \hat{\psi}_{m, n}^j \}_{0 \le n < 2^j}$ for wave atoms or $\{ \hat{\varphi}_{m, n}^j \}_{0\le n < 2^j}$ for the Shannon wavelet. Then, all $\mathrm{W}^j_m$ can be organized into a binary wavelet packet tree. We will abuse the notation and refer to $\mathrm{W}^j_m$ as both the space spanned by the basis $\{\hat{\psi}^j_{m, n}\}_{0 \le n < 2^j}$ (or $\{\hat{\varphi}^j_{m, n}\}_{0 \le n < 2^j}$) and the node in the tree.
\begin{definition}[Admissible wavelet packet binary tree, adapted \cite{WaveletTourofSignalProcessing}] \label{def:wave-packet-admissible-tree}
A binary tree of height $L-1$ with nodes $\{\mathrm{W}^j_m\}$ is \textit{admissible} if
\begin{enumerate}
\item The root of the tree is $\mathrm{W}^L_0$,
\item Every non-leaf node $\mathrm{W}^j_m$ has two children $\mathrm{W}^{j-1}_{2m}$ and $\mathrm{W}^{j-1}_{2m+1}$, and $j \ge 2$.
\end{enumerate}
\end{definition}

Using Definition \ref{def:wave-packet-admissible-tree}, one can construct wavelet bases $\{\varphi^j_{m,n} \;|\; (j,m,n)\in\Gamma_T \}$, where $\Gamma_T$ is the set of index triplets generated by the wave packet admissible tree $T$: 
 \begin{equation}
 \Gamma_T = \{ (j, m, n) \,|\, \mathrm{W}^j_m \in \Lambda_T \text{ and } 0 \le n < 2^j\},
 \end{equation}
\begin{equation}
\Lambda_T = \{\mathrm{W}^j_m\,|\, \mathrm{W}^j_m \text{ is a leaf-node in } T\}.
\end{equation}
Two of the more popular bases generated by wavelet packet trees  \cite{WaveletTourofSignalProcessing} are the following:
\begin{itemize}
    \item \textit{Uniform  $\{\varphi^j_{m,n}\}$,} where $j$'s are equal to some fixed value (call it $j_0$), and $\Gamma = \big\{(j, m, n) \, |\, 0 \le m < 2^{L-j},\, 0 \le n < 2^j,\, j = j_0 \big\}$. This type of basis is generated by perfect binary trees.
    \item \textit{Dyadic  $\{\varphi^j_{m,n}\}$}. In this case, most of the basis function has $m=1$, and $\Gamma = \{ (1, 0, 0), (1, 0, 1) \} \cup \{ (j, 1, n) \,|\, 1 \le j < L,\, 0 \le n < 2^j\}$. This type of basis is generated by dyadic trees (such that every right node is a leaf while every left node is nonleaf unless the node is on the minimal level).
   
\end{itemize}

The examples of wavelet packet trees generating orthogonal bases discussed above can be found in Figure \ref{fig:Simple-tree-example}. One can verify that both approaches render an orthogonal basis for Shannon wavelets and wave atoms. 

In addition, we will consider another class of trees that generate orthogonal bases for the wave atom transform, and have a special property, ``parabolic scaling,'' defined in Eq.~(\ref{eqn-parabolic}), which is desirable for many applications of the wave atom transform \cite{DEMANET2007368}. {We briefly describe how Eq.~(\ref{eqn-parabolic}) constrains the structure of the wavepacket tree. Recall that Eq.~(\ref{eqn-parabolic}) says that, for each wavelet basis function, wavelength $\propto$ (diameter)$^2$. From the discussion in Section \ref{section:wave-packets-background}, the basis function $\psi^j_m$ has wavelength $\propto (1/m) 2^{-j}$ and diameter $\propto 2^{-j}$. Hence Eq.~(\ref{eqn-parabolic}) implies that:
\begin{equation}\label{eqn-parabolic-tree}
m \propto 2^j.
\end{equation}
This condition must be satisfied at every leaf of the wavepacket tree. Hence, the tree resembles a complete binary tree whose lower-right branches have been cut away at the points specified by Eq.~(\ref{eqn-parabolic-tree}). An example of such a tree is shown in Fig.~\ref{fig-wave-atom-tree-example}.}

{Finally, we define a class of ``monotonic'' trees that is easier to work with, and contains the ``parabolic scaling'' trees described above:
\begin{definition}\label{def-monotonic}
An admissible wave packet tree $T$ is called monotonic if, when the tree's leaf nodes $W^j_m$ are visited from left to right, the sequence of $j$ values is nondecreasing.
\end{definition}
\begin{proposition}
If $T$ is an admissible wave packet tree that satisfies the parabolic scaling relation (\ref{eqn-parabolic-tree}), then $T$ is monotonic.
\end{proposition}
\begin{prf}
Suppose $T$ is not monotonic. Then there are leaf nodes $W^j_m$ and $W^{j'}_{m'}$ such that $W^j_m$ appears to the left of $W^{j'}_{m'}$ and $j>j'$. Since both of these leaf nodes satisfy Eq.~(\ref{eqn-parabolic-tree}), we must have $m>m'$. Since $T$ is admissible, we have $m\leq m'$, leading to a contradiction.
\end{prf}
We will develop algorithms for monotonic trees (which will apply to parabolic scaling trees, as a special case) in Sections \ref{section: wave packet encoding} and \ref{section:wave atom admissible tree}.}



We would also like to state a few properties of admissible wave packet trees that will be helpful in the later discussion.
\begin{proposition}[enumeration properties of admissible wave packet
tree]\label{prop:admissible-properties}
For wave packet admissible tree $T$ (by definition \ref{def:wave-packet-admissible-tree}) with a root $\mathrm{W}^L_0$ and $K$ leaf nodes $\{\mathrm{W}^{j_i}_{m_i} \;|\; i=1,\ldots,K\}$ traversed from left to right, 
\begin{enumerate}
\item $m_1 = 0$,
\item For $i=1,\ldots, K-1$, $(m_i+1)2^{j_i} = m_{i+1}2^{j_{i+1}}$,
\item $(m_K+1)2^{j_K} = 2^L$.
\end{enumerate}
\end{proposition}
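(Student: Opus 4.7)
\emph{Proof plan.} The three properties are really a single geometric statement: if we associate to every node $\mathrm{W}^j_m$ the half-open interval
$$
I(j,m) = [\, m\cdot 2^j,\; (m+1)\cdot 2^j\,)\subset\mathbb{Z},
$$
then the leaves of any admissible tree partition the root's interval, in the left-to-right order of the tree traversal. Given this, property 1 is "the first interval starts at $0$," property 2 is "consecutive intervals are adjacent," and property 3 is "the last interval ends at $2^L$." The plan is therefore to prove the partition claim by induction on subtrees, and then read off the three properties.

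\emph{Setting up the induction.} I would prove the slightly stronger statement: for any admissible subtree rooted at a node $\mathrm{W}^J_M$ (with the same branching rule of Definition \ref{def:wave-packet-admissible-tree}, i.e.\ each non-leaf has $J\ge 2$ and children $\mathrm{W}^{J-1}_{2M}$, $\mathrm{W}^{J-1}_{2M+1}$), the left-to-right list of leaves $\mathrm{W}^{j_1}_{m_1},\ldots,\mathrm{W}^{j_K}_{m_K}$ gives an ordered partition $I(j_1,m_1)\sqcup\cdots\sqcup I(j_K,m_K)=I(J,M)$. The base case is when the root itself is a leaf: then $K=1$, $(j_1,m_1)=(J,M)$ and the partition is trivial. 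For the inductive step, note that if the root is not a leaf, its children are roots of strictly smaller admissible subtrees, so by induction the leaves descended from $\mathrm{W}^{J-1}_{2M}$ partition $I(J-1,2M)$ in left-to-right order, and those descended from $\mathrm{W}^{J-1}_{2M+1}$ partition $I(J-1,2M+1)$. A one-line calculation shows
$$
I(J-1,2M)\sqcup I(J-1,2M+1) = I(J,M),
$$
with the two pieces meeting at $(2M+1)\cdot 2^{J-1}$, so concatenating the two leaf lists (left subtree first) yields an ordered partition of $I(J,M)$, completing the induction.

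\emph{Extracting the three properties.} Apply the lemma to the full tree, with root $\mathrm{W}^L_0$, so the leaves partition $I(L,0)=[0,2^L)$. The leftmost interval $I(j_1,m_1)$ must begin at $0$, giving $m_1\cdot 2^{j_1}=0$ and hence $m_1=0$, which is property 1. For $i=1,\ldots,K-1$, adjacency of intervals $I(j_i,m_i)$ and $I(j_{i+1},m_{i+1})$ says the right endpoint of one equals the left endpoint of the next, i.e.\ $(m_i+1)2^{j_i}=m_{i+1}2^{j_{i+1}}$, which is property 2. Finally, the rightmost interval ends at $2^L$, giving $(m_K+1)2^{j_K}=2^L$, which is property 3.

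\emph{Main obstacle.} There is no real technical difficulty; the only care required is in the formulation. One must induct on arbitrary admissible subtrees (not just on $L$ with a fixed root $\mathrm{W}^L_0$) so that the two children of an internal node, which are rooted at $\mathrm{W}^{L-1}_0$ and $\mathrm{W}^{L-1}_1$ rather than at a ``standard'' root, can serve as valid instances of the hypothesis. Once the interval labeling $I(j,m)$ is introduced and the identity $I(J-1,2M)\sqcup I(J-1,2M+1)=I(J,M)$ is observed, the rest of the argument is bookkeeping.
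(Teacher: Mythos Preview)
Your proof is correct and captures the same combinatorial fact as the paper's proof: the leaves of an admissible tree partition the integer range $[0,2^L)$ into consecutive blocks. The paper phrases this by completing $T$ to a perfect binary tree of height $L$ and reading off the level-$0$ indices beneath each leaf, whereas you attach the interval $I(j,m)=[m2^j,(m+1)2^j)$ directly and run a clean induction on subtrees; these are two presentations of the same argument, with your induction being slightly more formal and the paper's tree-completion slightly more pictorial.
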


\begin{prf}
We traverse the tree as follows:
\begin{enumerate}
\item We start from the root node $W^L_0$ and descend the tree while always moving down to a left child, so at each level $W^{j}_0 \mapsto W^{j-1}_{0}$. Suppose that we stopped at node $W^{j'}_0$, then $W^{j'}_0$ is a leaf node; otherwise, it should have left and right children. Therefore, the first node in the traverse is $W^{j'}_0$, i.e. $m_1=0$.
\item Imagine that we will add nodes to the tree $T$ until it becomes a perfect binary tree of height of $L$, that is, for each leaf node $W^j_m$ we append a perfect binary subtree of height $j$ at that node. Then the leftmost leaf node in the subtree with root $W^{j_i}_{m_i}$ is $W^{0}_{m_i2^{j_i}}$, and the rightmost leaf node is $\mathrm{W}^{0}_{m'}$ where $m' = (m_i+1)2^{j_i}-1$. The leaf node next to $\mathrm{W}^0_{m'}$ is the leftmost leaf node in the subtree with root $\mathrm{W}^{j_{i+1}}_{m_{i+1}}$, which is $\mathrm{W}^{0}_{m_{i+1}2^{j_{i+1}}}$. Therefore, $(m_i+1)2^{j_i} = m_{i+1}2^{j_{i+1}}$.
\item Observe that the rightmost node in the expanded perfect binary tree (constructed in 2) is $\mathrm{W}^0_{2^L-1}$ and is also the rightmost node of the subtree with root $\mathrm{W}^{j_K}_{m_K}$. Therefore, $(m_K+1)2^{j_K} = 2^L$.
\end{enumerate}
\end{prf}




\ifthenelse{\boolean{originalformat}}{
 \section{Encoding}
}{
 \subsection{Encoding}
}
\label{section: Encoding}
The transform of interest in equation \eqref{eq:operator-of-interest} does not define an order of the coefficients $c_{j, m, n}^D$. In other words, $\hat{\varphi}^j_{m, n}$ and $\hat{\psi}^j_{m, n}$ need to be indexed. Moreover, in equation \eqref{ksum1} the sum goes over negative and positive integer frequencies. We would like to encode both the wave packet indices $(j, m, n)$ and the (integer) frequencies $k$ such that each is represented by $L$ bits.

\subsection{Frequency encoding}\label{section:frequency-encoding}
This section introduces frequency encoding and decoding procedures, where encoded frequencies are represented by $L$ bits. It also covers the concept of retaining decoding, allowing a subset of bits to be decoded without affecting others, and involves constructing corresponding quantum circuits.

 Given that both Shannon wavelets and wave atoms are symmetric around zero, we would like to map the frequencies of the same absolute value with different sign in such a way that they would have consecutive indices when encoded. Moreover, it is natural to expect that the nonnegative frequency $k$ would have a smaller encoded index than the frequency $k+1$.

In order to accomplish this, we introduce a \textit{frequency encoding procedure} as a bijection $e:\: \{-N/2, \ldots, N/2-1\} \rightarrow \{0, \ldots, N-1\}$  given by 
\begin{equation} \label{eq:encoding}
    e(k) = 
    \begin{cases}
        2k, \ k \geq 0, \\
        2|k| - 1, \ k <0.
    \end{cases}
\end{equation}
We call the reverse procedure of \eqref{eq:encoding} \textit{frequency decoding procedure} and define it by the inverse mapping $d:\: \{0, \ldots, N-1\} \rightarrow \{-N/2, \ldots, N/2-1\}$ given by
\begin{equation}\label{eq:decoding}
    d(i) = \begin{cases}
        \tfrac12 i, \ i \text{ even }, \\
       - \frac12 (i + 1), \ i \text{ odd }.
    \end{cases}
\end{equation}

We also introduce a \textit{retaining decoding procedure} $\tilde{d}(\cdot, j, m):~\{0,\ldots, 2^j-1\} \rightarrow \{0, \ldots, 2^j-1\}$ such that for $c \in \{0, \ldots, 2^j-1\}$
\begin{equation}\label{eq:retainingdecoding}
\tilde{d}(c, j, m) = d(m2^j+c) \bmod 2^j.
\end{equation} 
Note that this is function $\tilde{d}$ is \textit{not} the inverse of the encoding function introduced above. Rather, this function will be used in our implementations of quantum wavelet transforms, later in this article. The word ``retaining'' refers to the idea of staying in range of size $2^j$ (which will be fully explored in section \ref{section:quantum-shannon-wavelet-transform}) and the argument $c$ corresponds to the indices within the range of desired size. We can compute $\tilde{d}$ as follows
\begin{equation}
\label{local-decoding}
    \tilde{d}(c, j, m) = \begin{cases}
    2^{j-1} I (m \text{ is odd}) + \tfrac12 c, \text{ if } c \text{ is even},\\
    2^{j} - 2^{j-1} I (m \text{ is odd}) - \tfrac12 (c+1), \text{ if } c \text{ is odd}.
    \end{cases}
\end{equation}

\subsubsection{Frequency encoding circuit} \label{subsection:frequency encoding }
Both Shannon wavelet and wave atom transforms include DFT (inverse QFT) as the initial step to transform the spatial domain into the frequency domain. However, the result of DFT is defined on a grid $\{0, \ldots N-1\}$, but our transforms are defined on a grid $\{-N/2, \ldots, N/2-1\}$. Due to the periodicity of Fourier transform $\hat{f}[k] = \hat{f}[k-N]$, we have the initial stage as follows
$$
Q_L^{\dagger} \left(\smashoperator{\sum_{k=0}^{N-1}} f[k] \ket{k} \right) = \smashoperator{\sum_{k=0}^{N-1}} \hat{f}[k]\ket{k} = \smashoperator{\sum_{k=0}^{N/2-1}} \hat{f}[k]\ket{k} + \smashoperator{\sum_{k=-N/2}^{-1}} \hat{f}[k] \ket{k+N}.
$$
We transform $\ket{k}$ into $\ket{e(k)}$ when $k \ge 0$ and $\ket{k+N}$ into $\ket{e(k)}$ when $k < 0$ as described in \eqref{eq:encoding}. The further steps of the transforms are based on
\begin{equation}\label{eqn-encoding-decoding-identity}
\sum_{k=-N/2}^{N/2-1} \hat{f}[k] \ket{e(k)} = \sum_{k=0}^{N-1}\hat{f}[d(k)]\ket{k}.
\end{equation}

\begin{figure}[t]
    \centering
    \includegraphics[scale=0.75]{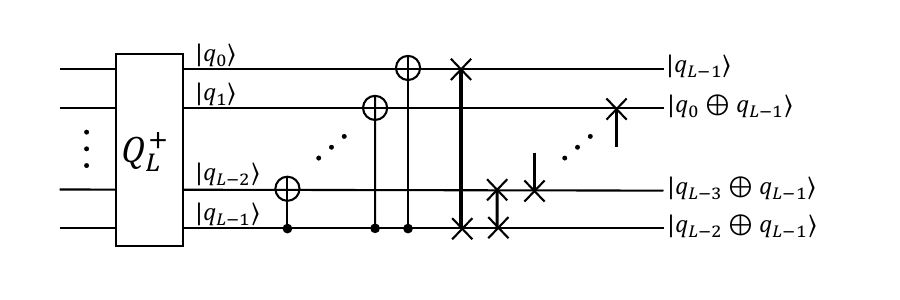}
    \caption{Frequency encoding circuit which implements function $e(\cdot)$ combined with the inverse QFT $Q_L^{\dagger}$}
    \label{fig:column-encoding}
\end{figure}

The two initial steps of the transforms, the inverse QFT and frequency encoding, are carried out by the circuit shown in Figure \ref{fig:column-encoding} (see \ifthenelse{\boolean{originalformat}}{Appendix \ref{section:decoding-circuit-proof}}{Supplementary Material} for details). The gate complexity of the algorithm is $O(L^2)$ due to the QFT step.

\subsubsection{Retaining decoding circuit}\label{section:retaining-decoding-circuit}

\begin{figure}
    \centering
    \includegraphics[scale=0.75]{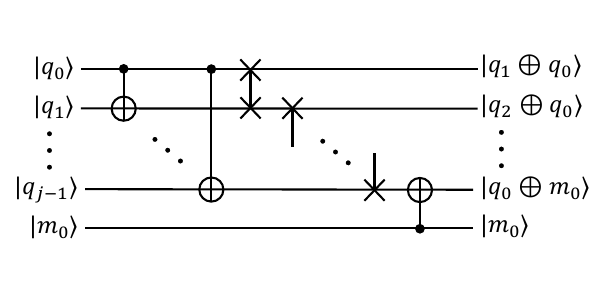}
    \caption{Retaining decoding circuit $\tilde{D}_j$ which implements function $\tilde{d}(\cdot, j, m)$}
    \label{fig:decoding}
\end{figure}

Recall the retaining decoding procedure defined by \eqref{eq:retainingdecoding}. Notice that $\tilde{d}(c, j, m) = \tilde{d}(c, j, m \bmod 2)$ meaning that the retaining decoding procedure depends on the parity of $m$ rather than the value.  Considering the parity of $m$ we may construct the quantum circuit of $\tilde{d}(\cdot, j, m)$ that operates on $j+1$ qubits: $j$ qubits in representation of $c$ and $1$ qubit of $m_0$ where $m_0$ is the least significant qubit of $m$. See Figure \ref{fig:decoding}.  (Additional details are presented in \ifthenelse{\boolean{originalformat}}{Appendix \ref{section:decoding-circuit-proof}}{the Supplementary Material}.) We refer to this circuit as $\tilde{D}_j$. The implementation of $\tilde{D}_j$ has gate complexity of $O(j)$.

In the special case of $j=L$, we do not pass the parity of $m$ to the algorithm and assume that $m_0=0$.


\subsection{Wave packet encoding}\label{section: wave packet encoding}
The wavelet packet transform \eqref{eq:operator-of-interest} can be written as 
$$
\sum_{k=0}^{N-1}f[k]\ket{k} \mapsto \smashoperator{\sum_{(j, m, n)\in \Gamma_T}} c^D_{j, m, n}\ket{(j, m, n)},
$$
where $N = 2^L$, each $c^D_{j, m, n}  \approx \langle f, \psi^j_{m, n}\rangle$ and $\ket{(j,m,n)}$ is a representation of the triplet $(j, m, n)$. In this section, we discuss the encoding of such triplets using $L$ bits. To do this, we will introduce data structures for representing wave packet trees. 

Consider an admissible wave packet tree $T$ (in the sense of Definition \ref{def:wave-packet-admissible-tree}), described by the set of its leaf nodes $\Lambda_T = \{\mathrm{W}^j_m\}$. By Proposition \ref{prop:admissible-properties}, the mapping
\begin{equation}
(j, m, n) \mapsto p = m2^j + n \label{eq:row-encoding}
\end{equation}
is a bijection between the set of index triplets $\Gamma_T = \{(j, m, n)\,|\, \mathrm{W}^j_m \in \Lambda_T, 0 \le n < 2^j \}$ and the range of $\{0, \ldots, N-1\}$.
Observe that $j$ least significant bits of the encoded index $p$ represent $n$ in binary form, while the remaining bits represent $m$. Therefore, to decode the index, we need to know $j$. This requires information about the structure of the tree $T$.

\subsubsection{Encodings of wave packet trees}

We will construct algorithms for implementing wave packet transforms, given access to information about the wave packet tree $T$ through the following mechanism. Let us define a function $\tilde{h}: \{0, \ldots, N-1\} \mapsto \{1, \ldots, L - 1\}$ such that 
\begin{equation}\label{eq:tilde-h}
\forall (j,m,n) \in \Gamma_T,\quad  \tilde{h}(m2^j + n) = j.
\end{equation}
In other words, the function $\tilde{h}$ maps the encoded index $p$ back to $j$. 

Furthermore, we define a family of \textit{helper decoding boolean functions} $h_j: \{0, \ldots, N-1\} \mapsto \{0, 1\}$ for $1 \le j < L$ such that 
\begin{equation}\label{eq:helper-functions}
h_j(p) = 1 \text{ if and only if } \tilde{h}(p) \ge j.
\end{equation} 
Note that the function $\tilde{h}(\cdot)$ and the family of functions $h_j(\cdot)$ allows us to decode $p$'s. 

In this paper, we assume that there is an efficient implementation of functions $h_j$, specifically 
$$
\ket{p}\ket{b_1 \ldots b_{L-1}} \mapsto \ket{p}\ket{b_1 \oplus h_1(p), \ldots b_{L-1} \oplus h_{L-1}(p)}
$$
can be implemented using $O(L^2)$ gates. We demonstrate that this is possible for specific kinds of wave packet trees, including those described in Section \ref{section: Wave packet trees}:
\begin{itemize}
    \item \textit{Tree generating uniform basis.} All $j$'s of leaf nodes in the wave packet tree are equal, say $j=j'$, so $h_j(p)=I(j \ge j')$, with the implementation using one $X$-gate.
    \item \textit{Dyadic tree}. Observe that $h_{L-1}(p) = p_{L-1}$, $h_{L-2}(p) = h_{L-1}(p) \lor p_{L-2} = p_{L-1} \lor p_{L-2}$, in general,
    $$
    h_{i}(p) = \lor_{i'=i}^{L-1} p_{i'} = h_{i+1}(p) \lor p_i \text{ for } i = 2, \ldots L-2,
    $$ 
    and $h_1(p) = 1$, where $\lor$ is a logical OR operator. The procedure requires $O(L)$ gates.
    \item \textit{Monotonic tree}. 
    Denote by $m_j^*$, index $m$ of the leftmost leaf node on the level $j$. If no leaf nodes are on the level, $m_j^*$ is the index of the node which could be next to the most right node on level $j$, that is, $m_j^*-1$ is the index of the most right node. If there are no nodes on the level, let $m_j^* = 0$. Then
    $$
    h_j(p) = I(p \ge m_j^* 2^j).
    $$
    Given that $p$ is an $L$-bit value, the comparison requires $O(L)$ classical bit operations or quantum gates. The total number of gates for the procedure to compute all $h_j(p)$, $1 \le j < L$, is $O(L^2)$. 
\end{itemize}
Recall from Section \ref{section: Wave packet trees} that the wave atom transform requires 
a parabolic tree, which is covered by the case of monotonic trees. In the paper, we focus on monotonic trees and assume that all trees of interest are encoded by the indices of the leftmost leaf nodes at each level, $m_j^*$' s. This tree representation requires $O(L^2)$ bits and, therefore, does not increase the complexity of circuit construction.

\ifthenelse{\boolean{originalformat}}{
 \section{Quantum Shannon wavelet transform}
}{
 \section{Results: Quantum Shannon wavelet transform}
}
\label{section:quantum-shannon-wavelet-transform}
In this section, we demonstrate how the retaining decoding function $\tilde{d}$ and the helper decoding functions $h_j$ can be used to construct an efficient implementation of the quantum Shannon wavelet transform for a monotonic tree (see Theorem \ref{theorem:algorithms-shannon}). 

Recall the definition of Shannon wavelets $\hat{\varphi}^j_m$ by equation \eqref{eq:Shannon-definition},
$$
\hat{\varphi}^j_m(\xi) = 2^{-j/2} I\left\{ 2^{-j+1} \xi \in [-m-1, -m) \cup [m, m+1) \right\}.
$$
(See also Figure \ref{fig:Shannon-wave-packet}.) It is straightforward to see that for any wave packet admissible tree $T$ generating a transform of size $N=2^L$, each frequency $k \in [-N/2, N/2)$ is in the support of one Shannon wavelet $\hat{\varphi}^j_m$ corresponding to a leaf node $\mathrm{W}^j_m \in \Lambda_T$. With the inclusion of shifts in the spatial domain,
\begin{equation}
\hat{\varphi}^j_{m, n}(\xi) = e^{-\i 2\pi n2^{-j} \cdot \xi} \hat{\varphi}^j_m(\xi),
\end{equation}
and following the definition of the discrete wave packets transform \eqref{ksum1}, we can formally define the quantum Shannon wavelet transform. 

\begin{definition}\label{def:shannon-wavelet-transform}
Consider an admissible wave packet tree $T$ with a root node $\mathrm{W}^L_0$. Define a matrix $C^S$ as
\begin{equation}\label{eq:shannon-transform-definition}
C^S_{m2^j+n, i} = \overline{\hat{\varphi}^j_{m, n}(d(i))},
\end{equation}
where $\mathrm{W}^j_m \in \Lambda_T$, $d$ is the decoding function in equation \eqref{eq:decoding}, and $n, i \in \mathbb{Z}$ such that $0 \le n < 2^j$ and $0 \le i < N$.
The following transform of size $N=2^L$, where $Q_L$ is QFT,
\begin{equation}
\begin{gathered}
\sum_{k=0}^{N-1} f[k]\ket{k} \mapsto \sum_{\mathrm{W}^j_m \in \Lambda_T}\sum_{n=0}^{2^j-1} \left(\sum_{i=0}^{N-1} C^S_{m2^j+n, i} \hat{f}[d(i)]\right) |m2^j+n\rangle
\\ \text{ where } \hat{f}[k] = \begin{cases} \sum_{i=0}^{N-1} (Q_L^\dagger)_{ki} f[i], \text{ if } k \ge 0,\\
\sum_{i=0}^{N-1} (Q_L^\dagger)_{(k+N)i} f[i], \text{ if } k < 0,
\end{cases}
\end{gathered}
\end{equation}
is called a \textit{quantum Shannon wavelet transform}.
\end{definition}


\subsection{Block structure}

The Shannon wavelet transform has a block structure, which allows it to be implemented efficiently by a quantum circuit, as follows.
When we consider a leaf node $\mathrm{W}^j_m \in \Lambda_T$ and $n=0,\ldots, 2^j-1$, $C^S_{m2^j+n, i}$ can be expressed as follows
\[
\begin{split}
C^S_{m2^j+n, i} &= \overline{\hat{\varphi}^j_{m,n}(d(i))} 
\\
&= 2^{-j/2} e^{\i 2\pi 2^{-j} n\cdot d(i)} I\left\{d(i) \in [-(m+1)2^{j-1}, -m2^{j-1}) \cup [m2^{j-1}, (m+1)2^{j-1})\right\} 
\\
&= 2^{-j/2} e^{\i 2\pi 2^{-j} n\cdot d(i)} I\left\{i \in [m2^j, (m+1)2^j)\right\}.
\end{split}
\]
Denote $c = i - m2^j$, then
$$
C^S_{m2^j+n, m2^j+c} = 2^{-j/2} e^{\i 2\pi 2^{-j} n\cdot d(m2^j+c)} I\left\{ 0 \le c < 2^j  \right\}.
$$
\begin{figure}
    \centering
    \includegraphics[width=0.5\linewidth]{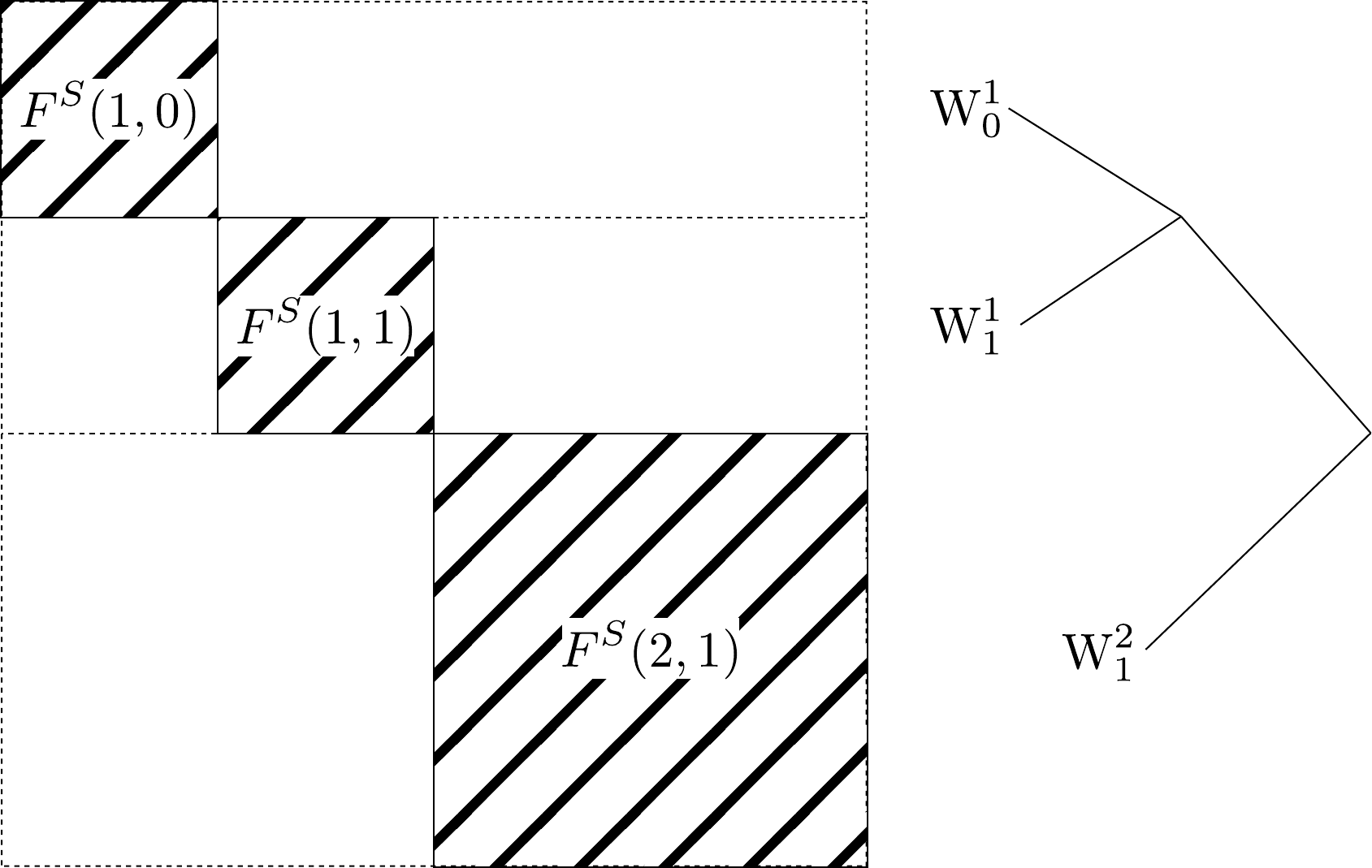}
    \caption{Block-diagonal structure of matrix $C^S$ with the blocks $F^{S}(1, 0), F^{S}(1, 1)$ and $F^{S}(2,1)$ corresponding to leaf nodes of $T = \{\mathrm{W}^1_0, \mathrm{W}^1_1, \mathrm{W}^2_1 \}$.}
    \label{fig:shannon-wavelet-matrix-tree}
\end{figure}

This implies that $C^S$ is a block diagonal matrix with blocks of size $2^j \times 2^j$ for each leaf node $\mathrm{W}^j_m\in \Lambda_T$. Let us denote the block corresponding to $\mathrm{W}^j_m$ by $F^{S} (j, m)$.  See Figure \ref{fig:shannon-wavelet-matrix-tree} for an example of the block-diagonal structure of $C^S$. Furthermore, given the periodicity of the exponent term of $C^S_{m2^j+n, m2^j+c}$, we can replace $d(m2^j+c)$ with $\tilde{d}(j, m, c)$ by definition \eqref{eq:retainingdecoding}. Then for $\mathrm{W}^j_m \in \Lambda_T,  0 \le n, c < 2^j$,
\begin{equation} \label{equation:ShannonAdaptive}
C^S_{m2^j+n, m2^j+c} = (F^S(j, m))_{n, c} = 2^{-j/2} e^{\i 2^{-j}2\pi n\cdot \tilde{d}(c, j, m)}.
\end{equation}
This can also be written more compactly as:
\begin{equation}
C^S = \bigoplus_{\mathrm{W}^j_m \in \Lambda_T} F^S(j,m).
\end{equation}

\subsection{Implementation on a quantum computer}

\begin{figure}[h!]
    \centering
    \includegraphics[scale=0.45,trim= 0 0 690 0,clip]{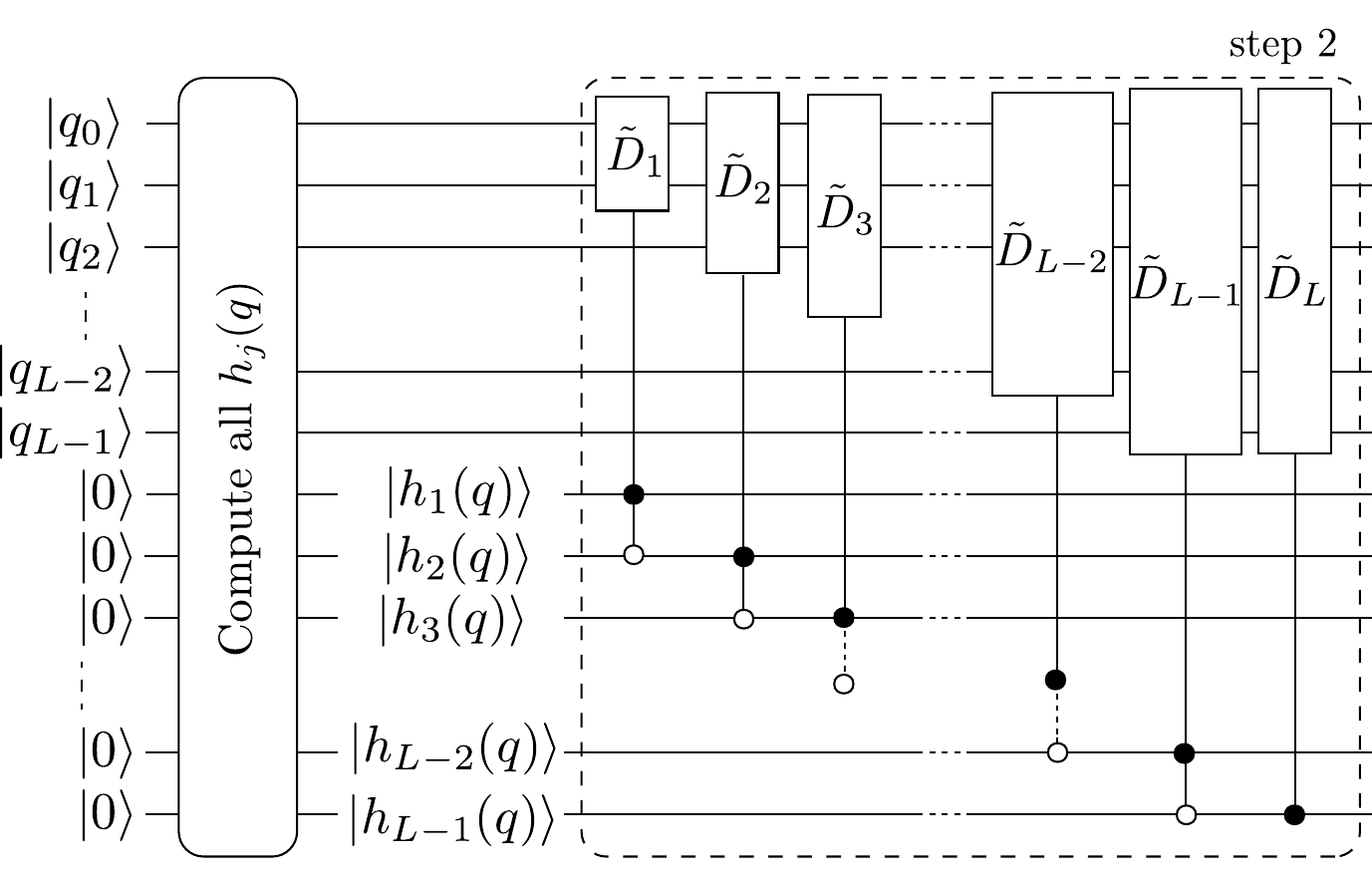}
    \\
    \hspace{1in}\includegraphics[scale=0.45,trim= 670 0 0 0,clip]{Shannon/Figures/shannon-circuit.pdf}
    \caption{{Quantum circuit implementing Shannon wavelet transform $C^S$} of size $2^L$.
    The family of boolean functions $h_j$ is defined in \eqref{eq:helper-functions}, the retaining decoding circuits $\tilde{D}_j$ are shown in Figure \ref{fig:decoding}. $Q_j^R$ and $Q_j^S$ are steps of the quantum Fourier transform, see Figure \ref{fig:qft-circuit}.
    }\label{fig:shannon-circuit}
\end{figure}

As shown above, $F^S(j, m)$ consists of two actions: (1) decoding $\tilde{d}$, and (2) applying a QFT on $j$ qubits. For the decoding $\tilde{d}$ we pass $j+1$ qubits to the retaining decoding circuit $\tilde{D}_j$ (see Figure \ref{fig:decoding}), where the most significant qubit gives parity of $m$. To know $j$ we use the helper decoding functions $h_j$ by definition \eqref{eq:helper-functions}. Finally, we apply QFT on $j$ qubits controlled by ancilla qubits that contain the value of $h_j$. 

Combining these steps, we get the quantum circuit shown in Figure \ref{fig:shannon-circuit}. For a detailed description, see \ifthenelse{\boolean{originalformat}}{Algorithm \ref{algo:shannon-wavelet} in Appendix \ref{appendix:algorithms-shannon}}{the Supplementary Material}.
The total gate complexity of the algorithm is $O(L^2)$. In summary, we have shown:
\begin{theorem}\label{theorem:algorithms-shannon}
Let $T$ be a monotonic wave packet admissible tree defined for discretization of size $O(2^L)$. If $T$ is represented by the leftmost node on each level, then the gate complexity of quantum Shannon wavelet transform generated by the tree $T$ is $O(L^2)$. The complexity of the circuit construction is $O(L^2)$.
\end{theorem}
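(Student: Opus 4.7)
The plan is to argue directly from the circuit sketched in Figure~\ref{fig:shannon-circuit} and the block decomposition $C^S = \bigoplus_{\mathrm{W}^j_m\in\Lambda_T} F^S(j,m)$ established in equation~\eqref{equation:ShannonAdaptive}. I would first verify correctness of the circuit, then count gates stage by stage.

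For correctness, I would fix a leaf $\mathrm{W}^j_m \in \Lambda_T$ and an input basis state $\ket{m2^j+n}$ (so $p = m2^j+n$). The encoding/QFT block of Algorithm~\ref{algo:encoding} produces $\sum_q \hat f[d(q)]\ket{q}$, so it remains to show that the subsequent stages apply $F^S(j,m)$ to the block indexed by $m$. Evaluating the helper functions $h_1,\ldots,h_{L-1}$ into ancillas yields the indicator bits $I(\tilde h(p) \ge j')$ on ancilla $j'$; since $\tilde h(p) = j$ for all $p$ in the block, these ancillas uniquely identify $j$. Conditioned on these ancillas, the circuit applies the retaining decoding $\tilde D_j$ (Algorithm~\ref{algo:decode-j}) to the low-$j$ qubits together with the $m_0$ qubit inside $p$, then applies the $j$-qubit QFT (as $Q^R_1\cdots Q^R_j$ followed by the global reversal $Q^S_L$) to those same low-$j$ qubits. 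By the explicit form in \eqref{equation:ShannonAdaptive}, this composition realises $F^S(j,m)$ on the block and leaves the other blocks untouched, as required.

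For gate complexity, I would tally the contributions: (i) inverse QFT plus frequency encoding costs $O(L^2)$ by Algorithm~\ref{algo:encoding}; (ii) the $L-1$ helper functions $h_j$ can be implemented for a monotonic tree using comparators $\mathrm{Comparator}_L(m_j^* 2^j)$, each costing $O(L)$ gates, for a total of $O(L^2)$; (iii) each retaining decoder $\tilde D_j$ uses $O(j)$ gates and there are $L-1$ of them, summing to $O(L^2)$; (iv) the QFT rotations $Q^R_j$ controlled on the corresponding $h_j$ ancilla contribute $O(j)$ gates each, again giving $O(L^2)$ in total (the constant overhead for single-ancilla control is absorbed by the assumption on controlled circuits in Section~\ref{section:preliminaries}); (v) the final swap stage $Q^S_L$ is $O(L)$. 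Summing yields overall gate complexity $O(L^2)$.

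For the construction complexity, the monotonic tree is specified by the $L-1$ values $m_j^*$, each an $L$-bit integer, hence $O(L^2)$ classical bits; emitting the comparators, decoders, and conditional rotations proceeds in time linear in the size of the circuit, so the classical construction cost is also $O(L^2)$. The main obstacle I anticipate is the bookkeeping in step~(i) of correctness: one must check that the joint behaviour of $h_j$-controlled $\tilde D_j$ and $h_j$-controlled $Q^R_j$ acts as identity on blocks with $\tilde h(p) \ne j$ while reproducing the phase $e^{\i 2\pi 2^{-j} n\cdot \tilde d(c,j,m)}$ exactly on the correct block, which is where the retaining (mod $2^j$) property of $\tilde d$ and the big-endian QFT convention must line up correctly; the rest is routine counting.
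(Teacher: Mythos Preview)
Your approach is essentially the paper's: verify the block-diagonal action of the circuit on each subspace $\text{span}\{\ket{q}:\tilde h(q)=j\}$ and then tally gates stage by stage. Two minor slips to fix when you write it up: the bit-reversal is applied per block as $Q^S_j$ (controlled on $h_j=1,\,h_{j+1}=0$), not once globally as $Q^S_L$, so your step~(v) is $O(L^2)$ rather than $O(L)$; and you need an explicit uncomputation of the $h_j$ ancillas at the end (step~5 of Algorithm~\ref{algo: Shannon wavelet}), which is possible precisely because the controlled $\tilde D_j$ and QFT preserve $\tilde h(q)$ on each block.
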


\ifthenelse{\boolean{originalformat}}{
 \section{Quantum wave atom transform}
}{
 \section{Results: Quantum wave atom transform}
}
\label{section: quantum wave atom transform}

In this section, we define and implement the quantum wave atom transform in one dimension, and we prove that it is efficient, for a variety of choices of the mother wave atom and the wave atom tree structure (see Theorem \ref{theorem:complexity-general} and Corollary \ref{theorem:complexity-specific}). This is a step towards implementing two-dimensional wave atoms, which have numerous applications \cite{DEMANET2007368}, on a quantum computer. 

To do this, we address two technical issues. First, in contrast to the Shannon wavelet transform, wave atoms have supports that overlap in the frequency domain; therefore, the wave atom transform requires considering the ``blending'' of the elements of~$\hat{f}$. 

Second, for many applications, it is desirable to use wave atoms that obey a ``parabolic'' scaling relationship, i.e., $\text{wavelength} \sim (\text{diameter of support})^2$ \cite{DEMANET2007368}. In order to do this, it is necessary to use more general classes of wave packet trees (introduced in Section \ref{section: wave packet encoding}), combined with a specific construction of wave packets that allows neighboring wave packets to have different scales (due to Villemoes \cite{Villemoes2002WaveletPW}, and described in Section \ref{section:wave-packets-background}). 

These techniques are quite flexible, and can be used to implement quantum wave atom transforms based on a variety of wave packet tree structures (see Theorem \ref{theorem:complexity-general}), including monotonic trees (see Corollary \ref{theorem:complexity-specific}). 


\subsection{Wave atoms' support over the frequency domain}\label{section:wave-atoms-form}
Recall the definition of wave atoms $\hat{\psi}^j_m$ by equations \eqref{def:wave-atoms} and \eqref{def:psi-hat},
\begin{small}
$$
\begin{gathered}
\hat{\psi}^j_m(\xi) = 2^{-j/2} \hat{\psi}^0_m(2^{-j} \xi),\\
\hat{\psi}_m^0(\xi) = e^{-\i \pi \xi}\big[e^{\i \alpha_m} g((-1)^m (2\pi \xi - 2\alpha_m)) + e^{-\i\alpha_m} g((-1)^{m+1} (2\pi \xi + 2\alpha_m))\big],
\end{gathered}
$$
\end{small}
where $\alpha_m = \tfrac{\pi}2(m+\tfrac12)$. The structure of wave atom is dictated by the function $g$ and its properties. Given that the support of function $g$ is $(-\tfrac76\pi, \tfrac56\pi)$, we derive the support of the wave atom $\hat{\psi}^j_m$ over the frequency domain:
\begin{equation} \label{eq:waveatomssupport}
\begin{gathered}
2^{j-1} \left(-\tfrac43, \tfrac23 \right) \cup  2^{j-1} \left(-\tfrac23, \tfrac43 \right) = 2^{j-1} \left(-\tfrac43, \tfrac43 \right), \text{ if } m= 0,\\
2^{j-1} \left( -\left(m+\tfrac53\right), -\left(m-\tfrac13\right) \right)  \cup 2^{j-1} \left( \left(m-\tfrac13\right), \left(m+\tfrac53\right) \right), \text{ if }m \text{ is odd},\\
2^{j-1} \left( -\left(m+\tfrac43\right), -\left(m-\tfrac23\right) \right)  \cup 2^{j-1} \left( \left(m-\tfrac23\right), \left(m+\tfrac43\right) \right), \text{ if } m \text{ is even}.
\end{gathered}
\end{equation}
(See Figure \ref{fig:Wave-atoms-image}.) The support of each wave atom consists of two intervals, one for negative and one for positive frequencies. This can be rewritten as a union of four intervals with centers $\pm2^{j-1}m$ and $\pm 2^{j-1}(m+1)$, as follows:
\begin{equation}
\begin{gathered}
\pm 2^{j-1} \left(-\tfrac23, \tfrac23 \right) \cup \pm 2^{j-1} \left(1-\tfrac13, 1+\tfrac13 \right), \text{ if } m= 0,\\
\pm 2^{j-1} \left(m-\tfrac13, m+\tfrac13 \right) \cup \pm 2^{j-1} \left((m+1)-\tfrac23, (m+1)+\tfrac23 \right), \text{ if } m \text{ is odd},\\
\pm 2^{j-1} \left(m-\tfrac23, m+\tfrac23 \right) \cup \pm 2^{j-1} \left((m+1)-\tfrac13, (m+1)+\tfrac13 \right), \text{ if } m \text{ is even}.
\end{gathered}
\end{equation}


Let us denote those intervals centered around $\pm2^{j-1}m$ and $\pm 2^{j-1}(m+1)$ as $\chi_{\pm2^{j-1}m}$ and $\chi_{\pm 2^{j-1}(m+1)}$ correspondingly. Then \eqref{eq:waveatomssupport} can be rewritten in general case as 
\begin{equation}\label{eq:xicup}
   \chi_{-2^{j-1}(m+1)} \cup \chi_{-2^{j-1}m} \cup \chi_{2^{j-1}m} \cup \chi_{2^{j-1}(m+1)}  
\end{equation}

We will implement a discrete wave atom transform, by restricting to integer frequencies only. For this purpose, let us denote the integer part of the half interval lengths of $\chi_{\pm 2^{j-1}m}$ and $\chi_{\pm 2^{j-1}(m+1)}$ by $\mu_0(j, m)$ and $\mu_1(j, m)$ respectively. Observe that they are equal to
\begin{equation}\label{prop:integer support}
\mu_0(j, m) = \lfloor 2^{j-I(m \text{ is odd})} / 3 \rfloor, \ \mu_1(j, m) = \lfloor 2^{j-I(m \text{ is even})} / 3 \rfloor.
\end{equation}
\begin{proposition}[Integer support of wave atoms]\label{prop:integer-support-psi}
An integer frequency $k \in \mathbb{Z}$ is in the support of $\hat{\psi}^j_m$ if and only if
$$
||k|-2^{j-1}m| \le \mu_0(j, m) \text{ or } ||k|-2^{j-1}(m+1)| \le \mu_1(j, m).
$$
\end{proposition}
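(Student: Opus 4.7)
The plan is to read the statement directly off the support description in Equation~(\ref{eq:waveatomssupport}), or equivalently the four-interval form in Equation~(\ref{eq:xicup}), by intersecting with $\mathbb{Z}$. Because the support of $\hat{\psi}^j_m$ is symmetric about the origin, an integer $k$ lies in the support iff $|k|$ lies in $\chi_{2^{j-1}m}\cup\chi_{2^{j-1}(m+1)}$. So it suffices to convert each open-interval membership condition into the integer inequality using $\mu_0(j,m)$ and $\mu_1(j,m)$.

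From Equation~(\ref{eq:waveatomssupport}) I would first read off the half-lengths of these two positive intervals. When $m$ is odd, the interval about $2^{j-1}m$ has half-length $2^{j-1}/3$ and the one about $2^{j-1}(m+1)$ has half-length $2^j/3$; when $m$ is even (and nonzero), the two half-lengths are swapped. In the degenerate case $m=0$, the positive intervals are $\chi_0$ with half-length $2^j/3$ and $\chi_{2^{j-1}}$ with half-length $2^{j-1}/3$, which matches the even formulas with $m=0$. This can be written compactly as half-length $2^{j-I(m\text{ odd})}/3$ around $2^{j-1}m$ and half-length $2^{j-I(m\text{ even})}/3$ around $2^{j-1}(m+1)$, which are precisely the quantities whose floors define $\mu_0(j,m)$ and $\mu_1(j,m)$ in Equation~(\ref{prop:integer support}).

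The last step is an elementary integer-arithmetic observation: since $3\nmid 2^\ell$ for any $\ell\ge 0$, neither $2^{j-1}/3$ nor $2^j/3$ is itself an integer. Hence for any integer $k$, the strict inequality $\bigl||k|-2^{j-1}m\bigr|<r$ (membership in the open interval $\chi_{2^{j-1}m}$ of half-length $r$) is equivalent to $\bigl||k|-2^{j-1}m\bigr|\le\lfloor r\rfloor$, and by the identification above $\lfloor r\rfloor=\mu_0(j,m)$. The same argument applied to $\chi_{2^{j-1}(m+1)}$ gives the $\mu_1$ branch, and taking the union of the two conditions yields the stated ``or''.

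I do not expect any real obstacle; the argument is essentially a bookkeeping verification. The only mildly delicate point is tracking the $m=0$ degeneracy (where $\chi_{-2^{j-1}m}$ and $\chi_{2^{j-1}m}$ collapse to the single interval $\chi_0$), and checking the parity convention $I(0\text{ is even})=1$ so that the formulas for $\mu_0(j,0)$ and $\mu_1(j,0)$ agree with the two actual half-lengths $2^j/3$ and $2^{j-1}/3$. With that sanity check, the biconditional follows.
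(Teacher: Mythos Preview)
Your proposal is correct and follows essentially the same approach as the paper's proof: reduce by symmetry to nonnegative frequencies, read off the open intervals $\chi_{2^{j-1}m}$ and $\chi_{2^{j-1}(m+1)}$ with their half-lengths, and use that these half-lengths $2^{j-1}/3,\,2^j/3$ are never integers to replace strict inequalities by $\le\lfloor\cdot\rfloor=\mu_0,\mu_1$. If anything, your write-up is slightly more complete, since the paper treats only odd $m$ in detail and says ``similarly for even $m$,'' while you also spell out the $m=0$ degeneracy and the parity check on $\mu_0(j,0),\mu_1(j,0)$.
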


\begin{prf}  \label{proof:prop:integer-support-psi}
Given that the support is symmetric around zero. Without loss of generality, we consider only the non-negative frequencies.
For odd $m$, the support of $\hat{\psi}^j_m$ is
\begin{gather*}
\left(2^{j-1}(m-\tfrac13), 2^{j-1}(m+1+\tfrac23)\right) \\ = \left(2^{j-1}(m-\tfrac13), 2^{j-1}(m+\tfrac13)\right) \cup \{2^{j-1}(m+\tfrac13)\} \cup \left(2^{j-1}(m+1-\tfrac23), 2^{j-1}(m+1+\tfrac23) \right).
\end{gather*}
Given that $2^{j-1}(m+\tfrac13)$ is never an integer, we can ignore this point. Further, as $2^{j-1}m$ and $2^{j-1}(m+1)$ are integers, the integer support of $\hat{\psi}^j_m$ becomes
$$
\left[ 2^{j-1}m - \lfloor2^{j-1}/3\rfloor, 2^{j-1}m + \lfloor2^{j-1}/3\rfloor\right] \cup \left[ 2^{j-1}(m+1) - \lfloor2^{j}/3\rfloor, 2^{j-1}(m+1) + \lfloor2^{j}/3\rfloor\right].
$$
Given that for odd $m$, $\mu_0(j, m)=\lfloor2^{j-1}/3\rfloor$ and $\mu_1(j, m)=\lfloor2^{j}/3\rfloor$, this concludes the proof for the case of odd $m$. Similarly for even $m$.
\end{prf}
\noindent One can notice that the intersection of the intervals consists of one point $\chi_{2^{j-1}(m+1)} \cap \chi_{2^{j-1}m} = \{2^{j-1}m + 2^{j-I(m \text{ is odd})}/3\}$ and this value is always fractional, so there is no $k \in \mathbb{Z}$ such that
\begin{small}
$$
||k|-2^{j-1}m| \le \mu_0(j, m) \text{ \textit{and} } ||k|-2^{j-1}(m+1)| \le \mu_1(j, m).
$$
\end{small}


\subsection{Wave atom admissible trees}\label{section:wave atom admissible tree}
We would like to construct a (discrete) wave atom transform based on an admissible wave packet tree $T$. Here, we state some conditions on $T$ that are sufficient to ensure that the resulting wave atom basis is orthogonal.

We would like every integer $k \in \{-N/2, \ldots, N/2-1\}$ to lie in the support of two wave atoms, unless $k$ is close to zero (which is covered by a single wave atom) or close to $\pm N$ (which is treated as a special case). As such when we consider two neighboring leaf nodes $\mathrm{W}^j_m, \mathrm{W}^{j'}_{m'} \in \Lambda_T$, where $\mathrm{W}^j_m$ is the left neighbor, we require that the intervals centered around $\pm(m+1)2^j$ related to the two nodes are of equal length,
\begin{equation}\label{eq:mu-condition}
\mu_1(j, m) =\mu_0(j', m'). 
\end{equation}
We define the subset of all wave packet admissible trees that satisfy the condition in equation \eqref{eq:mu-condition}.
\begin{definition}
\label{def:wave-atom-admissible-tree}
Consider an admissible wave packet tree with a root $\mathrm{W}^L_0$ and $K$ leaf nodes $\{\mathrm{W}^{j_i}_{m_i}\;|\;i = 1, \ldots, K\}$ traversed from left to right. If the following conditions hold for all $i \in \{1, \ldots, K-1\}$:
\begin{enumerate}
\item $|j_i - j_{i+1}| \le 1$,
\item if $j_{i+1} = j_i + 1$, then both $m_i$ and $m_{i+1}$ are odd,
\item if $j_{i+1} = j_i - 1$, then both $m_i$ and $m_{i+1}$ are even,
\end{enumerate}
we call such a tree a \emph{wave atom admissible tree}.
\end{definition}

It is straightforward to see that other admissible wave packet trees do not satisfy the condition \eqref{eq:mu-condition} for some pair of neighboring leaf nodes of the tree. Given the Definition \ref{def:wave-atom-admissible-tree} and its reliance on the condition \eqref{eq:mu-condition}, we state an evident property of the wave atom admissible trees.
\begin{proposition}\label{prop:in-support-of-two}
Consider a wave atom admissible tree $T$ with a root node $\mathrm{W}^L_0$. Denote by $\mathrm{W}^{j_l}_0$ and $\mathrm{W}^{j_r}_{m_r}$ the leftmost and rightmost leaf nodes respectively. For any $k \in \{-N/2, \ldots, N/2-1\}$, where $N=2^L$, the following holds:
\begin{enumerate}
\item If $|k| \le \mu_0(j_l, 0)$, then $k$ is in the support of only $\hat{\psi}^{j_l}_0$;
\item If $|k| > m_r2^{j_r-1} + \mu_0(j_r, m_r)$, then $k$ is in the support of only $\hat{\psi}^{j_r}_{m_r}$;
\item Otherwise, $k$ is in the support of exactly two wave atoms.
\end{enumerate}
\end{proposition}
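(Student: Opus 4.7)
The plan is to reduce the statement to the nonnegative half-axis, since by Proposition~\ref{prop:integer-support-psi} membership of $k$ in the support of $\hat{\psi}^j_m$ depends only on $|k|$. So I will track how many leaf wave atoms contain a given value $t = |k| \in \{0,1,\ldots,N/2\}$. Introduce the ``interface points'' $c_0 := 0$ and $c_i := 2^{j_i-1}(m_i+1)$ for $1 \le i \le K$; by Proposition~\ref{prop:admissible-properties}, also $c_i = 2^{j_{i+1}-1} m_{i+1}$ for $i < K$, and $c_K = 2^{L-1} = N/2$. Then Proposition~\ref{prop:integer-support-psi} says the nonnegative integer support of $\hat{\psi}^{j_i}_{m_i}$ is
\begin{equation*}
S_i := \{t \in \mathbb{Z}_+ : |t - c_{i-1}| \le \mu_0(j_i,m_i)\} \cup \{t \in \mathbb{Z}_+ : |t - c_i| \le \mu_1(j_i,m_i)\}.
\end{equation*}

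The central arithmetic observation is the identity $\mu_0(j,m) + \mu_1(j,m) = \lfloor 2^j/3 \rfloor + \lfloor 2^{j-1}/3 \rfloor = 2^{j-1} - 1$ for $j \ge 1$, which I would verify by noting that $(2^j \bmod 3) + (2^{j-1} \bmod 3) = 3$ (the residues of consecutive powers of $2$ alternate between $1$ and $2$). Combined with $c_i - c_{i-1} = 2^{j_i-1}$, this yields $(c_{i-1} + \mu_0(j_i,m_i)) + 1 = c_i - \mu_1(j_i,m_i)$, so the two sub-intervals defining $S_i$ are abutting integer intervals with no gap and no self-overlap; $S_i$ is the single integer interval $[\max(0,\, c_{i-1} - \mu_0(j_i,m_i)),\; c_i + \mu_1(j_i,m_i)]$.

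Next I invoke the admissibility condition \eqref{eq:mu-condition}, $\mu_1(j_i,m_i) = \mu_0(j_{i+1},m_{i+1})$: the ``right half'' $[c_i - \mu_1(j_i,m_i),\, c_i + \mu_1(j_i,m_i)]$ of $S_i$ coincides exactly with the ``left half'' $[c_i - \mu_0(j_{i+1},m_{i+1}),\, c_i + \mu_0(j_{i+1},m_{i+1})]$ of $S_{i+1}$, so $S_i \cap S_{i+1}$ is precisely this common interval around $c_i$. To rule out triple overlaps, observe that the left endpoint of $S_{i+2}$, namely $c_{i+1} - \mu_0(j_{i+2},m_{i+2}) = c_{i+1} - \mu_1(j_{i+1},m_{i+1})$, strictly exceeds the right endpoint $c_i + \mu_1(j_i,m_i) = c_i + \mu_0(j_{i+1},m_{i+1})$ of $S_i$, because $\mu_0(j_{i+1},m_{i+1}) + \mu_1(j_{i+1},m_{i+1}) = 2^{j_{i+1}-1} - 1 < 2^{j_{i+1}-1} = c_{i+1} - c_i$.

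With this interval picture the three cases fall out directly: $t = |k| \le \mu_0(j_l,0)$ lies in the ``left half'' of $S_1$, which has no left neighbour (atom~$1$ is leftmost), so only $\hat{\psi}^{j_l}_0$ contains $k$, giving case~(1); $t = |k| > m_r 2^{j_r-1} + \mu_0(j_r,m_r) = c_{K-1} + \mu_0(j_K,m_K)$ lies in the ``right half'' of $S_K$, which has no right neighbour, so only $\hat{\psi}^{j_r}_{m_r}$ contains $k$, giving case~(2); any remaining $t$ lies in some intersection $S_i \cap S_{i+1}$ and hence in exactly two atom supports, giving case~(3). I expect the main obstacle to be the arithmetic identity $\mu_0+\mu_1 = 2^{j-1}-1$; once it is in hand, the interval geometry is forced by Proposition~\ref{prop:admissible-properties} and the admissibility condition~\eqref{eq:mu-condition}.
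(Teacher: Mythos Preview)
Your proposal is correct and follows essentially the same approach as the paper: the paper's proof is a one-line sketch invoking condition~\eqref{eq:mu-condition}, the identity $\mu_0(j,m)+\mu_1(j,m)=2^{j-1}-1$, and the fact that $\chi_{\pm 2^{j-1}(m+1)}\cap\chi_{\pm 2^{j-1}m}$ contains no integer points, and your argument simply unpacks exactly this interval geometry in detail. Your introduction of the interface points $c_i$ and the explicit verification that the $S_i$ tile the half-axis with only pairwise overlaps is a clean way to make the sketch rigorous.
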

\begin{prf}
Follows directly from condition \eqref{eq:mu-condition}, observation that $\mu_0(j, m)+\mu_1(j,m)=2^{j-1}-1$ and $\chi_{\pm 2^{j-1}(m+1)} \cap \chi_{\pm 2^{j-1}m}$ has no integer points. 
\end{prf}

In addition, we would like the discrete wave atom transform to be a unitary operator. This requires special treatment of the lowest and highest frequencies, corresponding to the left-most and right-most wave packets in the tree $T$. 
So we define a modified basis $\lbrace \tilde{\psi}^j_{m} \rbrace$ as follows: we straighten the rightmost wave packet corresponding to $\mathrm{W}^{j_r}_{m_r} \in \Lambda_T$ as its support extends beyond the interval of interest $\{-N/2, \ldots, N/2-1\}$.  Moreover, we also modify the wave packet corresponding to the leftmost leaf node $\mathrm{W}^{j_l}_0 \in \Lambda_T$ as its support approaches the origin, as follows:
\begin{small}
\begin{equation}\label{eq:waveatomstilda}
\tilde{\psi}^j_{m}(k) = \begin{cases}
2^{-j/2} e^{-\i \pi 2^{-j} k} e^{(-1)^{k< 0} \i \alpha_0} \text{ if } |k| \le \mu_0(j_l, 0), \mathrm{W}^j_m = \mathrm{W}^{j_l}_0,\\
2^{-j/2} e^{-\i \pi 2^{-j} k} e^{(-1)^{k< 0} \i \alpha_{m_r}} \text{ if } |k| > m_r 2^{j_r} + \mu_0(j_r, m_r), \mathrm{W}^j_m = \mathrm{W}^{j_r}_{m_r},\\
\hat{\psi}^j_{m}(k) \text{ otherwise}.
\end{cases}
\end{equation}
\end{small}
Similarly to equation (\ref{def:wave-atoms}), we then define wave atoms with the shift of $2^{-j}n$ in the spatial domain to be
\begin{equation}\label{eq:waveatomstilda-shift}
\tilde{\psi}_{m,n}^j(k) = e^{-\i 2\pi n2^{-j} k} \tilde{\psi}_m^j(k).
\end{equation}
\begin{figure}
    \centering
    \includegraphics[scale=0.6]{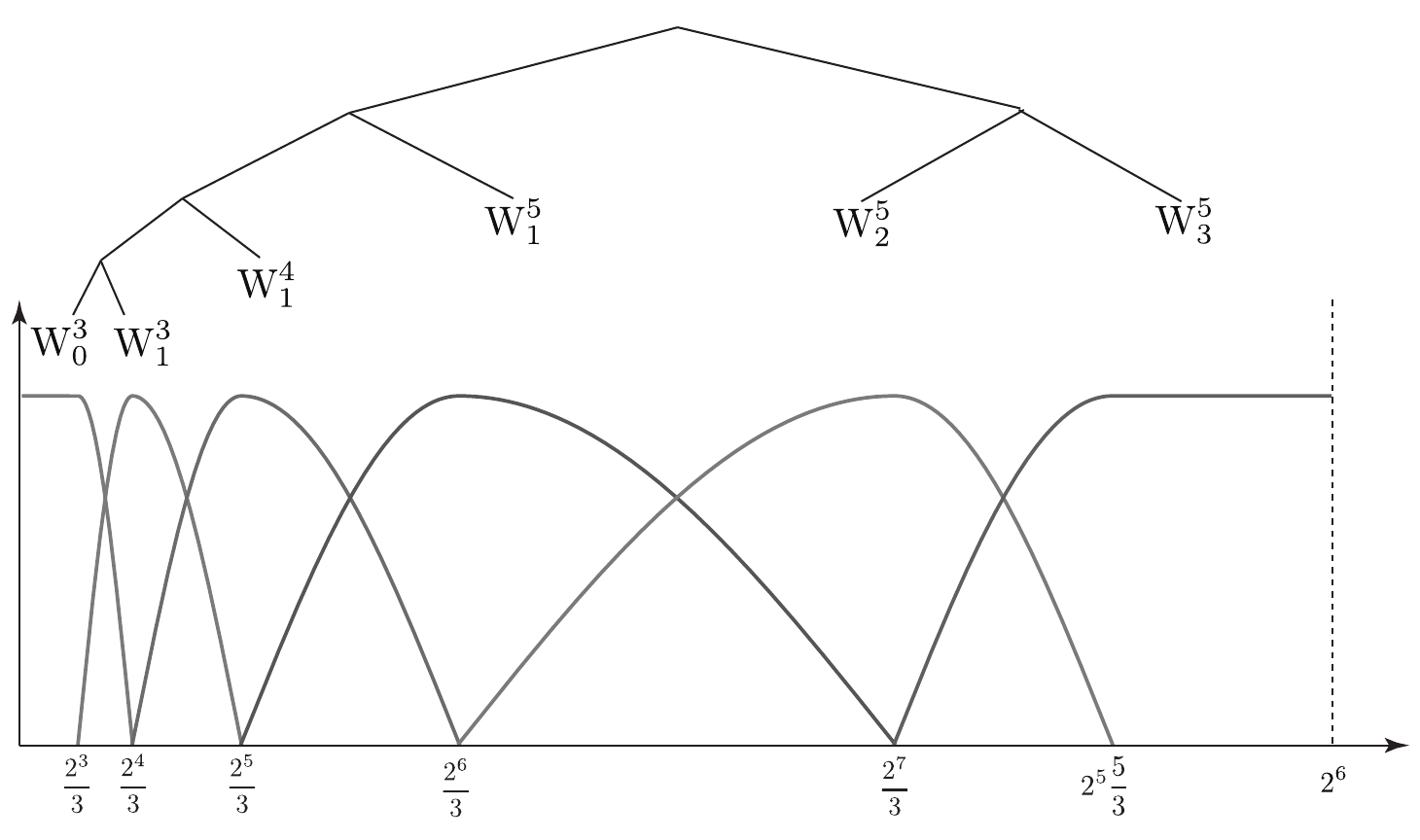}
    \caption{Wave atoms $\tilde{\psi}^j_m$ corresponding to the wave atom admissible tree on the top (absolute values). The figure shows only the positive side of the frequency domain. The negative side of the frequency domain is symmetric.}
    \label{fig:wave-packets-with-tree}
\end{figure}
An example of this construction is shown on Figure \ref{fig:wave-packets-with-tree}. 
Now, we can formally define the quantum wave atom transform.
\begin{definition}\label{def:wave-atom-transform}
Consider a wave atom admissible tree $T$ with a root node $\mathrm{W}^L_0$. Define a matrix $C^A$ as
\begin{equation}\label{eq:wave-atom-transform-definition}
C^A_{m2^j+n, i} = \overline{\tilde{\psi}^j_{m, n}(d(i))},
\end{equation}
where $\mathrm{W}^j_m \in \Lambda_T$, $d$ is the decoding function in equation \eqref{eq:decoding}, and $n, i \in \mathbb{Z}$ such that $0 \le n < 2^j$ and $0 \le i < N$.
The following transform of size $N=2^L$, where $Q_L$ is QFT,
\begin{equation}
\begin{gathered}
\sum_{k=0}^{N-1} f[k]\ket{k} \mapsto \sum_{\mathrm{W}^j_m \in \Lambda_T}\sum_{n=0}^{2^j-1} \left(\sum_{i=0}^{N-1} C^A_{m2^j+n, i} \hat{f}[d(i)]\right) |m2^j+n\rangle 
\\ \text{ where } \hat{f}[k] = \begin{cases} \sum_{i=0}^{N-1} (Q_L^\dagger)_{ki} f[i], \text{ if } k \ge 0,\\
\sum_{i=0}^{N-1} (Q_L^\dagger)_{(k+N)i} f[i], \text{ if } k < 0,
\end{cases}
\end{gathered}
\end{equation}
is called a \textit{quantum wave atom transform}.
\end{definition}

\subsection{Decomposition of $C^A$}\label{section:Decomposition}
\begin{figure}
    \centering
    \begin{tabular}{c} 
        $C^A |\hat{f}\rangle$ \\
    \begin{tabular}{|c|l}
        \hline \cellcolor{lightgray} \phantom{$\hat{f}$} \\
        \hline \cellcolor{lightgray} \phantom{$\hat{f}$} \\
        \hline \cellcolor{lightgray} \phantom{$\hat{f}$} \\
        \hline \cellcolor{lightgray} \phantom{$\hat{f}$} \\
        \hline \cellcolor{lightgray} \phantom{$\hat{f}$} \\
        \hline \cellcolor{olive} \phantom{$\hat{f}$} \\
        \hline \cellcolor{olive} \phantom{$\hat{f}$} \\
        \hline \cellcolor{olive} \phantom{$\hat{f}$} \\
        \hline \cellcolor{olive} \phantom{$\hat{f}$} \\
        \hline \cellcolor{olive} \phantom{$\hat{f}$} \\
        \hline
    \end{tabular}
    \end{tabular}
    \begin{tabular}{c} 
        $F^A$ \\
    \begin{tikzpicture}
        \draw (0,4.45) -- (0.01,4.45); 
        \draw[ultra thick, <-] (0,3.25) -- (0.5,3.25); 
        \draw[dashed] (0,2.12) -- (0.5,2.12); 
        \draw[ultra thick, <-] (0,1) -- (0.5,1); 
        \draw (0,0) -- (0.01,0); 
    \end{tikzpicture}
    \end{tabular}
    \begin{tabular}{c} 
        \phantom{$\hat{f}$} \\
    \begin{tabular}{|c|l}
        \hline \phantom{$\hat{f}$} \\
        \hline \cellcolor{YellowGreen} \phantom{$\hat{f}$} \\
        \hline \cellcolor{Goldenrod} \phantom{$\hat{f}$} \\
        \hline \phantom{$\hat{f}$} \\
        \hline \cellcolor{Orchid} \phantom{$\hat{f}$} \\
        \hline \cellcolor{Orchid} \phantom{$\hat{f}$} \\
        \hline \phantom{$\hat{f}$} \\
        \hline \cellcolor{Goldenrod} \phantom{$\hat{f}$} \\
        \hline \cellcolor{YellowGreen} \phantom{$\hat{f}$} \\
        \hline \phantom{$\hat{f}$} \\
        \hline
    \end{tabular}
    \end{tabular}
    \begin{tabular}{c} 
        $R^\dagger$ \\
    \begin{tikzpicture}
        \draw (0,4.45) -- (0.01,4.45); 
        \draw[<-] (0,1) -- (0.2,1) -- (0.4,3.8) -- (0.5,3.8); 
        \draw[<-] (0,3.8) -- (0.2,3.8) -- (0.4,1) -- (0.5,1);
        \draw (0,0) -- (0.01,0); 
    \end{tikzpicture}
    \end{tabular}
    \begin{tabular}{c} 
        \phantom{$\hat{f}$} \\
    \begin{tabular}{|c|l}
        \hline \phantom{$\hat{f}$} \\
        \hline \cellcolor{Goldenrod} \phantom{$\hat{f}$} \\
        \hline \cellcolor{Goldenrod} \phantom{$\hat{f}$} \\
        \hline \phantom{$\hat{f}$} \\
        \hline \cellcolor{Orchid} \phantom{$\hat{f}$} \\
        \hline \cellcolor{Orchid} \phantom{$\hat{f}$} \\
        \hline \phantom{$\hat{f}$} \\
        \hline \cellcolor{YellowGreen} \phantom{$\hat{f}$} \\
        \hline \cellcolor{YellowGreen} \phantom{$\hat{f}$} \\
        \hline \phantom{$\hat{f}$} \\
        \hline
    \end{tabular}
    \end{tabular}
    \begin{tabular}{c} 
        $G^A$ \\
    \begin{tikzpicture}
        \draw (0,4.45) -- (0.01,4.45); 
        \draw[<-] (0,3.85) -- (0.5,3.85); 
        \draw[<-] (0,3.75) -- (0.5,3.45);
        \draw[<-] (0,3.45) -- (0.5,3.75);
        \draw[<-] (0,3.35) -- (0.5,3.35);
        \draw[<-] (0,2.45) -- (0.5,2.45); 
        \draw[<-] (0,2.35) -- (0.5,2.05);
        \draw[<-] (0,2.05) -- (0.5,2.35);
        \draw[<-] (0,1.95) -- (0.5,1.95);
        \draw[<-] (0,1.05) -- (0.5,1.05); 
        \draw[<-] (0,0.95) -- (0.5,0.65);
        \draw[<-] (0,0.65) -- (0.5,0.95);
        \draw[<-] (0,0.55) -- (0.5,0.55);
        \draw (0,0) -- (0.01,0); 
    \end{tikzpicture}
    \end{tabular}
    \begin{tabular}{c} 
        \phantom{$\hat{f}$} \\
    \begin{tabular}{|c|l}
        \hline \phantom{$\hat{f}$} \\
        \hline \cellcolor{Salmon} \phantom{$\hat{f}$} \\
        \hline \cellcolor[HTML]{AAACED} \phantom{$\hat{f}$} \\
        \hline \phantom{$\hat{f}$} \\
        \hline \cellcolor[HTML]{AAACED} \phantom{$\hat{f}$} \\
        \hline \cellcolor{Salmon} \phantom{$\hat{f}$} \\
        \hline \phantom{$\hat{f}$} \\
        \hline \cellcolor[HTML]{AAACED} \phantom{$\hat{f}$} \\
        \hline \cellcolor{Salmon} \phantom{$\hat{f}$} \\
        \hline \phantom{$\hat{f}$} \\
        \hline
    \end{tabular}
    \end{tabular}
    \begin{tabular}{c} 
        $R$ \\
    \begin{tikzpicture}
        \draw (0,4.45) -- (0.01,4.45); 
        \draw[<-] (0,1) -- (0.2,1) -- (0.4,3.8) -- (0.5,3.8); 
        \draw[<-] (0,3.8) -- (0.2,3.8) -- (0.4,1) -- (0.5,1);
        \draw (0,0) -- (0.01,0); 
    \end{tikzpicture}
    \end{tabular}
    \begin{tabular}{c} 
        $|\hat{f}\rangle$ \hspace{1in} \\
    \begin{tabular}{|c|l}
        \cline{1-1} \phantom{$\hat{f}$} $\cdots$ & $\cdots$ \\
        \cline{1-1} \cellcolor[HTML]{AAACED} $\hat{f}(-m2^{j-1}+s)$ & $\ket{m2^j-2s-1}$ \\
        \cline{1-1} \cellcolor[HTML]{AAACED} $\hat{f}(m2^{j-1}-s)$ & $\ket{m2^j-2s}$ \\
        \cline{1-1} \phantom{$\hat{f}$} $\cdots$ & $\cdots$ \\
        \cline{1-1} \cellcolor[HTML]{AAACED} $\hat{f}(-m2^{j-1})$ & $\ket{m2^j-1}$ \\
        \cline{1-1} \cellcolor{Salmon} $\hat{f}(m2^{j-1})$ & $\ket{m2^j}$ \\
        \cline{1-1} \phantom{$\hat{f}$} $\cdots$ & $\cdots$ \\
        \cline{1-1} \cellcolor{Salmon} $\hat{f}(-m2^{j-1}-s)$ & $\ket{m2^j+2s-1}$ \\
        \cline{1-1} \cellcolor{Salmon} $\hat{f}(m2^{j-1}+s)$ & $\ket{m2^j+2s}$ \\
        \cline{1-1} \phantom{$\hat{f}$} $\cdots$ & $\cdots$ \\
        \cline{1-1}
    \end{tabular}
    \end{tabular}
    \caption{{The implementation of the operation $C^A$,} defined in Eq.~(\ref{eq:wave-atom-transform-definition}). This represents a wave atom transform, omitting the initial step of applying a discrete Fourier transform (DFT). In other words, one starts from the state $|\hat{f}\rangle = \sum_{k=-N/2}^{N/2-1} \hat{f}[k] \ket{e(k)} = \sum_{k=0}^{N-1} \hat{f}[d(k)]\ket{k}$ from Eq.~(\ref{eqn-encoding-decoding-identity}). Each box in the diagram represents one component of the state-vector, {when the quantum state is written as a superposition of standard basis vectors (denoted by $\ket{\cdots}$ symbols on the right side of the diagram).} First the permutation $\rho$ from Eq.~(\ref{eq:perm-rho}) is applied, which rearranges the components of the state-vector. This is followed by the ``blending matrix'' $G^A$ from Eq.~(\ref{eq:block-g-orig}), which forms linear combinations of pairs of components of the state-vector, with coefficients given by $g_+[j,m,s]$ and $g_-[j,m,s]$ in Eq.~(\ref{eq:g-plus-minus}). To illustrate this, the diagram shows the intervals $\chi_{\pm m2^{j}}$ (see Eq.~(\ref{eq:xicup})) that represent the intersection of supports of two wave atoms: $\mathrm{W}^{j}_{m'}$ on the left side and $\mathrm{W}^{j}_{m}$ on the right side. Finally the inverse permutation $\rho^{-1}$ is applied, followed by the operation $F^A$, which implements a Fourier transform adapted to the tree structure $T$, defined in Eq.~(\ref{equation:f-definition}). {On the parts of the state-vector that are shown here, the action of $F^A$ consists of two blocks: $F^A(j,m')$ (acting on the top half of the vector) and $F^A(j,m)$ (acting on the bottom half of the vector).} This produces the desired state $C^A|\hat{f}\rangle$.}
    \label{fig:decomposition-schema}
\end{figure}

Recall that the Shannon wavelet transform was implemented using a Fourier-like transform that had a block-diagonal structure determined by the wave packet tree $T$ (see Section \ref{section:quantum-shannon-wavelet-transform}). To implement the wave atom transform, one can use the same idea, combined with a new technique: redistribution of weights between elements of the vector $\hat{f}$, sometimes called ``blending'' \cite{ni2024quantumwavepackettransforms}. This is needed because of the overlaps between the supports of neighboring wave atoms over the frequency domain (which did not occur in the case of Shannon wavelets).

In this section, we will decompose the matrix $C^A$ (from Definition \ref{def:wave-atom-transform}) into a product of a blending operation, and a Fourier-like transform.

Classically, for a given node $\mathrm{W}^j_m$ in the tree $T$, the wave atom transform blends $2^j$ pairs of elements of the vector $\hat{f}$ into a vector of size $2^j$ and applies the Fourier transform of size $2^j$ to obtain the vector of wave atom coefficients $\{c^D_{j,m,n}\}_{n=0}^{2^j-1}$ \cite{DEMANET2007368}. Unless we consider border cases of $\mathrm{W}^j_m$, the transform blends $\hat{f}[k]$ and $\hat{f}[k']$, for all pairs $(k,k')$ of the form:
\begin{gather*}
(-m2^{j-1} + s, m2^{j-1} + s), \quad  \forall |s| \le \mu_0(j, m),\\
(-(m+1)2^{j-1} + s, (m+1)2^{j-1} + s), \quad  \forall |s| \le \mu_1(j, m).
\end{gather*}
(This can be seen by recalling Definition \ref{def:wave-atom-transform}, equations (\ref{eq:waveatomstilda})-(\ref{eq:waveatomstilda-shift}), and the fact that $\hat{\psi}^j_m$ is supported on two intervals in the frequency domain, as in equation (\ref{eq:waveatomssupport}).) 


In order to implement the blending efficiently, we permute elements of the vector $\hat{f}$ such that the pairs involved in the blending become adjacent to each other. 
Given that the encoding procedure \eqref{eq:encoding} orders frequencies by increasing absolute values, for some $\mathrm{W}^j_m$ and $|s| \le \mu_0(j, m)$, the pair of frequencies $-m2^{j-1} + s$ and $m2^{j-1} + s$ are not neighboring indices when encoded unless $s=0$. Therefore, the blending procedure becomes a three-stage process.
\begin{enumerate}
\item Permute elements of the vector $\hat{f}$ such that $-m2^{j-1}+s$ and $m2^{j-1}+s$ become neighbors by swapping $-m2^{j-1}+s$ with $-m2^{j-1}-s$.
\item Redistribute the weights between adjacent elements according to the function $g$.
\item Reverse the permutation in Step 1.
\end{enumerate}
Figure \ref{fig:decomposition-schema} schematically illustrates the decomposition of $C^A$ by tracking the actions on pairs $\pm m2^{j-1}\pm s$. In the following, we define the building blocks that are used to construct $C^A$.

\begin{definition}\label{def:wave atoms fourier}
For each $\mathrm{W}^j_m \in \Lambda_T$, consider matrix $F^A(j, m) \in \mathbb{C}^{2^j \times 2^j}$  defined by elements
\begin{equation}\label{equation:f-definition}
(F^A(j, m))_{n, c} = 2^{-j/2} e^{\i 2\pi 2^{-j} (n + \tfrac12) d(m2^j + c)} e^{(-1)^{c + 1} \i \alpha_m}, \text{ for } 0 \le n, c < 2^j.
\end{equation}
Let us define matrix $F^A \in \mathbb{C}^{N \times N}$ of \textit{Fourier transform adapted to tree structure} of $T$ as a composition of matrices $F^A(j, m)$,
\begin{equation}
F^A = \bigoplus_{\mathrm{W}^j_m \in \Lambda_T} F^A(j,m),
\end{equation}
that is to say,
\begin{equation}
F^A_{m2^j+n, m2^j+c} = (F^A(j,m))_{n, c}, \text{ for } \mathrm{W}^j_m \in \Lambda_T,\, 0 \le n, c < 2^j.
\end{equation}
\end{definition}

\begin{definition}\label{def:permutation-rho}
Let us define \textit{permutation of indices} $\rho: \{0, \ldots, N-1\}\mapsto \{0, \ldots, N-1\}$ by its non-trivial actions as
\begin{equation}\label{eq:perm-rho}
\forall \mathrm{W}^j_m \in \Lambda_T \text{ s.t. } \mathrm{W}^j_m \ne \mathrm{W}^{j_l}_0, \;\forall |s| \le \mu_0(j, m), \; \rho(m2^j+2s-1) = m2^j-2s-1.
\end{equation}
Define its permutation matrix $R \in \mathbb{C}^{N \times N}$ such that nonzero elements of the matrix are
$$
R_{\rho(i) i} = 1 \text{ for } i \in  \{0, \ldots, N-1\}.
$$
\end{definition}
Let us use the following shorthand for function $g$. For $j \ge 1, m \ge 0$ and $s \in \mathbb{Z}$,
\begin{equation}\label{eq:g-plus-minus}
\begin{aligned}
g_-[j, m, s] &= g((-1)^{m}(2^{-j+1}\pi |s| - \tfrac{\pi}2)),\\
g_+[j, m, s] &= \i (-1)^{I(s < 0)} g((-1)^{m+1}(2^{-j+1}\pi |s| + \tfrac{\pi}2)).
\end{aligned}
\end{equation}
\begin{definition}\label{def:matrix-g}
Let us define \textit{blending matrix} $G^A \in \mathbb{C}^{N \times N}$ as follows: for any $\mathrm{W}^j_m \in \Lambda_T$ such that $\mathrm{W}^j_m \ne \mathrm{W}^{j_l}_0$, and for any $s\in\mathbb{Z}$ such that $|s| \le \mu_0(j, m)$,
\begin{equation}\label{eq:block-g-orig}
\begin{pmatrix}
G^A_{m2^j+2s-1, m2^j+2s-1} & G^A_{m2^j+2s-1, m2^j+2s}\\
G^A_{m2^j+2s, m2^j+2s-1} & G^A_{m2^j+2s, m2^j+2s}
\end{pmatrix} = \begin{pmatrix}
g_-[j, m, s] & g_+[j, m, s] \\
g_+[j, m, s]  & g_-[j, m, s] 
\end{pmatrix},
\end{equation}
and $G^A$ is the identity on other diagonal blocks.
\end{definition}

\begin{theorem}\label{theorem:decomposition}
The matrix $C^A$ (from Definition \ref{def:wave-atom-transform}) is a product of matrices given by Definitions \ref{def:wave atoms fourier}, \ref{def:permutation-rho}, \ref{def:matrix-g},
\begin{equation}\label{eq:decomposition}
C^A = F^A R^{\dagger} G^A R.
\end{equation}
\end{theorem}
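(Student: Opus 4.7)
The plan is to verify the matrix identity $C^A = F^A R^\dagger G^A R$ column-by-column. Fix a column index $i$ and set $k = d(i)$. I would first trace the action of each factor on $\ket{i}$. Since $\rho$ is an involution on its nontrivial support (the swap $m2^j + 2s - 1 \leftrightarrow m2^j - 2s - 1$ is self-inverse), $R^\dagger = R$, and $R$ is designed so that the two encoded frequencies $e(\pm m 2^{j-1} + s)$, which are not adjacent in the standard encoding, end up on the adjacent pair of indices $\{m2^j + 2s - 1,\, m2^j + 2s\}$ where $G^A$ has a nontrivial $2 \times 2$ block. Finally $F^A$ acts block-diagonally on each leaf node's block.

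The proof then splits into cases using Proposition \ref{prop:in-support-of-two}. In the boundary cases $|k| \le \mu_0(j_l, 0)$ or $|k| > m_r 2^{j_r-1} + \mu_0(j_r, m_r)$, a direct check of Definitions \ref{def:permutation-rho} and \ref{def:matrix-g} shows that both $\rho$ and all $2 \times 2$ blocks of $G^A$ act trivially at $i$, so $F^A R^\dagger G^A R \ket{i} = F^A \ket{i}$. Matching row $m2^j + n$ with $C^A_{m2^j+n,\,i}$ then reduces to comparing equation \eqref{equation:f-definition} for $F^A(j,m)_{n,c}$ with the modified boundary form of $\overline{\tilde{\psi}^j_{m,n}(k)}$ from equation \eqref{eq:waveatomstilda}, where the phase $e^{(-1)^{k<0}\i\alpha_m}$ plays the role of the bump factor.

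In the generic case, $k$ lies in the overlap of two adjacent leaf nodes $\mathrm{W}^{j_L}_{m_L}$ and $\mathrm{W}^{j_R}_{m_R}$ with $(m_L + 1) 2^{j_L-1} = m_R 2^{j_R-1}$ by condition \eqref{eq:mu-condition}. Writing $k = \pm m_R 2^{j_R - 1} + s$ with $|s| \le \mu_0(j_R, m_R)$ pinpoints the relevant $2 \times 2$ block of $G^A$. Computing $R^\dagger G^A R \ket{i}$ produces a superposition of two basis states, one in the block of $\mathrm{W}^{j_L}_{m_L}$ and one in the block of $\mathrm{W}^{j_R}_{m_R}$, with coefficients $g_-[j_R, m_R, s]$ and $g_+[j_R, m_R, s]$. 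Applying $F^A$ and reading off row $m2^j + n$ of the result yields a phase $e^{\i 2\pi n 2^{-j} k}$ (from the shift part of $F^A$) times a linear combination of terms of the form $2^{-j/2} e^{\i \pi 2^{-j} k}\, e^{(-1)^{c+1}\i \alpha_m}\, g_\pm$, which should match $\overline{\tilde{\psi}^j_{m,n}(k)}$ after expanding $\hat{\psi}^0_m$ via equation \eqref{def:psi-hat} and conjugating.

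The main obstacle is the bookkeeping of signs and parities. The parity of the local column index $c = e(k) - m2^j$ encodes the sign of $k$ and controls the sign in $e^{(-1)^{c+1}\i \alpha_m}$, which must match the $e^{\pm \i \alpha_m}$ factors in the two bumps of $\hat{\psi}^0_m$ in \eqref{def:psi-hat}. The factor $(-1)^{I(s<0)}$ in $g_+$ together with the asymmetry property \eqref{prop:change_of_sign} of $g$ reconciles the case $k < 0$ with $k > 0$, so that the blending weights, which depend on $|s|$ only up to signs, correctly reproduce both bumps of $\hat{\psi}^0_m$ evaluated at positive and negative $k$. Once this identification is verified in the generic case, the boundary cases are a straightforward simplification, and the entry-wise identity $C^A = F^A R^\dagger G^A R$ follows.
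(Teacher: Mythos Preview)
Your column-by-column plan is essentially the dual of the paper's row-by-row computation: the paper fixes a leaf $(j,m)$ and splits the sum $\sum_i C^A_{m2^j+n,i}\hat f[d(i)]$ into a ``left interval'' piece around $\pm m2^{j-1}$ and a ``right interval'' piece around $\pm(m+1)2^{j-1}$ (Lemmas~\ref{prop:left-interval-sum} and~\ref{prop:right-interval-sum}), each of which is identified with the corresponding piece of $F^A R^\dagger G^A R$. The algebraic identities that arise are the same either way, so the choice of orientation is a matter of taste.

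There is one substantive point your sketch understates. You attribute property~\eqref{prop:change_of_sign} to reconciling $k<0$ with $k>0$, but that reconciliation only needs the trivial sign flip $g((-1)^m(-x))=g((-1)^{m+1}x)$ (the paper's identity~\eqref{eq:g-s-sign-switch}). The asymmetric scaling property~\eqref{prop:change_of_sign} is instead what makes the case $j_L\neq j_R$ go through: the $2\times 2$ block of $G^A$ at the interface is defined via the right node's parameters $(j_R,m_R)$, so its entries are values of $g$ at arguments $2^{-j_R+1}\pi|s|\pm\tfrac{\pi}{2}$; but the rows of $C^A$ in the left block involve $\overline{\hat\psi^{j_L}_{m_L}(k)}$, whose bump is $g$ evaluated at arguments scaled by $2^{-j_L+1}$. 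When $|j_L-j_R|=1$, Definition~\ref{def:wave-atom-admissible-tree} forces the parities of $m_L,m_R$ to line up precisely so that~\eqref{prop:change_of_sign} converts one scale into the other (this is the content of the paper's display~\eqref{eq:g-j+1-}--\eqref{eq:g-j+1+}). Your proposal does not flag this cross-scale matching, and without it the argument would only cover trees whose leaves all sit at the same level --- i.e., the uniform case, not the monotonic or parabolic trees that are the point of the paper.
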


Before proving Theorem \ref{theorem:decomposition}, we note that it implies the following:

\begin{corollary}\label{theorem:unitary}
The matrix $C^A$ is unitary.
\end{corollary}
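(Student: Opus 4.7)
The plan is to invoke Theorem \ref{theorem:decomposition} to write $C^A = F^A R^\dagger G^A R$ and verify that each of the four factors is unitary, whence $C^A$ is unitary as a product of unitaries. The permutation $\rho$ of Definition \ref{def:permutation-rho} is an involution whose nontrivial orbits are contained in disjoint index ranges (one per non-leftmost leaf node, which one can check from the identity $\mu_0(j,m)+\mu_1(j,m)=2^{j-1}-1$ and condition \eqref{eq:mu-condition}), so $R$ is a bona fide permutation matrix extended by the identity outside those ranges, hence unitary; so is $R^\dagger$. The substance therefore lies in checking that $F^A$ and $G^A$ are unitary.

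For $F^A = \bigoplus_{\mathrm{W}^j_m \in \Lambda_T} F^A(j,m)$, I would show each block is unitary by computing column inner products $\bigl(F^A(j,m)^\dagger F^A(j,m)\bigr)_{c,c'}$. After pulling out the phase $e^{\i \alpha_m((-1)^{c'+1} - (-1)^{c+1})}$, this reduces to
\begin{equation*}
2^{-j}\sum_{n=0}^{2^j-1} e^{\i 2\pi 2^{-j}(n+\tfrac12)\Delta}, \qquad \Delta = d(m2^j+c')-d(m2^j+c),
\end{equation*}
which equals $1$ when $\Delta \equiv 0 \pmod{2^j}$ and $0$ otherwise (the geometric sum vanishes because $\Delta \in \mathbb{Z}$). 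A short case analysis using the explicit formula for $d$ handles this: if $c,c'$ share parity then $|\Delta| \le 2^{j-1}-1$, forcing $\Delta = 0$; if they differ in parity then $\Delta \equiv \pm(c+c'+1)/2 \pmod{2^j}$ with $(c+c'+1)/2 \in \{1,\ldots,2^j-1\}$, which is never zero. Hence $F^A(j,m)^\dagger F^A(j,m) = I_{2^j}$.

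For $G^A$, which is block-diagonal with $2\times 2$ blocks $\begin{pmatrix} g_- & g_+ \\ g_+ & g_- \end{pmatrix}$ and identities elsewhere, it suffices to verify each such $2\times 2$ block is unitary. The off-diagonal condition $2\,\mathrm{Re}(\overline{g_-}g_+)=0$ is automatic because $g_-$ is real-valued while $g_+$ is purely imaginary (the explicit factor $\i$ in \eqref{eq:g-plus-minus} times a real value of $g$). The diagonal condition $|g_-|^2+|g_+|^2=1$ is where I expect the main obstacle: the arguments of $g$ must be massaged into the canonical form $\pi/2\pm w$ with $|w|\le\pi/3$ so that \eqref{prop:sum_of_squares} applies. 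For odd $m$ this is direct with $w=2^{-j+1}\pi|s|$, and the constraint $|s|\le\mu_0(j,m)=\lfloor 2^{j-1}/3\rfloor$ yields $w\le\pi/3$. For even $m$, a preliminary application of \eqref{prop:change_of_sign} rewrites both arguments in the form $\pi/2\mp 2^{-j}\pi|s|$, after which \eqref{prop:sum_of_squares} with $w=2^{-j}\pi|s|$ and $|s|\le\mu_0(j,m)=\lfloor 2^j/3\rfloor$ completes the verification, and the corollary follows.
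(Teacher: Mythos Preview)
Your proposal is correct and follows the same route as the paper: invoke Theorem~\ref{theorem:decomposition} and check that each factor $F^A$, $R^\dagger$, $G^A$, $R$ is unitary, with the unitarity of the $2\times 2$ blocks of $G^A$ coming from properties~\eqref{prop:sum_of_squares} and~\eqref{prop:change_of_sign}. The paper's own proof is a two-sentence remark asserting exactly this; your version simply supplies the explicit verifications (the geometric-sum/column-orthogonality check for $F^A(j,m)$ and the parity-of-$m$ case analysis for the $G^A$ blocks) that the paper leaves implicit.
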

\begin{prf}
This follows from the observation that each matrix in the product of equation \eqref{eq:decomposition} is unitary. The blocks of $G^A$ are unitary due to properties \eqref{prop:sum_of_squares} and \eqref{prop:change_of_sign} of the function $g$. {Note that $G^A$ will still be unitary, even if $g$ is only computed approximately (not exactly), when $g$ is computed using the methods in Section \ref{sec-approx-discret}.}
\end{prf}

We now prove Theorem \ref{theorem:decomposition}. First, we state two identities due to periodicity of the complex exponential function and the definition of $\alpha_m=\tfrac{\pi}2(m+\tfrac12)$:
\begin{align}
e^{\i 2\pi 2^{-j} (n+\tfrac12) (-m2^{j-1}+s)} e^{\i \alpha_m} &= \i e^{\i 2\pi 2^{-j} (n+\tfrac12) (m2^{j-1}+s)} e^{-\i \alpha_m}\label{eq:periodicity-m}
\\
e^{\i 2\pi 2^{-j}(n+\tfrac12)((m+1)2^{j-1} + s)} e^{-\i \alpha_m} &= \i e^{\i 2\pi 2^{-j}(n+\tfrac12)(-(m+1)2^{j-1} + s)} e^{\i \alpha_m}\label{eq:periodicity-m+1}
\end{align}
Also we state a simple identity: 
\begin{equation}\label{eq:g-s-sign-switch}
g((-1)^m(2^{-j+1}\pi (-s) - \tfrac{\pi}2)) = g((-1)^{m+1}(2^{-j+1}\pi s + \tfrac{\pi}2))
\end{equation}

Our goal is to implement the operation described in Definition \ref{def:wave-atom-transform}:
\begin{equation}
\sum_{i=0}^{N-1} C^A_{m2^j+n, i} \hat{f}[d(i)] = 
\sum_{i=0}^{N-1} \overline{\tilde{\psi}^j_{m, n}(d(i))} \hat{f}[d(i)].
\end{equation}
Recall that $\tilde{\psi}^j_m$ is related to $\hat{\psi}^j_m$ via equations (\ref{eq:waveatomstilda})-(\ref{eq:waveatomstilda-shift}), and $\hat{\psi}^j_m$ is supported on two intervals in the frequency domain, described by equation (\ref{eq:waveatomssupport}) and proposition \ref{prop:integer-support-psi}. This motivates us to state and prove the following two lemmas.
\begin{lemma}\label{prop:left-interval-sum}
For any $\mathrm{W}^j_m \ne \mathrm{W}^{j_l}_0$, 
\begin{equation}\label{eq:lemma-left}
\begin{gathered}
\sum_{|s| \le \mu_0(j, m)} \left( \overline{\hat{\psi}^j_{m, n}(m2^{j-1}+s)} \hat{f}[m2^{j-1}+s] + \overline{\hat{\psi}^j_{m, n}(-m2^{j-1}+s)} \hat{f}[-m2^{j-1}+s]\right) \\ = \sum_{i_k=m2^j}^{m2^j+2\mu_0(j, m)} F^A_{m2^j+n, i_k} \sum_{c = 0}^{N-1} G^A_{\rho(i_k),c} \hat{f}[d(\rho(c))].
\end{gathered}
\end{equation}
\end{lemma}
\vspace{-1.5em}
\begin{prf}
For $s \ge 0$ we use identity \eqref{eq:periodicity-m} for points on negative side.
\begin{align*}
\overline{\hat{\psi}^j_{m, n}(m2^{j-1}+s)} &= 2^{-j/2} e^{\i 2\pi 2^{-j} (n+\tfrac12) (m2^{j-1}+s)} e^{-\i \alpha_m} g((-1)^m(2^{-j+1}\pi |s|- \tfrac{\pi}2))\\
&= F^A_{m2^j+n, m2^j+2s} G^A_{\rho(m2^j+2s), m2^j+2s},\\
\overline{\hat{\psi}^j_{m, n}(-m2^{j-1}+s)} &= 2^{-j/2} e^{\i 2\pi 2^{-j} (n+\tfrac12) (m2^{j-1}+s)} e^{-\i \alpha_m} \i g((-1)^{m+1}(2^{-j+1}\pi |s| + \tfrac{\pi}2)) \\
&= F^A_{m2^j+n, m2^j+2s} G^A_{\rho(m2^j+2s), m2^j+2s-1}.
\end{align*}
For $s < 0$ we use the identity \eqref{eq:periodicity-m} for points on positive side. Moreover, we use the identity \eqref{eq:g-s-sign-switch} for points on both sides.
\begin{align*}
\overline{\hat{\psi}^j_{m, n}(m2^{j-1}+s)} &= 2^{-j/2}e^{\i 2\pi 2^{-j} (n+\tfrac12) (-m2^{j-1}+s)} e^{\i \alpha_m} (-\i) g((-1)^{m+1}(2^{-j+1}\pi |s| + \tfrac{\pi}2))\\
&= F^A_{m2^j+n, m2^j+2|s|-1} G^A_{\rho(m2^j+2|s|-1), m2^j-2|s|},\\
\overline{\hat{\psi}^j_{m, n}(-m2^{j-1}+s)} &= 2^{-j/2} e^{\i 2\pi 2^{-j} (n+\tfrac12) (-m2^{j-1}+s)} e^{\i \alpha_m} g((-1)^m(2^{-j+1}\pi |s| - \tfrac{\pi}2)) \\
&= F^A_{m2^j+n, m2^j+2|s|-1} G^A_{\rho(m2^j+2|s|-1), m2^j-2|s|-1}.
\end{align*}
Therefore, the left-hand side of equation \eqref{eq:lemma-left} becomes
\begin{align*}
\sum_{|s| \le \mu_0(j, m)} &\left( \overline{\hat{\psi}^j_{m, n}(m2^{j-1}+s)} \hat{f}[m2^{j-1}+s] + \overline{\hat{\psi}^j_{m, n}(-m2^{j-1}+s)} \hat{f}[-m2^{j-1}+s]\right) \\
&= \sum_{s=0}^{\mu_0(j, m)} F^A_{m2^j+n, m2^j+2s} \Big( G^A_{\rho(m2^j+2s), m2^j+s} \hat{f}[d(\rho(m2^j+2s))] \\ 
&+ G^A_{\rho(m2^j+2s), m2^j+2s-1} \hat{f}[d(\rho(m2^j+2s-1))] \Big)
\\
&+ \sum_{s=1}^{\mu_0(j, m)}  F^A_{m2^j+n, m2^j+2|s|-1} \Big( G^A_{\rho(m2^j+2|s|-1), m2^j-2|s|} \hat{f}[d(\rho(m2^j-2|s|))]  \\
&+ G^A_{\rho(m2^j+2|s|-1), m2^j-2|s|-1} \hat{f}[d(\rho(m2^j-2|s|-1))]\Big) \\
&= \sum_{s=0}^{2\mu_0(j, m)} F^A_{m2^j+n, m2^j+s} \sum_{c=0} G^A_{\rho(m2^j+s), c} \hat{f}[d(\rho(c))].
\end{align*}
The latter step is due to the block diagonal structure of matrix $G^A$ with blocks of size $2\times 2$.
\end{prf}

\begin{lemma}\label{prop:right-interval-sum}
For any $\mathrm{W}^j_m \ne \mathrm{W}^{j_r}_{m_r}$, 
\begin{equation}
\begin{gathered}\label{eq:lemma-right}
\begin{aligned}
\sum_{|s| \le \mu_1(j, m)} \Big( &\overline{\hat{\psi}^j_{m, n}((m+1)2^{j-1}+s)} \hat{f}[(m+1)2^{j-1}+s] \\ + &\overline{\hat{\psi}^j_{m, n}(-(m+1)2^{j-1}+s)} \hat{f}[-(m+1)2^{j-1}+s]\Big) 
\end{aligned}
\\ = \sum_{i_k=m2^j+2\mu_0(j, m)+1}^{(m+1)2^j-1} F^A_{m2^j+n, i_k} \sum_{c = 0}^{N-1} G^A_{\rho(i_k),c} \hat{f}[d(\rho(c))].
\end{gathered}
\end{equation}
\end{lemma}
\vspace{-1.5em}
\begin{prf}
Before reviewing the sum, we show that for any $\mathrm{W}^j_m \ne \mathrm{W}^{j_r}_{m_r}$ and $s\in\mathbb{Z}$ such that $|s| \le \mu_1(j, m)$
\begin{equation}\label{eq:g-m+1}
\begin{gathered}
\begin{pmatrix}
G^A_{(m+1)2^j+2s-1, (m+1)2^j+2s-1} & G^A_{(m+1)2^j+2s-1, (m+1)2^j+2s}\\
G^A_{(m+1)2^j+2s, (m+1)2^j+2s-1} & G^A_{(m+1)2^j+2s, (m+1)2^j+2s}
\end{pmatrix} \\ = \begin{pmatrix}
g_-[j, m+1, s] & g_+[j, m+1, s] \\
g_+[j, m+1, s]  & g_-[j, m+1, s] 
\end{pmatrix}.
\end{gathered}
\end{equation}
Denote the right neighbor of $\mathrm{W}^j_m$ by $\mathrm{W}^{j'}_{m'}$, then 
\begin{itemize}
    \item If $j'=j$, then $m'=m+1$ and \eqref{eq:g-m+1} is simply the definition the block of matrix $G^A$;
    \item If $j'=j+1$, then both $m$ and $m'$ are odd and by properties of the function $g$.
\begin{equation}\label{eq:g-j+1-}
\begin{split}
 g&((-1)^{m'}(2^{-j'+1}\pi s - \tfrac{\pi}2)) = g(- 2^{-j'+1}\pi s+\tfrac{\pi}2) 
 \\ &= g(-\tfrac{\pi}2+2^{-j+1}\pi s) = g((-1)^{m+1}(2^{-j+1}\pi s - \tfrac{\pi}2)),
\end{split}
\end{equation}
and
\begin{equation}\label{eq:g-j+1+}
\begin{split}
 g&((-1)^{m'+1}(2^{-j'+1}\pi s+\tfrac{\pi}2))= g(2^{-j'+1}\pi s+\tfrac{\pi}2) 
 \\ &= g(-\tfrac{\pi}2-2^{-j+1}\pi s) = g((-1)^m(2^{-j+1}\pi s + \tfrac{\pi}2)).
\end{split}
\end{equation}
    \item For $j'=j-1$, then $m$ and $m'$ are even and the derivation is analogous to case of $j'=j+1$.
\end{itemize}
For $s \ge 0$ we use the identity \eqref{eq:periodicity-m+1} for points on the positive side.
\begin{align*}
&\overline{\hat{\psi}^j_{m, n}((m+1)2^{j-1}+s)} \\ 
&\qquad= 2^{-j/2} e^{\i 2\pi 2^{-j} (n+\tfrac{1}{2}) (-(m+1)2^{j-1}+s)} e^{\i \alpha_m} \i g((-1)^m(2^{-j+1}\pi |s| + \tfrac{\pi}{2})) \\
&\qquad= F^A_{m2^j+n, (m+1)2^j-2s-1} G^A_{\rho((m+1)2^j-2s-1), (m+1)2^j+2s},\\
&\overline{\hat{\psi}^j_{m, n}(-(m+1)2^{j-1}+s)}\\
&\qquad= 2^{-j/2} e^{\i 2\pi 2^{-j} (n+\tfrac{1}{2}) (-(m+1)2^{j-1}+s)} e^{\i \alpha_m} g((-1)^{m+1}(2^{-j+1}\pi |s| - \tfrac{\pi}{2})) \\
&\qquad= F^A_{m2^j+n, (m+1)2^j-2s-1} G^A_{\rho((m+1)2^j-2s-1), (m+1)2^j+2s-1}.
\end{align*}
For $s < 0$ we use the identity \eqref{eq:periodicity-m+1} for points on the negative side. Moreover, we use the identity \eqref{eq:g-s-sign-switch} for points on both sides.
\begin{align*}
&\overline{\hat{\psi}^j_{m, n}((m+1)2^{j-1}+s)} \\ 
&\qquad= 2^{-j/2} e^{\i 2\pi 2^{-j} (n+\tfrac12) ((m+1)2^{j-1}+s)} e^{-\i \alpha_m} g((-1)^{m+1}(2^{-j+1}\pi |s| - \tfrac{\pi}2)) \\
&\qquad= F^A_{m2^j+n, (m+1)2^j-2|s|} G^A_{\rho((m+1)2^j-2|s|), (m+1)2^j-2|s|},\\
&\overline{\hat{\psi}^j_{m, n}(-(m+1)2^{j-1}+s)}\\
&\qquad= 2^{-j/2} e^{\i 2\pi 2^{-j} (n+\tfrac12) ((m+1)2^{j-1}+s)} e^{-\i \alpha_m} (-\i) g((-1)^m(2^{-j+1}\pi |s| + \tfrac{\pi}2)) \\
&\qquad= F^A_{m2^j+n, (m+1)2^j-2|s|} G^A_{\rho((m+1)2^j-2|s|), (m+1)2^j-2|s|-1}.
\end{align*}
Therefore, the left-hand side of equation \eqref{eq:lemma-right} becomes
\begin{align*}
&\sum_{|s| \le \mu_1(j, m)} \Big( \overline{\hat{\psi}^j_{m, n}((m+1)2^{j-1}+s)} \hat{f}[(m+1)2^{j-1}+s] \\ 
&+ \overline{\hat{\psi}^j_{m, n}(-(m+1)2^{j-1}+s)} \hat{f}[-(m+1)2^{j-1}+s]\Big) \\
&\qquad= \sum_{s=0}^{\mu_1(j, m)} F^A_{m2^j+n, (m+1)2^j-2s-1} \times \Big( G^A_{\rho((m+1)2^j-2s-1), (m+1)2^j+2s} \hat{f}[d(\rho((m+1)2^j+2s))] \\ 
&\qquad+ G^A_{\rho((m+1)2^j-2s-1), (m+1)2^j+2s-1} \hat{f}[d(\rho((m+1)2^j+2s-1))] \Big) \\
&\qquad+ \sum_{s=1}^{\mu_1(j, m)} F^A_{m2^j+n, (m+1)2^j-2s} \Big(  G^A_{\rho((m+1)2^j-2s), (m+1)2^j-2s} \hat{f}[d(\rho((m+1)2^j-2s))] \\ 
&\qquad+ G^A_{\rho((m+1)2^j-2s), (m+1)2^j-2s-1} \hat{f}[d(\rho((m+1)2^j-2s-1))] \Big)\\
&\qquad= \sum_{s=1}^{2\mu_1(j, m)+1} F^A_{m2^j+n, (m+1)2^j-s} \sum_{c = 0}^{N-1} G^A_{\rho((m+1)2^j-s,c} \hat{f}[d(\rho(c))]
\end{align*}
The latter step is due to the block diagonal structure of matrix $G^A$ with blocks of size $2\times 2$.
\end{prf}

\vskip 11pt

\begin{prf}[of Theorem \ref{theorem:decomposition}]
First, we consider two special cases. Observe that for $\mathrm{W}^{j_l}_0$ and $i_k \le 2\mu_0(j_l, 0)$ and $n\in \{0, \ldots, 2^{j_l}-1\}$ we have
$$
C^A_{n, i_k} = F^A_{n, i_k}  \text{ and } R_{i_k, i_k} = G^A_{i_k, i_k} = 1.
$$
Also for $\mathrm{W}^{j_r}_{m_r}$ and $i_k > m_r 2^{j_r} + 2\mu_0(j_r, m_r)$ and $n\in \{0, \ldots, 2^{j_r}-1\}$ we have
$$
C^A_{m_r2^{j_r}+n, i_k} = F^A_{m_r2^{j_r}+n,i_k} \text{ and } R_{i_k, i_k} = G^A_{i_k, i_k} = 1.
$$

The general case can be handled using Lemmas \ref{prop:left-interval-sum} and \ref{prop:right-interval-sum}, and Proposition \ref{prop:integer-support-psi}. Combining all of the above, we get
\begin{equation}\label{eq:sum-c-frgr-f}
C^A \sum_{c}^{N-1} \hat{f}[d(c)] \ket{c} = F^A R^{\dagger} G^A R \sum_{c}^{N-1} \hat{f}[d(c)] \ket{c}.
\end{equation}
Observe that \eqref{eq:sum-c-frgr-f} holds for any $\hat{f}$ and conclude the proof.
\end{prf}

In the following subsections, we discuss the implementation of each of the operators, $F^A$, $G^A$ and $R$.
\subsection{Quantum circuits for $F^A$}\label{section:block-diagonal-fourier}
\begin{figure}
    \centering
    \includegraphics[scale=0.5,trim= 0 0 866 0,clip]{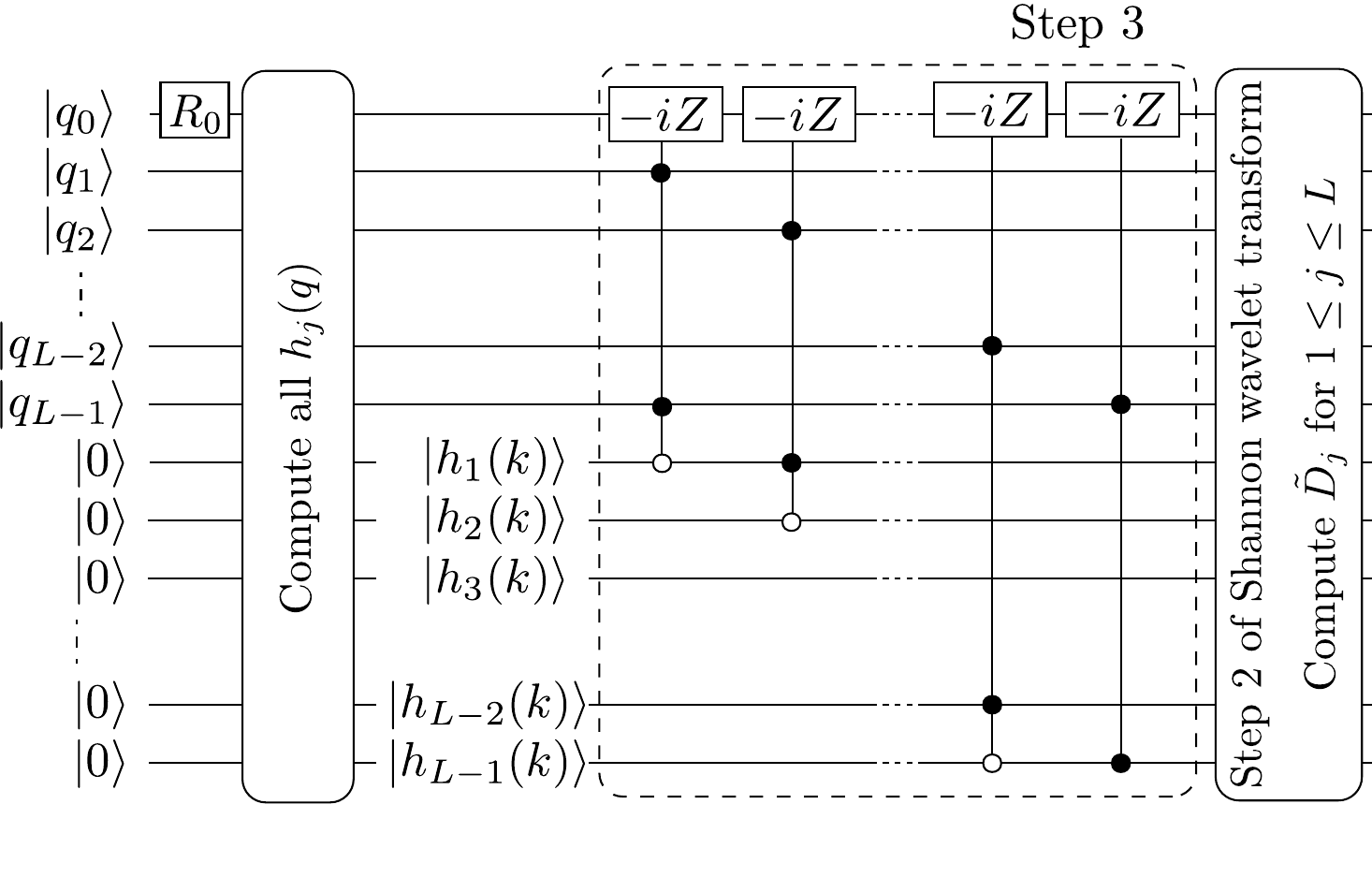}
    \\
    \includegraphics[scale=0.5,trim= 700 0 0 0,clip]{Wave-atoms/Figures/wavatom-circuit-f.pdf}
    \caption{{Quantum circuit implementation of the operation $F^A$} of size $2^L$. {The boxes labeled Step 3 and Step 5 apply the first two phase factors in Eq.~(\ref{F-block-decomp}). For more details, see the Supplementary Material. The third factor in Eq.~(\ref{F-block-decomp}) is handled using the same techniques as in the Shannon wavelet transform.} Steps 2-4 of the Shannon wavelet transform can be found in Figure \ref{fig:shannon-circuit}.}
    \label{fig:waveatoms-f-circuit}
\end{figure}
In this section we will implement the operation $F^A$ defined in (\ref{equation:f-definition}), for a variety of wave packet trees $T$. This is similar to the implementation of the Shannon wavelet transform in (\ref{equation:ShannonAdaptive}), modulo some technical details.

First, equation (\ref{equation:f-definition}) involves a complex phase factor that has a periodicity of $2^{j+1}$ (rather than $2^j$, in the case of (\ref{equation:ShannonAdaptive})):
$$
e^{\i 2^{-j} 2 \pi (n+\tfrac12) d(m2^j+c)} = e^{\i 2^{-j} 2 \pi (n+\tfrac12) (d(m2^j+c)+p2^{j+1})}.
$$ 
Recall that in the wave packet encoding \eqref{eq:row-encoding}, the $j$ least significant bits represent $c$ in binary form while the remaining bits represent $m$. So the $(j+1)$-th bit represents the parity of $m$, which determines the shape of the wavepacket (see Section \ref{section:wave-packets-background}). 

Recall functions $d$ and $\tilde{d}$ defined by \eqref{eq:decoding}, \eqref{local-decoding} correspondingly and observe that
\begin{equation} \label{eq:waveatomsdecoding}
d(m2^j+c) = \tilde{d}(c, j, m) + (-1)^{c} \lfloor \tfrac{m}2 \rfloor \cdot 2^{j} - I(c \text{ is odd})\cdot 2^{j}.
\end{equation}
Like in the case of Shannon wavelets, the decoding $\tilde{d}$ keeps us within the same block of size $2^j$. However, \eqref{eq:waveatomsdecoding} shows that some adjustment is required due to the change in periodicity. 
Then, each block of the matrix $F^A$ can be expressed as the product of three factors:
\begin{equation}
\label{F-block-decomp}
\begin{gathered}
(F^A(j,m))_{n, c} = \underbrace{e^{(-1)^c \i \pi \lfloor \tfrac{m}2 \rfloor} e^{-I(c\text{ is odd}) \i \pi} e^{(-1)^{c+1} \i \alpha_m}}_{r_m} \underbrace{e^{\i\cdot \pi 2^{-j} \tilde{d}(c, j, m)}}_{\text{Phase}} \underbrace{2^{-j/2} e^{\i \cdot 2 \pi 2^{-j} n \cdot \tilde{d}(c, j, m)}}_{\text{QFT}}.
\end{gathered}
\end{equation}
{The first two factors in Eq.~(\ref{F-block-decomp}) are new, compared to the analogous formula (\ref{equation:ShannonAdaptive}) for the Shannon wavelet transform.}

The first factor of \eqref{F-block-decomp} represents the necessary adjustment. 
Consider combination of values of $c$ and $m$: 
\begin{equation}
r_m(c) = \begin{cases}
e^{-\i \pi/4} \text{ if } m = c \bmod 2,\\
-e^{\i \pi/4} \text{ if } m \ne c \bmod 2.
\end{cases}
\end{equation}
For quantum realization, we would like to have a unitary corresponding different cases of $m$ and $c$. We may apply the following operators to the qubit corresponding to the least significant bit of $c$ which defines its parity.
\begin{equation}
R_0 = \begin{pmatrix}
e^{-\i\pi/4} & 0 \\
0 & -e^{\i \pi/4}
\end{pmatrix}, \qquad R_1 = R_0 \cdot (-i Z).
\end{equation}
We apply $R_0$ if $m$ is even and $R_1$ if $m$ is odd.
The complete procedure is 
implemented by the quantum circuit shown in Figure \ref{fig:waveatoms-f-circuit}. 
Additional details are presented in \ifthenelse{\boolean{originalformat}}{Algorithm \ref{algo:wave-atoms-fa} in Appendix \ref{appendix:alg-fa}}{the Supplemental Material}. 
The gate complexity of the algorithm is $O(L^2)$.

\subsection{Quantum circuits for $G^A$}\label{section:blending}
Recall Definition \ref{def:matrix-g}, where each block of matrix $G^A$ is defined by equation \eqref{eq:block-g-orig}. 
It will be convenient to shift the indices of this matrix by 1. In other words, let us define the block-diagonal matrix $\tilde{G}^A \in \mathbb{C}^{N \times N}$ as follows: for $\mathrm{W}^j_m \ne \mathrm{W}^{j_l}_0$ and $|s| \le \mu_0(j, m)$, 
\begin{equation}\label{eq:block-g}
\begin{pmatrix}
\tilde{G}^A_{m2^j+2s, m2^j+2s} & \tilde{G}^A_{m2^j+2s, m2^j+2s+1}\\
\tilde{G}^A_{m2^j+2s+1, m2^j+2s} & \tilde{G}^A_{m2^j+2s+1, m2^j+2s+1}
\end{pmatrix} = \begin{pmatrix}
g_-[j, m, s] & g_+[j, m, s] \\
g_+[j, m, s]  & g_-[j, m, s] 
\end{pmatrix}.
\end{equation}
This satisfies
$$
G^A = \mathrm{Adder}_L(1) \tilde{G}^A \mathrm{Adder}_L(-1),
$$
where $\mathrm{Adder}_L(\pm 1)$ is the unitary operation defined in equation \eqref{eq: adder}. 

Furthermore, the definition of $\tilde{G}^A$ involves both positive and negative shifts $s$ from $m2^j$. The challenge with this approach is that when we work with wave packets in $\mathrm{W}^{j}_{m}$, it requires knowing information about the left neighbor of $\mathrm{W}^{j}_{m}$ in the tree. It would be more practical if we could define the operation $\tilde{G}^A$ in a way that involves only positive shifts (which we denote by $n$), and only one leaf node $\mathrm{W}^{j}_{m}$ at a time. This can be done in the following way.

We can re-write the formula \eqref{eq:block-g} for $\tilde{G}^A$ in terms of $n \in \{0, \ldots, 2^{j-1} - 1\}$ for $\mathrm{W}^j_m \in \Lambda_T$. Then for $\mathrm{W}^j_m \ne \mathrm{W}^{j_l}_0$ and $0~\le~n~\le~\mu_0(j, m)$, we have the same expressions as in \eqref{eq:block-g}, but substituting $s \mapsto n$:
\begin{equation}
\begin{gathered}
\tilde{g}^{(0)}_-[j, m, n] = g\left((-1)^m \pi (2^{-j+1}n - \tfrac12) \right),\\
\tilde{g}^{(0)}_+[j, m, n] = \i g\left((-1)^{m+1} \pi (2^{-j+1}n + \tfrac12) \right),\\
\begin{pmatrix}
\tilde{G}^A_{m 2^j + 2n, m 2^j + 2n} & \tilde{G}^A_{m 2^j + 2n, m 2^j + 2n + 1}\\
\tilde{G}^A_{m 2^j + 2n+1, m 2^j + 2n} & \tilde{G}^A_{m 2^j + 2n+1, m 2^j + 2n + 1}
\end{pmatrix} = \begin{pmatrix}
\tilde{g}^{(0)}_-[j, m, n] & \tilde{g}^{(0)}_+[j, m, n]\\
\tilde{g}^{(0)}_+[j, m, n] & \tilde{g}^{(0)}_-[j, m, n]
\end{pmatrix},
\end{gathered}
\end{equation}
and for $\mathrm{W}^j_m \ne \mathrm{W}^{j_r}_{m_r}$ and $\mu_0(j, m) < n < 2^{j-1}$, we have the same expressions as in \eqref{eq:block-g}, but substituting $(j,m) \mapsto (j,m+1)$ and $s \mapsto n - 2^{j-1}$:
\begin{equation}
\begin{gathered}
\tilde{g}^{(1)}_-[j, m, n] = g\left((-1)^m \pi (2^{-j+1}n - \tfrac12) \right),\\
\tilde{g}^{(1)}_+[j, m, n] =  -\i g\left((-1)^{m+1}\pi (2^{-j+1}n - \tfrac32) \right),\\
\begin{pmatrix}
\tilde{G}^A_{m 2^j + 2n, m 2^j + 2n} & \tilde{G}^A_{m 2^j + 2n, m 2^j + 2n + 1}\\
\tilde{G}^A_{m 2^j + 2n+1, m 2^j + 2n} & \tilde{G}^A_{m 2^j + 2n+1, m 2^j + 2n + 1}
\end{pmatrix} = \begin{pmatrix}
\tilde{g}^{(1)}_-[j, m, n] & \tilde{g}^{(1)}_+[j, m, n]\\
\tilde{g}^{(1)}_+[j, m, n] & \tilde{g}^{(1)}_-[j, m, n]
\end{pmatrix}.
\end{gathered}
\end{equation}

So far, the discussion of wave atoms did not rely on the exact form of function $g$. Each block of matrix $\tilde{G}^A$ depends on the value of $2^{-j+1}n$, and there are $O(2^{j_m})$ distinct such values, where $j_m$ is the maximum level of leaf nodes in the tree. Therefore there are $O(2^{j_m})$ different types of blocks in $\tilde{G}^A$. So, in the general case, the number of gates in the implementation of $\tilde{G}^A$ without any approximation is $O(2^{j_m})$. However, the complexity can be reduced for certain choices of the function $g$.

\subsubsection{Efficient quantum circuits for $\tilde{G}^A$}
\label{section:efficient-GA}

We give an example of a function $g$ for which $\tilde{G}^A$ can be implemented exactly and efficiently. Consider the following function $g$:
\begin{equation}\label{superfunctiong}
g(w) = \begin{cases}
\cos(\tfrac38 w - \tfrac{\pi}{16}), \text{ if } -\tfrac76 \pi \le w \le \tfrac{\pi}6,\\
\cos(\tfrac34 w - \tfrac{\pi}{8}), \text{ if } \tfrac\pi6 < w \le \tfrac{5}{6} \pi,\\
0, \text{ otherwise}.
\end{cases}
\end{equation}
This is a simple example of a function that resembles an asymmetric ``bump'' and satisfies the requirements in equation \eqref{prop:sum_of_squares} and equation \eqref{prop:change_of_sign}. Other types of functions satisfying these requirements also can be considered. 

For this choice of $g$, let
\begin{equation}\label{eq:vjmn}
v(j, m, n) = \begin{cases}
\dfrac{\pi}{2^{j+2}}\left( 3n - 2^{j-1} \right) \cdot 2^{I(n \le \mu_0(j, m))}, \text{ if } m \text{ is odd},\\
\dfrac{\pi}{2^{j+2}}\left( 3n - 2^j \right) \cdot 2^{I(n > \mu_0(j, m))}, \text{ if } m \text{ is even},
\end{cases}
\end{equation}
then
\begin{equation}\label{eq:s-pauli-s}
\begin{split}
\begin{pmatrix}
\tilde{G}^A_{m 2^j + 2n, m 2^j + 2n} & \tilde{G}^A_{m 2^j + 2n, m 2^j + 2n + 1}\\
\tilde{G}^A_{m 2^j + 2n+1, m 2^j + 2n} & \tilde{G}^A_{m 2^j + 2n+1, m 2^j + 2n + 1}
\end{pmatrix}  
&= \begin{pmatrix}
\cos(v(j, m, n)) & \i \sin(v(j, m, n))\\
\i \sin(v(j, m, n)) & \cos(v(j, m, n))\\
\end{pmatrix} \\
&= S^{\dagger} \underbrace{\begin{pmatrix}
\cos(v(j, m, n)) & \sin(v(j, m, n))\\
-\sin(v(j, m, n)) & \cos(v(j, m, n))\\
\end{pmatrix}}_{\text{Linear Pauli Rotations}} S,
\end{split}
\end{equation}
where $S$ denotes the single-qubit gate 
$S = \begin{pmatrix}
    1 & 0\\0&\i
\end{pmatrix}$.

The middle factor in (\ref{eq:s-pauli-s}) can be implemented, by using the formulas (\ref{eq:vjmn}), and applying a controlled single-qubit gate of the form 
\begin{equation}\label{eqn:linear-pauli-rotation}
\ket{x}\ket{y} \mapsto \ket{x} \Bigl( \cos(ax+b)\ket{y} + (-1)^y \sin(ax+b)\ket{y \oplus 1} \Bigr), \text{ for } y \in \{0, 1\},
\end{equation}
which we call a ``linear Pauli rotation'' with slope $2a$ and offset $2b$. This can be implemented using the quantum circuit shown in Figure \ref{fig:linear-pauli}. This procedure is described in detail in \ifthenelse{\boolean{originalformat}}{Appendix \ref{appendix:alg-ga} (Algorithm \ref{algo:wave-atoms-tilde-g-old})}{the Supplementary Material}, and has gate complexity of $O(L^2)$. 
\begin{figure}
    \centering
    \begin{minipage}[c]{8.5cm}
    \includegraphics[width=0.95\linewidth]{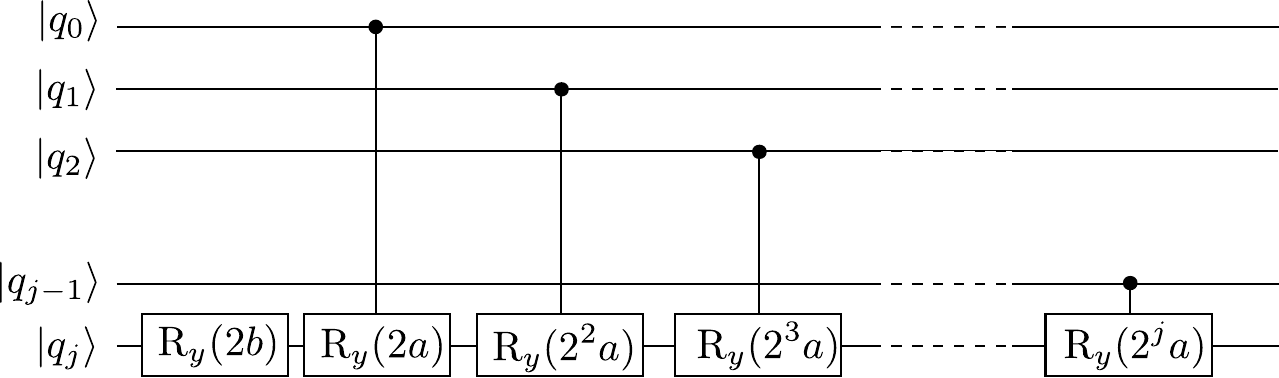}
    \end{minipage}
    \begin{minipage}[c]{4.0cm}
    $$
    \mathrm{R}_y(\theta) = \begin{pmatrix}
    \cos\left(\tfrac{\theta}2\right) & -\sin\left(\tfrac{\theta}2\right)\\
    \sin\left(\tfrac{\theta}2\right) & \cos\left(\tfrac{\theta}2\right)
    \end{pmatrix}
    $$
    \end{minipage}
    
    \caption{Quantum circuit that implements a linear Pauli rotation with slope $2a$ and offset $2b$, defined in equation (\ref{eqn:linear-pauli-rotation}).}
    \label{fig:linear-pauli}
\end{figure}


Finally, to implement $\tilde{G}^A$, we need to call the above procedure controlled by the predicates $I(q > 2\mu_0(j_l,0))$ and $I(q \le m_r2^{j_r}+2\mu_(j_r, m_r))$.
 
\subsection{Quantum circuits for $R$}\label{section:permutations}
Recall that, by Definition \ref{def:permutation-rho}, the action of $R$ is described by a permutation $\rho$, and $\rho$  permutes odd indices as follows. For any $\mathrm{W}^j_m \ne \mathrm{W}^{j_l}_0$ and $s\in \mathbb{Z}$ such that $|s| \le \mu_0(j, m)$,
$$
\rho(m2^j+2s-1) = m2^j-2s-1.
$$
Let us exclude even indices from consideration. We identify the set of odd indices $\{1,3,5,\ldots,N-1\}$ with the set of all indices up to $N/2-1$, that is $\{0,1,2,\ldots,N/2-1\}$. We let $\dot{\rho}$ denote the permutation acting on $\{0,1,2,\ldots,N/2-1\}$ that corresponds to $\rho$. Note that $\dot{\rho}$ is related to $\rho$ as follows:
\begin{equation}
\rho(2k-1) = 2\dot{\rho}(k) - 1. \label{eq:rho-acute-rho}
\end{equation}

Given an implementation of $\dot{\rho}$, one can implement the operation $R$ (i.e., the permutation $\rho$) as follows: given that
$$
\rho(2k+1) = \rho(2(k+1)-1) = 2\dot{\rho}(k+1)-1,
$$
the implementation of $R$ consists of: a shift by 2; the circuit that implements $\dot{\rho}$ on qubits $1,\ldots,L-1$, controlled by the parity of qubit 0; and a shift by $-2$. This transforms the initial state of $\ket{2k+1}$ into $\ket{2(k+1)+1}$; then the state after applying $\dot{\rho}$ is $\ket{2\dot{\rho}(k+1)+1}$, which we need to transform into $\ket{2\dot{\rho}(k+1)-1}$.

\subsubsection{Applying the permutation $\dot{\rho}$}

It remains for us to show how to apply the permutation $\dot{\rho}$. For a fixed leaf node $\mathrm{W}^j_m$ such that $\mathrm{W}^j_m\ne\mathrm{W}^{j_l}_0$, let us denote the permutation associated with leaf node $\mathrm{W}^j_m$ by $\dot{\rho}^j_m$. Then $\dot{\rho}^j_m$ has non-trivial action as 
\begin{equation}    
\dot{\rho}^j_m(m2^{j-1}+s) = m2^{j-1}-s \text{ for } s \in \mathbb{Z} \text{ s.t. } |s| \le \mu_0(j,m). 
\end{equation}

Combining all permutations $\dot{\rho}^j_m$ together we get the permutation $\dot{\rho}$:
\begin{gather}
\dot{\rho} = \prod_{\mathrm{W}^j_m \ne \mathrm{W}^{j_l}_0} \dot{\rho}^j_m.
\end{gather}
The order in the product is not important since all permutations $\dot{\rho}^j_m$ generated by wave atom admissible tree $T$ pair-wise commute. This follows from the observation that no two such permutations have non-trivial action on the same index. One can notice that equation \eqref{eq:rho-acute-rho} does not work in case of $k=N/2$ or $k=0$. However, $\rho$ and $\dot{\rho}$ act trivially on the boundaries, so we do not consider such $k$'s.

Each $\dot{\rho}^j_m$ acts non-trivially on $2\mu_0(j, m)$ indices. Let us combine permutations that share the same value of $2\mu_0(j, m)$. Suppose this common value is $2\mu_0(j_*, 0)$, for some $j_*$. Then let $\dot{\rho}^{j_*}$ denote the combined permutation. By the definition of $\mu_0$ in equation~\eqref{prop:integer support},
\begin{equation}\label{eq:acute-rho-j}
\dot{\rho}^{j_*} = \prod_{\substack{\mathrm{W}^j_m \ne \mathrm{W}^{j_l}_0 \\ \mu_0(j, m) = \mu_0(j_*, 0)}} \dot{\rho}^j_m = \prod_{\substack{\mathrm{W}^j_m \ne \mathrm{W}^{j_l}_0 \\ j-I(m\text{ is odd}) = j_*}} \dot{\rho}^j_m.
\end{equation}

For permutation ${\rho}$, let us define a helper function $\tilde{h}^\rho: \{0, \ldots, N/2-1\} \mapsto \{0, \ldots, L-2\}$ that says, for each index, which permutation $\dot{\rho}^{j_*}$  acts nontrivially on it. Formally, we have: 
\begin{equation}\label{eq:tilde-h-rho}
\begin{gathered}
\forall \mathrm{W}^j_m \ne \mathrm{W}^{j_l}_0 \text{ and } |s| \le \mu_0(j, m),\ \tilde{h}^{\rho}(m2^{j-1}+s) = j - I(m \text{ is odd}),\\
\text{and}\\
\tilde{h}^\rho(k) = 0 \text{ if } \Bigl( k \le \mu_0(j_l, 0) \text{ or }k > m_r2^{j_r-1}+\mu_0(j_r,m_r) \Bigr).
\end{gathered}
\end{equation}
Furthermore, for $j \in \{1, \ldots, L-1\}$, define a family of boolean functions  $h^{\rho}_j: \{0, \ldots, N/2-1\} \mapsto \{0, 1\}$ such that 
\begin{equation}
h^{\rho}_j(k) = I(h^{\rho}(k)=j).
\end{equation}
\begin{algorithm}[t!]
\caption{Applying permutation $\dot{\rho}^{j_*}$, given values of helper function $h^\rho_{j_*}(\cdot)$}\label{algo:acute-rho}
\noindent Input: $\sum_{q=0}^{2^{L-1}-1} \alpha_q \ket{q} \ket{h_{j_*}^\rho(q)}$, that is, a superposition over indices $\ket{q}$, with labels $\ket{h_{j_*}^\rho(q)}$ computed by the helper function $h^\rho_{j_*}(\cdot)$.\\
Result: $\sum_{q=0}^{2^{L-1}-1} \alpha_q \ket{\dot{\rho}^{j_*}(q)}\ket{h_{j_*}^\rho(q)}$.\\
Notation:\\
\indent\hspace{0.25cm}Quantum register $\mathcal{R}_q$ of $L-1$ qubits; $\mathcal{R}_q[i]$ denotes the $i$-th qubit, $i=0, \ldots, L-1$;\\
\indent\hspace{0.25cm}Quantum register $\mathcal{R}_h$ of $1$ qubit.
\begin{enumerate}
\item Apply $\mathrm{Adder}_{L-1}(\mu_0(j_*, 0))$ to $\mathcal{R}_q$ controlled on $\mathcal{R}_h=\ket{1}$.
\item For each $i=0,\ldots,j_*-2$ apply $X$-gate to $\mathcal{R}_q[i]$ controlled on $\mathcal{R}_h=\ket{1}$.
\item Apply $\mathrm{Adder}_{L-1}(-\mu_1(j_*, 0))$ to $\mathcal{R}_q$ controlled on $\mathcal{R}_h=\ket{1}$.
\end{enumerate}
\end{algorithm}

With such boolean helper functions, the quantum implementation of $\dot{\rho}^{j_*}$ in equation \eqref{eq:acute-rho-j} becomes a three-stage process listed in Algorithm \ref{algo:acute-rho}. One can notice that $\dot{\rho}$ keeps indices within their respective intervals centered around $m2^{j-1}$ for some $\mathrm{W}^j_m$. This means that
\begin{equation}
\tilde{h}^\rho(k)=\tilde{h}^\rho(\dot{\rho}(k)) \text{ for any } k \in \{0, \ldots, N/2-1\},
\end{equation}
which implies that the calculation of $h_j^\rho$'s can be reversed after $\dot{\rho}$ is applied. 

Finally, we show a way to compute the helper functions $h^\rho_j(\cdot)$, using the helper functions $h_j(\cdot)$ defined in \eqref{eq:helper-functions}. We start by examining a leaf node $\mathrm{W}^j_m \not\in \{\mathrm{W}^{j_l}_0, \mathrm{W}^{j_r}_{m_r}\}$. Consider an index $m2^{j-1}+n$ where $n\in\{0, \ldots, N/2-1\}$. There are two cases:  
\begin{itemize}
    \item If $n \le \mu_0(j, m)$, then $\tilde{h}^\rho(m2^{j-1}+n) = j - I(m \text{ is odd})$ as $m2^{j-1}+n$ is affected by the permutation $\dot{\rho}^j_m$.
    \item If $n > \mu_0(j, m)$, then $\tilde{h}^\rho(m2^{j-1}+n) = j - I(m \text{ is even})$ as $m2^{j-1}+n$ is affected by the permutation $\dot{\rho}^{j'}_{m'}$ where $\mathrm{W}^{j'}_{m'}$ is the right neighbor of $\mathrm{W}^j_m$. 
\end{itemize}
For $\mathrm{W}^j_m \in \{\mathrm{W}^{j_l}_0, \mathrm{W}^{j_r}_{m_r}\}$, in one of the above cases $\tilde{h}^\rho(m2^{j-1}+n)=0$ according to the definition in equation \eqref{eq:tilde-h-rho}. Finally, one can evaluate these expressions, by obtaining the value of $j$ from the functions $h_j(\cdot)$. The above procedure is described in detail in \ifthenelse{\boolean{originalformat}}{Appendix \ref{appendix:alg-r} (Algorithm \ref{algo:h-rho})}{the Supplementary Material}. Its gate complexity is $O(L^2)$. 

Altogether, the complexity of implementation of the permutation $\dot{\rho}$ is $O(L^2)$, given that we need to call the above procedure two times to compute and uncompute $h^\rho_j$'s, and call $O(L)$ times Algorithm \ref{algo:acute-rho} (we care about $2 \le j < L$).

\subsection{Computational Complexity}\label{section: complexity}
The classical complexity of the 1-D wave atom transform is the same as the complexity of the fast Fourier transform $O(N \log N)$ where $N$ is the number of points in discretization \cite{DEMANET2007368}. 

The quantum implementation of the wave atom transform that combines all the steps in decomposition
$$
C^A = F^A R^{\dagger} G^A R,
$$
depends on
\begin{enumerate}
    \item Efficient computation of the family of boolean functions $h_j(\cdot), j=\{1, \ldots, L-1\}$ for tree $T$,
    \item Efficient computation of blocks of $G^A$.
\end{enumerate}
Here, efficient computation means polynomial time in the number of qubits, in other words, polylogarithmic in $N$. The results are summarized in the following theorem.
\begin{theorem}[Quantum complexity of 1D wave atom transform]\label{theorem:complexity-general}
Take a wave atom admissible tree $T$ defined for discretization of size $O(2^L)$. If 
\begin{itemize}
\item function $g$ of the wave atom transform allows the implementation of matrix $G^A$ with gate complexity of $O(L^2)$,
\item the family of functions $h_j(\cdot)$ has a quantum implementation that requires $O(L^2)$ gates,
\end{itemize}
then the gate complexity of quantum wave atom transform generated by tree $T$ and function $g$ is $O(L^2)$.
\end{theorem}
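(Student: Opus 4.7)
\begin{prf}
The plan is to invoke the decomposition $C^A = F^A R^{\dagger} G^A R$ established in Theorem \ref{theorem:decomposition}, and bound the gate complexity of each of the four factors by $O(L^2)$. Since these factors act on $L$ qubits and are composed sequentially, their product inherits the $O(L^2)$ gate count, yielding the claimed complexity.

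First, I would analyze the implementation of $F^A$ as described in Section \ref{section:block-diagonal-fourier} (Algorithm \ref{algo: F-A}, Figure \ref{fig:waveatoms-f-circuit}). The block-diagonal structure of $F^A$ is realized by: (i) one evaluation of the helper functions $\{h_j(\cdot)\}_{j=1}^{L-1}$, which by hypothesis costs $O(L^2)$; (ii) the retaining decoding circuits $\tilde{D}_j$ of Algorithm \ref{algo:decode-j}, each costing $O(L)$, which are applied in a controlled fashion for $j = 1,\dots,L-1$, contributing $O(L^2)$; (iii) the adjustment gates $R_0$ and $R_1$ from the factorization in equation \eqref{F-block-decomp}, applied controlled on the parity bit of $m$ at constant cost per level, hence $O(L)$; and (iv) the controlled QFT rotations $Q^R_j$ and swaps $Q^S_j$ corresponding to each level, which sum to $O(L^2)$ gates overall. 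Finally, uncomputation of the $h_j$'s adds another $O(L^2)$. The total for $F^A$ is $O(L^2)$.

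Next, for the permutation $R$, I would appeal to Section \ref{section:permutations}. The operator $R$ reduces, via a shift by $\pm 2$ controlled on parity (a constant-cost wrapper, plus $O(L)$ for each $\mathrm{Adder}_L$ using the QFT-based implementation of \cite{draper2000additionquantumcomputer}), to the permutation $\dot{\rho}$ acting on $L-1$ qubits. In turn $\dot{\rho}$ is decomposed as $\dot{\rho} = \prod_{j_*=2}^{L-1} \dot{\rho}^{j_*}$, and each $\dot{\rho}^{j_*}$ is implemented by Algorithm \ref{algo:acute-rho} using two adders and $O(L)$ controlled $X$-gates, controlled on the helper bit $h^\rho_{j_*}$; this gives $O(L)$ gates per level and $O(L^2)$ in total. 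The helper functions $h^\rho_j(\cdot)$ themselves are computed by Algorithm \ref{algo:h-rho} from the $h_j(\cdot)$'s and a constant amount of arithmetic, for a total cost of $O(L^2)$ by the hypothesis on $h_j$. Uncomputation of the $h^\rho_j$'s is possible because $\tilde{h}^\rho(k) = \tilde{h}^\rho(\dot{\rho}(k))$, and doubles the count by a constant factor. Thus $R$ and $R^\dagger$ each cost $O(L^2)$.

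Finally, $G^A$ itself costs $O(L^2)$ by hypothesis. Composing the four factors gives
\begin{equation}
\text{cost}(C^A) \;\le\; \text{cost}(F^A) + 2\,\text{cost}(R) + \text{cost}(G^A) \;=\; O(L^2),
\end{equation}
which proves the theorem. The main subtlety I expect is bookkeeping around the computation and uncomputation of the helper registers: one must verify that the intermediate ancilla states holding $h_j(p)$ and $h^\rho_j(k)$ values can be cleanly uncomputed so that no garbage is left, and that each controlled block (especially the border-case controls for $\mathrm{W}^{j_l}_0$ and $\mathrm{W}^{j_r}_{m_r}$ in the $G^A$ step) is implemented at asymptotically negligible cost; all of these use standard comparators and $O(L)$-gate arithmetic, so they do not alter the $O(L^2)$ bound.
\end{prf}
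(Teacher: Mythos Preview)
Your proposal is correct and follows exactly the route the paper takes: the theorem is stated in Section \ref{section: complexity} as a summary of the preceding analysis, with the decomposition $C^A = F^A R^\dagger G^A R$ from Theorem \ref{theorem:decomposition} and the $O(L^2)$ bounds for $F^A$ (Section \ref{section:block-diagonal-fourier}) and $R$ (Section \ref{section:permutations}) supplying the pieces you invoke. One small slip: the QFT-based adder of \cite{draper2000additionquantumcomputer} costs $O(L^2)$, not $O(L)$ (see Section \ref{section:preliminaries}); the $O(L)$ adders are the ancilla-based ones of \cite{Vedral_1996,cuccaro2004newquantumripplecarryaddition}, and it is those that make Algorithm \ref{algo:acute-rho} cost $O(L)$ per level---but this does not change your overall $O(L^2)$ bound.
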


In Section \ref{section:blending}, we provided a concrete implementation of the matrix $G^A$ for a specific function $g$. In addition, recall from Section \ref{section: wave packet encoding} that the family of functions $h_j$ can be efficiently implemented for monotonic trees. This implies the following Corollary.
\begin{corollary}[Quantum complexity of 1D wave atom transform for monotonic trees]\label{theorem:complexity-specific}
Let $T$ be a monotonic wave atom admissible tree defined for discretization of size $O(2^L)$. If $T$ is represented by the leftmost node on each level, then the gate complexity of the quantum wave atom transform generated by the tree $T$ and the function $g$ defined in \eqref{superfunctiong} is $O(L^2)$. The complexity of the construction of the quantum circuit is $O(L^2)$.
\end{corollary}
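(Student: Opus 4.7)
The plan is to verify the two hypotheses of Theorem~\ref{theorem:complexity-general} for a monotonic wave atom admissible tree $T$ together with the specific $g$ from \eqref{eq: specific-form}, and then invoke the theorem directly.

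First I would verify the hypothesis on the helper functions $h_j$. Section~\ref{section: wave packet encoding} showed that when $T$ is monotonic and represented by the leftmost leaf index $m_j^*$ at each level, one has $h_j(p)=I(p\ge m_j^* 2^j)$. Each such predicate is an $L$-bit comparator against a constant, so by the discussion after Eq.~\eqref{eq: adder} it uses $O(L)$ gates, written into its own ancilla qubit. Ranging $j$ over $\{1,\ldots,L-1\}$ yields the total $O(L^2)$, matching the hypothesis required by Theorem~\ref{theorem:complexity-general}.

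Second I would verify the hypothesis on $G^A$. For the $g$ in \eqref{eq: specific-form}, Section~\ref{section:efficient-GA} established that every $2\times 2$ block of $\tilde{G}^A$ has the form $S^{\dagger}\,\mathrm{R}_y(2v(j,m,n))\,S$, where $v(j,m,n)$ is an affine function of $n$ whose slope and offset are determined by $j$, the parity of $m$, and the predicate $I(n\le\mu_0(j,m))$. This is exactly the linear Pauli rotation of Figure~\ref{fig:linear-pauli}, and Algorithm~\ref{algo: G-A} implements $\tilde{G}^A$ with $O(L^2)$ gates. Conjugating by $\mathrm{Adder}_L(\pm 1)$ then yields $G^A$ at the same cost, and the two boundary controls $I(q>2\mu_0(j_l,0))$ and $I(q\le m_r 2^{j_r}+2\mu_1(j_r,m_r))$ add only $O(L)$ gates each.

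With both hypotheses of Theorem~\ref{theorem:complexity-general} confirmed, the theorem yields gate complexity $O(L^2)$ for $C^A=F^A R^\dagger G^A R$. For the construction cost, the only classical precomputation is the length-$(L-1)$ list $(m_j^*)_j$ together with the constants $\mu_0(j_l,0)$, $\mu_1(j_r,m_r)$, $\alpha_{m_r}$ that parametrize the $O(L)$ comparators, adders, and linear Pauli rotations; these are extracted once from the $O(L^2)$-bit tree representation, and each of Algorithms \ref{algo:encoding}, \ref{algo:decode-j}, \ref{algo: F-A}, \ref{algo: G-A}, \ref{algo:acute-rho}, \ref{algo:h-rho} is produced by an explicit $O(L^2)$-size loop, giving total construction cost $O(L^2)$. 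The main obstacle is purely bookkeeping: tracking the boundary qubits that control the leftmost and rightmost leaves so the straightening in \eqref{eq:waveatomstilda} is realized correctly, and ensuring that every ancilla created for $h_j$, $h^\rho_j$, and the boundary predicates is uncomputed within the budget. Because $\tilde{h}^\rho(k)=\tilde{h}^\rho(\dot{\rho}(k))$ and the helper-function computation is itself $O(L^2)$, this uncomputation is obtained by simply reapplying the same helper circuits, so no step exceeds $O(L^2)$.
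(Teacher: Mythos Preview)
Your proposal is correct and follows essentially the same route as the paper: the paper derives the corollary directly from Theorem~\ref{theorem:complexity-general} by pointing to Section~\ref{section: wave packet encoding} for the $O(L^2)$ implementation of the helper functions $h_j$ on monotonic trees and to Section~\ref{section:efficient-GA} (Algorithm~\ref{algo: G-A}) for the $O(L^2)$ implementation of $G^A$ with the specific $g$. Your write-up is in fact more explicit than the paper's, which simply says ``this implies the following corollary'' without spelling out the construction-cost bookkeeping or the uncomputation of ancillas.
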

{Finally, note that the above result applies to parabolic scaling trees as a special case, as shown in Section \ref{section: Wave packet trees}.}

\ifthenelse{\boolean{originalformat}}{
 \section{Discussion}
}{
 \section{Discussion and Conclusion}
}
\label{section: conclusion}
The first result of this paper is a quantum implementation of Shannon wavelet transform with a hierarchical tree structure of shifts and scales (Theorem \ref{theorem:algorithms-shannon}). This construction made use of helper decoding functions $h_j(\cdot)$ (equation (\ref{eq:helper-functions})) that map indices to their respective levels in the wave packet tree. With such functions, the hierarchical quantum Shannon wavelet transform becomes a product of Fourier transforms of different scales. The type of wave packet trees dictates the circuit implementation of functions $h_j$. We demonstrated that for monotonic trees, the implementation of helper decoding functions is efficient and requires $O(L^2)$ gates, when implementing a wavelet transform of dimension $2^L$. Together, this allows us to achieve exponential speedup with the quantum gate complexity of $O(L^2)$ for the entire procedure of Shannon wavelet transform.

The wave atom transform with a hierarchical tree structure is more intricate as it requires ``blending,'' similar to \cite{ni2024quantumwavepackettransforms}, due to the overlaps between the supports of neighboring wave packets in frequency space \cite{DEMANET2007368, Villemoes2002WaveletPW}. A crucial contribution of this paper is to show how to implement blending that \textit{also follows the tree structure}, on a quantum computer, i.e., acting on quantum superposition states. Using this technique, we showed that for monotonic admissible trees with wave atoms satisfying some technical conditions, the quantum wave atom transform also achieves an exponential speedup with the gate complexity of $O(L^2)$ (Theorem \ref{theorem:complexity-general}). We demonstrated this exponential speedup for wave atoms constructed from a particular function $g(\cdot)$ (Corollary \ref{theorem:complexity-specific}). In particular, this family of wave atoms has ``parabolic scaling,'' which is useful for constructing sparse representations of solutions to wave equations \cite{DEMANET2007368}. 

Generalizing this construction is an interesting topic for future work.
In particular, can this construction be carried out efficiently for different choices of the function $g(\cdot)$ (besides the choice described in Section \ref{section:efficient-GA}), perhaps by adapting the techniques introduced recently by \cite{Bagherimehrab_2024}?

Using tree-structured wavelet bases, one has additional flexibility to construct wavelet bases with desirable properties. This suggests considering other wave packet trees that satisfy our admissibility conditions (Definition \ref{def:wave-atom-admissible-tree}). 
However, one is unlikely to find a single ``universal'' quantum algorithm that is efficient and applicable to all admissible trees. The general tree representation requires at least $O(2^L)$ bits as there are more than $2^{2^{L-1}}$ different admissible wave packet trees \cite{WaveletTourofSignalProcessing}. If there is a classical step, the decoding of the tree representation is already $\Omega(2^L)$.

Another potential future direction involves higher-dimensional wave packet transforms, which could help solve wave equations and systems of hyperbolic differential equations in general, as its basis offers a sparse representation of the solution operator \cite{Candes2003-vd,candes2004curveletrepresentationwavepropagators, DemanetThesis,Demanet2009-zt}. This paper shows a 1-dimensional wave atom transform, which can be extended to construct a multi-dimensional transform. For example, taking products of 1-D wavelet packets allows for constructing two-dimensional orthonormal basis functions with four bumps in the frequency plane. Although the construction is not a straightforward tensor product since both dimensions share the same scale \cite{DemanetThesis}, the results in 1-D are easily extendable to the 2-D case.

\vskip 11pt

\textbf{Acknowledgements:} 
{We thank the anonymous referees for several suggestions that improved the presentation of our results.} MP is grateful to Radu Balan for the discussion related to this paper, which restored belief in the best. 

This work is supported by a collaboration between the US DOE and other Agencies. This material is based upon work supported by the U.S. Department of Energy, Office of Science, Accelerated Research in Quantum Computing, Fundamental Algorithmic Research toward Quantum Utility (FAR-Qu). Additional support is acknowledged from NSF and NIST. 

MP also acknowledges funding and support from a NIST/UMD Joint Center for Quantum Information and Computer Science
(QuICS) Lanczos Graduate Fellowship, and a MathQuantum Graduate Fellowship (NSF award DMS-2231533).

Contributions by NIST employees are not subject to US copyright protection.


\bibliography{sn-bibliography}%

@article{childs2010quantum,
  title={Quantum algorithms for algebraic problems},
  author={Childs, Andrew M and Van Dam, Wim},
  journal={Reviews of Modern Physics},
  volume={82},
  number={1},
  pages={1--52},
  year={2010},
  publisher={APS}
}

@article{Villemoes2002WaveletPW,
  title={Wavelet packets with uniform time-frequency localization},
  author={Lars F. Villemoes},
  journal={Comptes Rendus Mathematique},
  year={2002},
  volume={335},
  pages={793-796},
  url={https://api.semanticscholar.org/CorpusID:121275300}
}

@article{DEMANET2007368,
title = {Wave atoms and sparsity of oscillatory patterns},
journal = {Applied and Computational Harmonic Analysis},
volume = {23},
number = {3},
pages = {368-387},
year = {2007},
issn = {1063-5203},
doi = {https://doi.org/10.1016/j.acha.2007.03.003},
url = {https://www.sciencedirect.com/science/article/pii/S1063520307000450},
author = {Laurent Demanet and Lexing Ying},
keywords = {Wave atoms, Image processing, Texture, Oscillatory, Warping, Diffeomorphism},
abstract = {We introduce “wave atoms” as a variant of 2D wavelet packets obeying the parabolic scaling wavelength∼(diameter)2. We prove that warped oscillatory functions, a toy model for texture, have a significantly sparser expansion in wave atoms than in other fixed standard representations like wavelets, Gabor atoms, or curvelets. We propose a novel algorithm for a tight frame of wave atoms with redundancy two, directly in the frequency plane, by the “wrapping” technique. We also propose variants of the basic transform for applications in image processing, including an orthonormal basis, and a shift-invariant tight frame with redundancy four. Sparsity and denoising experiments on both seismic and fingerprint images demonstrate the potential of the tool introduced.}
}

@book{WaveletTourofSignalProcessing,
author = {Mallat, Stéphane },
title = {A Wavelet Tour of Signal Processing, Third Edition: The Sparse Way},
year = {2008},
isbn = {0123743702},
publisher = {Academic Press, Inc.},
address = {USA},
edition = {3rd},
abstract = {Mallat's book is the undisputed reference in this field - it is the only one that covers the essential material in such breadth and depth. - Laurent Demanet, Stanford UniversityThe new edition of this classic book gives all the major concepts, techniques and applications of sparse representation, reflecting the key role the subject plays in today's signal processing. The book clearly presents the standard representations with Fourier, wavelet and time-frequency transforms, and the construction of orthogonal bases with fast algorithms. The central concept of sparsity is explained and applied to signal compression, noise reduction, and inverse problems, while coverage is given to sparse representations in redundant dictionaries, super-resolution and compressive sensing applications.Features:* Balances presentation of the mathematics with applications to signal processing* Algorithms and numerical examples are implemented in WaveLab, a MATLAB toolbox* Companion website for instructors and selected solutions and code available for studentsNew in this edition* Sparse signal representations in dictionaries* Compressive sensing, super-resolution and source separation* Geometric image processing with curvelets and bandlets* Wavelets for computer graphics with lifting on surfaces* Time-frequency audio processing and denoising* Image compression with JPEG-2000* New and updated exercisesA Wavelet Tour of Signal Processing: The Sparse Way, third edition, is an invaluable resource for researchers and R&D engineers wishing to apply the theory in fields such as image processing, video processing and compression, bio-sensing, medical imaging, machine vision and communications engineering.Stephane Mallat is Professor in Applied Mathematics at cole Polytechnique, Paris, France. From 1986 to 1996 he was a Professor at the Courant Institute of Mathematical Sciences at New York University, and between 2001 and 2007, he co-founded and became CEO of an image processing semiconductor company.Companion website: A Numerical Tour of Signal Processing Includes all the latest developments since the book was published in 1999, including itsapplication to JPEG 2000 and MPEG-4Algorithms and numerical examples are implemented in Wavelab, a MATLAB toolboxBalances presentation of the mathematics with applications to signal processing}
}

@phdthesis{DemanetThesis,
  doi = {10.7907/1TEF-RQ51},
  url = {https://resolver.caltech.edu/CaltechETD:etd-05262006-133555},
  author = {Demanet,  Laurent},
  keywords = {microlocal analysis,  wave propagation,  harmonic analysis,  Applied And Computational Mathematics,  numerical analysis},
  language = {en},
  title = {Curvelets,  Wave Atoms,  and Wave Equations},
  publisher = {California Institute of Technology},
  year = {2006},
  copyright = {No commercial reproduction,  distribution,  display or performance rights in this work are provided.}
}

@book{10lectures,
author = {Daubechies, Ingrid},
title = {Ten lectures on wavelets},
year = {1992},
isbn = {0898712742},
publisher = {Society for Industrial and Applied Mathematics},
address = {USA}
}

@ARTICLE{Demanet2009-zt,
  title     = "Wave atoms and time upscaling of wave equations",
  author    = "Demanet, Laurent and Ying, Lexing",
  journal   = "Numer. Math. (Heidelb.)",
  publisher = "Springer Science and Business Media LLC",
  volume    =  113,
  number    =  1,
  pages     = "1--71",
  month     =  jul,
  year      =  2009,
  language  = "en"
}

@article{candes2004curveletrepresentationwavepropagators,
  title={The curvelet representation of wave propagators is optimally sparse},
  author={Candes, Emmanuel J and Demanet, Laurent},
  journal={Communications on Pure and Applied Mathematics},
  volume={58},
  number={11},
  pages={1472--1528},
  year={2005},
  publisher={Wiley Online Library}
}

@misc{coppersmith2002approximatefouriertransformuseful,
      title={An approximate Fourier transform useful in quantum factoring}, 
      author={D. Coppersmith},
      year={2002},
      eprint={quant-ph/0201067},
      archivePrefix={arXiv},
      primaryClass={quant-ph},
      url={https://arxiv.org/abs/quant-ph/0201067}, 
}

@misc{fijany1998quantumwavelettransformsfast,
      title={Quantum Wavelet Transforms: Fast Algorithms and Complete Circuits}, 
      author={Amir Fijany and Colin P. Williams},
      year={1998},
      eprint={quant-ph/9809004},
      archivePrefix={arXiv},
      primaryClass={quant-ph},
      url={https://arxiv.org/abs/quant-ph/9809004}, 
}

@ARTICLE{Daubechies1993-jm,
  title     = "Orthonormal bases of compactly supported wavelets {II}.
               Variations on a theme",
  author    = "Daubechies, Ingrid",
  journal   = "SIAM J. Math. Anal.",
  publisher = "Society for Industrial \& Applied Mathematics (SIAM)",
  volume    =  24,
  number    =  2,
  pages     = "499--519",
  month     =  mar,
  year      =  1993,
  language  = "en"
}

@article{Ricker,
    author = "Ryan, H",
    title = "Ricker, Ormsby, Klander, Butterworth – A Choice of Wavelets",
    journal = "CSEG Recorder",
    volume = 19,
    number = 7,
    pages = "8--9",
    year = 1994
}

@article{Gabor1946,
  title = {Theory of communication. Part 1: The analysis of information},
  volume = {93},
  ISSN = {2054-0604},
  url = {http://dx.doi.org/10.1049/ji-3-2.1946.0074},
  DOI = {10.1049/ji-3-2.1946.0074},
  number = {26},
  journal = {Journal of the Institution of Electrical Engineers - Part III: Radio and Communication Engineering},
  publisher = {Institution of Engineering and Technology (IET)},
  author = {Gabor,  D.},
  year = {1946},
  month = nov,
  pages = {429–441}
}

@misc{draper2000additionquantumcomputer,
      title={Addition on a Quantum Computer}, 
      author={Thomas G. Draper},
      year={2000},
      eprint={quant-ph/0008033},
      archivePrefix={arXiv},
      primaryClass={quant-ph},
      url={https://arxiv.org/abs/quant-ph/0008033}, 
}

@inproceedings{liu2009quantumalgorithmsusingcurvelet,
  title={Quantum algorithms using the curvelet transform},
  author={Liu, Yi-Kai},
  booktitle={Proceedings of the forty-first annual ACM symposium on Theory of computing},
  pages={391--400},
  year={2009}
}

@article{ni2024quantumwavepackettransforms,
  title={Quantum Wave Packet Transforms with Compact Frequency Support: Implementations for Wavelets and Gabor Atoms},
  author={Ni, Hongkang and Ying, Lexing},
  journal={Applied and Computational Harmonic Analysis},
  pages={101850},
  year={2025},
  publisher={Elsevier}
}

@article{Bagherimehrab_2024,
   title={Efficient quantum algorithm for all quantum wavelet transforms},
   volume={9},
   ISSN={2058-9565},
   url={http://dx.doi.org/10.1088/2058-9565/ad3d7f},
   DOI={10.1088/2058-9565/ad3d7f},
   number={3},
   journal={Quantum Science and Technology},
   publisher={IOP Publishing},
   author={Bagherimehrab, Mohsen and Aspuru-Guzik, Alán},
   year={2024},
   month=apr, pages={035010} }

@misc{hoyer1997efficientquantumtransforms,
      title={Efficient Quantum Transforms}, 
      author={Peter Hoyer},
      year={1997},
      eprint={quant-ph/9702028},
      archivePrefix={arXiv},
      primaryClass={quant-ph},
      url={https://arxiv.org/abs/quant-ph/9702028}, 
}

@article{Arguello,
author = {Argüello, Francisco},
year = {2009},
month = {05},
pages = {414-422},
title = {Quantum wavelet transforms of any order.},
volume = {9},
journal = {Quantum Information \& Computation}
}

@misc{liu2023uncertaintyprinciplecurvelettransform,
      title={An Uncertainty Principle for the Curvelet Transform, and the Infeasibility of Quantum Algorithms for Finding Short Lattice Vectors}, 
      author={Yi-Kai Liu},
      year={2023},
      eprint={2310.03735},
      archivePrefix={arXiv},
      primaryClass={quant-ph},
      url={https://arxiv.org/abs/2310.03735}, 
}

@ARTICLE{Li2019-mp,
  title     = "Quantum multi-level wavelet transforms",
  author    = "Li, Hai-Sheng and Fan, Ping and Xia, Hai-Ying and Song, Shuxiang",
  journal   = "Inf. Sci. (Ny)",
  publisher = "Elsevier BV",
  volume    =  504,
  pages     = "113--135",
  month     =  dec,
  year      =  2019,
  language  = "en"
}

@ARTICLE{Zhang2022-bt,
  title     = "Wavelet matrix operations and quantum transforms",
  author    = "Zhang, Zhiguo and Kon, Mark A",
  journal   = "Appl. Math. Comput.",
  publisher = "Elsevier BV",
  volume    =  428,
  number    =  127179,
  pages     = "127179",
  month     =  sep,
  year      =  2022,
  language  = "en"
}

@article{Kiani_2022,
   title={Quantum advantage for differential equation analysis},
   volume={105},
   ISSN={2469-9934},
   url={http://dx.doi.org/10.1103/PhysRevA.105.022415},
   DOI={10.1103/physreva.105.022415},
   number={2},
   journal={Physical Review A},
   publisher={American Physical Society (APS)},
   author={Kiani, Bobak Toussi and De Palma, Giacomo and Englund, Dirk and Kaminsky, William and Marvian, Milad and Lloyd, Seth},
   year={2022},
   month=feb }

@misc{bagherimehrab2023fastquantumalgorithmdifferential,
      title={Fast quantum algorithm for differential equations}, 
      author={Mohsen Bagherimehrab and Kouhei Nakaji and Nathan Wiebe and Alán Aspuru-Guzik},
      year={2023},
      eprint={2306.11802},
      archivePrefix={arXiv},
      primaryClass={quant-ph},
      url={https://arxiv.org/abs/2306.11802}, 
}

@ARTICLE{Candes2003-vd,
  title     = "Curvelets and Fourier integral operators",
  author    = "Cand{\`e}s, Emmanuel and Demanet, Laurent",
  abstract  = "A recent body of work introduced new tight-frames of curvelets
               E. Cand{\`e}s, D. Donoho, in: (i) Curvelets -- a suprisingly
               effective nonadaptive representation for objects with edges (A.
               Cohen, C. Rabut, L. Schumaker (Eds.)), Vanderbilt University
               Press, Nashville, 2000, pp. 105--120; (ii)
               http://www.acm.caltech.edu/~emmanuel/publications.html, 2002 to
               address key problems in approximation theory and image
               processing. This paper shows that curvelets essentially provide
               optimally sparse representations of Fourier Integral Operators.
               To cite this article: E. Cand{\`e}s, L. Demanet, C. R. Acad.
               Sci. Paris, Ser. I 336 (2003).Une s{\'e}rie de r{\'e}cents
               articles ont introduit l'analyse en curvelets E. Cand{\`e}s, D.
               Donoho, in : (i) Curvelets -- a surprisingly effective
               nonadaptive representation for objects with edges (A. Cohen, C.
               Rabut, L. Schumaker (Eds.)), Vanderbilt University Press,
               Nashville, 2000, pp. 105--120 ; (ii)
               http://www.acm.caltech.edu/~emmanuel/publications.html, 2002 :
               les curvelets offrent une repr{\'e}sentation multi-{\'e}chelle
               qui ouvre de nouvelles perspectives pour l'analyse de
               probl{\`e}mes importants en th{\'e}orie de l'approximation et en
               traitement de l'image. Cet article montre que les curvelets
               permettent une repr{\'e}sentation optimale de la classe des
               op{\'e}rateurs int{\'e}graux de Fourier. Par « optimale », nous
               entendons par exemple, la plus {\'e}conome. Pour citer cet
               article : E. Cand{\`e}s, L. Demanet, C. R. Acad. Sci. Paris,
               Ser. I 336 (2003).",
  journal   = "C. R. Math. Acad. Sci. Paris",
  publisher = "Cellule MathDoc/Centre Mersenne",
  volume    =  336,
  number    =  5,
  pages     = "395--398",
  month     =  mar,
  year      =  2003,
  language  = "en"
}

@article{Harrow_2009,
   title={Quantum Algorithm for Linear Systems of Equations},
   volume={103},
   ISSN={1079-7114},
   url={http://dx.doi.org/10.1103/PhysRevLett.103.150502},
   DOI={10.1103/physrevlett.103.150502},
   number={15},
   journal={Physical Review Letters},
   publisher={American Physical Society (APS)},
   author={Harrow, Aram W. and Hassidim, Avinatan and Lloyd, Seth},
   year={2009},
   month=oct }

@article{morales2024quantum,
  title={Quantum linear system solvers: A survey of algorithms and applications},
  author={Morales, Mauro ES and Pira, Lirand{\"e} and Schleich, Philipp and Koor, Kelvin and Costa, Pedro and An, Dong and Aspuru-Guzik, Al{\'a}n and Lin, Lin and Rebentrost, Patrick and Berry, Dominic W},
  journal={arXiv preprint arXiv:2411.02522},
  year={2024}
}

@article{berry2014high,
  title={High-order quantum algorithm for solving linear differential equations},
  author={Berry, Dominic W},
  journal={Journal of Physics A: Mathematical and Theoretical},
  volume={47},
  number={10},
  pages={105301},
  year={2014},
  publisher={IOP Publishing}
}

@article{childs2018toward,
  title={Toward the first quantum simulation with quantum speedup},
  author={Childs, Andrew M and Maslov, Dmitri and Nam, Yunseong and Ross, Neil J and Su, Yuan},
  journal={Proceedings of the National Academy of Sciences},
  volume={115},
  number={38},
  pages={9456--9461},
  year={2018},
  publisher={National Academy of Sciences}
}

@article{jennings2024cost,
  title={The cost of solving linear differential equations on a quantum computer: fast-forwarding to explicit resource counts},
  author={Jennings, David and Lostaglio, Matteo and Lowrie, Robert B and Pallister, Sam and Sornborger, Andrew T},
  journal={Quantum},
  volume={8},
  pages={1553},
  year={2024},
  publisher={Verein zur F{\"o}rderung des Open Access Publizierens in den Quantenwissenschaften}
}

@article{clader2013preconditioned,
  title={Preconditioned quantum linear system algorithm},
  author={Clader, B David and Jacobs, Bryan C and Sprouse, Chad R},
  journal={Physical review letters},
  volume={110},
  number={25},
  pages={250504},
  year={2013},
  publisher={APS}
}

@inproceedings{berry2014exponential,
  title={Exponential improvement in precision for simulating sparse Hamiltonians},
  author={Berry, Dominic W and Childs, Andrew M and Cleve, Richard and Kothari, Robin and Somma, Rolando D},
  booktitle={Proceedings of the forty-sixth annual ACM symposium on Theory of computing},
  pages={283--292},
  year={2014}
}

@article{costa2019quantum,
  title={Quantum algorithm for simulating the wave equation},
  author={Costa, Pedro CS and Jordan, Stephen and Ostrander, Aaron},
  journal={Physical Review A},
  volume={99},
  number={1},
  pages={012323},
  year={2019},
  publisher={APS}
}

@article{Vedral_1996,
   title={Quantum networks for elementary arithmetic operations},
   volume={54},
   ISSN={1094-1622},
   url={http://dx.doi.org/10.1103/PhysRevA.54.147},
   DOI={10.1103/physreva.54.147},
   number={1},
   journal={Physical Review A},
   publisher={American Physical Society (APS)},
   author={Vedral, Vlatko and Barenco, Adriano and Ekert, Artur},
   year={1996},
   month=jul, pages={147–153} }

@misc{cuccaro2004newquantumripplecarryaddition,
      title={A new quantum ripple-carry addition circuit}, 
      author={Steven A. Cuccaro and Thomas G. Draper and Samuel A. Kutin and David Petrie Moulton},
      year={2004},
      eprint={quant-ph/0410184},
      archivePrefix={arXiv},
      primaryClass={quant-ph},
      url={https://arxiv.org/abs/quant-ph/0410184}, 
}

@article{Demanet2012,
author = {Demanet, Laurent and Létourneau, Pierre-David and Boumal, Nicolas and Calandra, H. and Chiu, Jiawei and Snelson, Stanley},
year = {2012},
month = {03},
pages = {155-168},
title = {Matrix probing: A randomized preconditioner for the wave-equation Hessian},
volume = {32},
journal = {Applied and Computational Harmonic Analysis},
doi = {10.1016/j.acha.2011.03.006}
}

@article{Jozsa_1998,
   title={Quantum algorithms and the Fourier transform},
   volume={454},
   ISSN={1471-2946},
   url={http://dx.doi.org/10.1098/rspa.1998.0163},
   DOI={10.1098/rspa.1998.0163},
   number={1969},
   journal={Proceedings of the Royal Society of London. Series A: Mathematical, Physical and Engineering Sciences},
   publisher={The Royal Society},
   author={Jozsa, R.},
   year={1998},
   month=jan, pages={323–337} }

@INPROCEEDINGS{Klappenecker1999-ol,
  title      = {Wavelets and wavelet packets on quantum computers},
  booktitle  = {Wavelet Applications in Signal and Image Processing {VII}},
  author     = {Klappenecker, Andreas},
  editor     = {Unser, Michael A and Aldroubi, Akram and Laine, Andrew F},
  abstract   = {We show how periodized wavelet packet transforms and periodized
                wavelet transforms can be implemented on a quantum computer.
                Surprisingly, we find that the implementation of wavelet packet
                transforms is less costly than the implementation of wavelet
                transforms on a quantum computer.},
  publisher  = {SPIE},
  address={Bellingham, WA},
  month      =  {Oct},
  year       =  {1999},
  conference = {SPIE's International Symposium on Optical Science, Engineering, and Instrumentation},
  location   = {Denver, CO}
}

@article{Beals_2013,
   title={Efficient distributed quantum computing},
   volume={469},
   ISSN={1471-2946},
   url={http://dx.doi.org/10.1098/rspa.2012.0686},
   DOI={10.1098/rspa.2012.0686},
   number={2153},
   journal={Proceedings of the Royal Society A: Mathematical, Physical and Engineering Sciences},
   publisher={The Royal Society},
   author={Beals, Robert and Brierley, Stephen and Gray, Oliver and Harrow, Aram W. and Kutin, Samuel and Linden, Noah and Shepherd, Dan and Stather, Mark},
   year={2013},
   month=may, pages={20120686} }

@article{Harrow_2002,
   title={Efficient discrete approximations of quantum gates},
   volume={43},
   ISSN={1089-7658},
   url={http://dx.doi.org/10.1063/1.1495899},
   DOI={10.1063/1.1495899},
   number={9},
   journal={Journal of Mathematical Physics},
   publisher={AIP Publishing},
   author={Harrow, Aram W. and Recht, Benjamin and Chuang, Isaac L.},
   year={2002},
   month=sep, pages={4445–4451} }

@book{Nielsen_Chuang_2010, place={Cambridge}, title={Quantum Computation and Quantum Information: 10th Anniversary Edition}, publisher={Cambridge University Press}, address={Cambridge, UK}, author={Nielsen, Michael A. and Chuang, Isaac L.}, year={2010}}

@incollection{Coifman,
  author = {Ronald R. Coifman and Yves Meyer and Victor Wickerhauser},
  booktitle = {Signal Processing, Part I: Signal Processing Theory},
  editor = {Louis Auslander and Tom Kailath and Sanjoy K. Mitter},
  url = {\url{http://citeseer.nj.nec.com/coifman92wavelet.html}},
  address = {New York, NY},
  title = {Wavelet Analysis and Signal Processing},
  publisher = {Springer-Verlag},
  year = {1990},
  pages = {59--68},
}


\newboolean{showappendix}
\setboolean{showappendix}{true} 

\ifthenelse{\boolean{showappendix}}{

\ifthenelse{\boolean{originalformat}}{
 \begin{appendices}



\ifthenelse{\boolean{originalformat}}{
 \section{Encoding and decoding circuits (see section \ref{section:frequency-encoding})}
}{
 \section{Encoding and decoding circuits}
}
\label{section:decoding-circuit-proof}

\begin{algorithm}
\caption{QFT and frequency encoding}\label{algo:encoding}
\noindent Input: $\sum_{q=0}^{2^L-1} f[q] \ket{q}$.\\
\noindent Result: $\sum_{q=0}^{2^L-1} \hat{f}[d(q)]\ket{q}$.
\begin{enumerate}
    \item Apply $Q_L^{\dagger}$ (equivalent to DFT). This produces:
    $\sum_{q=0}^{2^L-1} \hat{f}[q] \ket{q_{L-1}} \ket{q_{L-2}} \ldots \ket{q_0}$
    \item For each $i=0, \ldots, L-2$, apply controlled-$X$ gate to qubit $i$ controlled on qubit $L-1$. This produces the state:
    $$
    \sum_{q=0}^{2^L-1} \hat{f}[q] \ket{q_{L-1}} \ket{q_{L-2} \oplus q_{L-1}} \ldots \ket{q_0 \oplus q_{L-1}}.
    $$
    \item Cyclic shift of $L$ qubits implemented by SWAP gates. This produces the state:
    $$
    \sum_{q=0}^{2^L-1} \hat{f}[q] \ket{q_{L-2} \oplus q_{L-1}} \ldots \ket{q_0 \oplus q_{L-1}} \ket{q_{L-1}}
    $$
\end{enumerate}
\end{algorithm}

\begin{algorithm}
\caption{Retaining decoding circuit $\tilde{D}_j$}\label{algo:decode-j}
\noindent Input: $\ket{m_0} \sum_{q=0}^{2^j-1} \alpha_q \ket{q}$ where $m_0\in\{0, 1\}$.\\
\noindent Result: $\ket{m_0} \sum_{q=0}^{2^j-1} \alpha_q |\tilde{d}(q, j, m_0)\rangle$.
\begin{enumerate}
\item For each qubit $i=1, \ldots, j-1$, apply $X$ gate to qubit $i$ controlled on qubit $0$. This produces the state:
    $$
    \ket{m_0} \sum_{q=0}^{2^j-1} \alpha_q \ket{q_{j-1} \oplus q_0 } \ldots \ket{q_1  \oplus q_0} \ket{q_0}.
    $$
\item Cyclic shift of $j$ qubits implemented by SWAP gates. This produces the state:
    $$
    \ket{m_0} \sum_{q=0}^{2^j-1} \alpha_q \ket{q_0} \ket{q_{j-1} \oplus q_0 } \ldots \ket{q_1  \oplus q_0}.
    $$
\item Apply $X$ gate to qubit $j-1$ controlled on qubit $j$. This produces the state:
    $$
    \ket{m_0} \sum_{q=0}^{2^j-1} \alpha_q \ket{q_0 \oplus m_0} \ket{q_{j-1} \oplus q_0 } \ldots \ket{q_1  \oplus q_0}.
    $$
\end{enumerate}
\end{algorithm}

Frequency encoding (after applying the DFT) is carried out by Algorithm \ref{algo:encoding}; decoding is performed by reversing the steps. The retaining decoding procedure $\tilde{D}_j$ is shown in Algorithm \ref{algo:decode-j}.

\begin{proof}[Proof of correctness of circuit $\tilde{D}_j$ (Algorithm \ref{algo:decode-j}):]\, \\
Recall the definition of $\tilde{d}$ in equation \ref{local-decoding} and write it in terms of $q$
\begin{equation*}
    \tilde{d}(q, j, m) = \begin{cases}
    2^{j-1} \cdot I (m \text{ is odd}) + \tfrac{q}2, \text{ if } q \text{ is even},\\
    2^{j} - 2^{j-1} \cdot I (m \text{ is odd}) - \tfrac{q+1}{2}, \text{ if } q \text{ is odd}.
    \end{cases}
\end{equation*}
In binary representation $q=q_{j-1}, q_{j-2}, \ldots, q_0$. For even $q
$ ($q_0 = 0$), 
$$
q/2 = 0, q_{j-1}, \ldots, q_1.
$$
For odd $q$ ($q_0 = 1$),
$$
2^j - (q+1)/2 = 1, (q_{j-1} \oplus 1), \ldots, (q_1 \oplus 1).
$$
Combined:
$$
\tilde{d}(q, j, m) = \tilde{d}(q_{j-1}, q_{j-2}, \ldots, q_0; j; m) = (q_0 \oplus m), (q_{j-1} \oplus q_0), \ldots, (q_1 \oplus q_0).
$$
The description of the algorithm has the evolution of each basis state. The final state is 

$\quad\ 
\sum_{q=0}^{2^j-1}\alpha_q \ket{m}\ket{q_0 \oplus m}\ket{q_{j-1} \oplus q_0} \ldots \ket{q_1 \oplus q_0} = \sum_{q=0}^{2^j-1}\alpha_q \ket{m} \ket{\tilde{d}(q, j, m)}.
$
\end{proof}


\ifthenelse{\boolean{originalformat}}{
 \section{Shannon wavelets (see section \ref{section:quantum-shannon-wavelet-transform})}
}{
 \section{Shannon wavelets}
}
\label{appendix:algorithms-shannon}

\begin{algorithm}[H]
\caption{Quantum Shannon wavelet transform}\label{algo:shannon-wavelet}
\noindent Input: $|\hat{f}\rangle = \sum_{q=0}^{2^L-1} \hat{f}[d(q)] \ket{q}$.\\
Result: $\ket{c^D} = \sum_{W^j_m\in \Lambda_T} \sum_{n=0}^{2^j-1} \left( \sum_{q=0}^{2^L - 1} \overline{\hat{\varphi}^j_{m, n}(d(q))} \hat{f}[d(q)] \right) \ket{m2^j + n}$.\\
Notation:
\begin{itemize}
\item Quantum register $\mathcal{R}_q$ of $L$ qubits contains the input state, $\mathcal{R}_q[i]$ corresponds to $i$-th qubit in register $\mathcal{R}_q$, $i=0, \ldots, L-1$
\item Quantum register $\mathcal{R}_h$ of $L-1$ qubits contains ancilla qubits in zero initial state with indices starting from~$1$, i.e., $\mathcal{R}_h[i]$ is initialized to the state $\ket{0}$, $i=1, \ldots, L-1$
\end{itemize}
\begin{enumerate}
    \item Compute $h_j$, $1 \le j < L$, by calling subroutines such that for basis state $\ket{q}$ 
    
    \hspace{1cm} $\underbrace{\ket{q}}_{\mathcal{R}_q}\underbrace{\ket{0}}_{\mathcal{R}_h} \mapsto \ket{q}\ket{h_1(q) h_2(q) \ldots h_{L-1}(q)},$
    
    \noindent in other words, qubit $\mathcal{R}_h[i]$ is in state $\ket{h_i(q)}$.
    \item For each $j=1, \ldots, L-1$ apply decoding circuit $\tilde{D}_j$ to qubits $\mathcal{R}_q[0], \ldots, \mathcal{R}_q[j]$ controlled on $\mathcal{R}_h[j]=\ket{1}$ and $\mathcal{R}_h[j+1]=\ket{0}$ (for $j=L-1$, the latter condition is omitted and $\tilde{D}_j$ is applied to $\mathcal{R}_q[0], \ldots, \mathcal{R}_q[j-1]$).
    \item For each integer $j$ starting from $j=L$ till $j=1$, apply QFT rotation $Q_j^R$ to qubits $\mathcal{R}_q[0], \ldots, \mathcal{R}_q[j-1]$ controlled on $\mathcal{R}_h[j]=\ket{1}$.
    \item For each $j=2, \ldots, L$ apply QFT swap $Q_j^S$ to qubits $\mathcal{R}_q[0],\ldots, \mathcal{R}_q[j-1]$ controlled on $\mathcal{R}_h[j]=\ket{1}$  and $\mathcal{R}_h[j+1]=\ket{0}$ (the latter condition is omitted for $j=L-1$).
    \item Undo the computation of step 1, returning the register $\mathcal{R}_h$ to the state $\ket{0}$. This is feasible because the action of steps 2-4 on the register $\mathcal{R}_q$ is block-diagonal, i.e., it preserves each subspace $\text{span}\{\ket{q} \;|\; \tilde{h}(q)=j\}$.
\end{enumerate}
\end{algorithm}

\begin{proof}
Let us consider the evolution of $|\hat{f}\rangle$ within the subspace $\text{span}\{\ket{q} \;|\; \tilde{h}(q)=j\}$, in other words, the evolution of the following linear combination of the basis states
\begin{small}
$$
\sum_{\tilde{h}(q)=j} \hat{f}[d(q)]\ket{q_{L-1}}\ldots\ket{q_0}\ket{0}.
$$
\end{small}
After the first step, the state is 
\begin{small}
$$
\sum_{\tilde{h}(q)=j} \hat{f}[d(q)]\ket{q_{L-1}}\ldots\ket{q_0}|\underbrace{1 \ldots 1}_{j}0\ldots 0\rangle.
$$
\end{small}
After the second step, the state is 
\begin{small}
$$
\sum_{\tilde{h}(q)=j} \hat{f}[d(q)]\ket{q_{L-1}\ldots q_{j}} \ket{\tilde{d}(q_{j-1}\ldots q_0, j, q_j)}|1 \ldots 10\ldots 0\rangle.
$$
\end{small}
Steps 3-4 apply QFT of $j$-qubits to the state, which results in
\begin{small}
$$
\sum_{\tilde{h}(q)=j} \ket{q_{L-1}\ldots q_j} 
\left(2^{-j/2}\sum_{n=0}^{2^j-1}\hat{f}[d(q)] e^{\i2^{-j} n \cdot \tilde{d}(q_{j-1}\ldots q_0, j, q_j)} \ket{n} \right)
|1 \ldots 10\ldots 0\rangle.
$$
\end{small}
After the last step
\begin{small}
\begin{gather*}
\sum_{\tilde{h}(q)=j} \ket{q_{L-1}\ldots q_{j+1}} 
\left(2^{-j/2}\sum_{n=0}^{2^j-1}\hat{f}[d(q)] e^{\i2\pi 2^{-j} n \cdot \tilde{d}(q_{j-1}\ldots q_0, j, q_j)} \ket{n} \right)
|0\rangle = (\star_1)
\end{gather*}
\end{small}
if we denote $q_{L-1}\ldots q_{j}$ as $m$, and $q_{j-1} \ldots q_0$ as $\tilde{q}$ then
\begin{small}
$$
(\star_1) = \sum_{m\,s.t.\, \mathrm{W}^j_m\in \Lambda_T} \ket{m} \sum_{\tilde{q}=0}^{2^j-1} \left(2^{-j/2}\sum_{n=0}^{2^j-1}\hat{f}[d(m2^j+\tilde{q})] e^{\i2\pi 2^{-j} n \cdot \tilde{d}(\tilde{q}, j, q_{j+1})} \ket{n} \right) \ket{0} = (\star_2)
$$
\end{small}
However, for fixed $j$ and $n$, $\tilde{d}(\tilde{q}, j, q_j) = d(m2^j+\tilde{q})$, and given the support of $\hat{\varphi}^j_{m, n}$,
\begin{small}
$$
\begin{gathered}
\sum_{q=0}^{2^L-1} \overline{\hat{\varphi}^j_{m, n}(d(q))}\hat{f}[d(q)] = \sum_{\tilde{q}=0}^{2^j-1} \overline{\hat{\varphi}^j_{m, n}(2\pi d(m2^j+\tilde{q}))}\hat{f}[d(m2^j+\tilde{q})] 
\\= 2^{-j/2}\sum_{\tilde{q}=0}^{2^j-1}\hat{f}[d(m2^j+\tilde{q})] e^{\i2\pi 2^{-j} n \cdot \tilde{d}(\tilde{q}, j, q_{j+1})}
\end{gathered}
$$
\end{small}
Together,
\begin{small}
$$
(\star_2) = \sum_{m\,s.t.\, \mathrm{W}^j_m\in \Lambda_T} \sum_{n=0}^{2^j-1} \left(\sum_{q=0}^{2^L-1} \overline{\hat{\varphi}^j_{m, n}(d(q))}\hat{f}[d(q)]\right) |m2^j+n\rangle. 
$$
\end{small}
The sum over all possible $j$'s gives the evolution claimed by the algorithm.
\end{proof}

\section{Wave atoms}
\label{appendix:algorithms-wave-atoms}
\ifthenelse{\boolean{originalformat}}{
 \subsection{Quantum circuits for $F^A$ (see section \ref{section:block-diagonal-fourier})}
}{
 \subsection{Quantum circuits for $F^A$}
}
\label{appendix:alg-fa}

\begin{algorithm}[H]
\caption{Quantum wave atoms: Operation $F^A$}\label{algo:wave-atoms-fa}
\noindent Input: $|\hat{f}\rangle = \sum_{q=0}^{2^L-1} \hat{f}[d(q)] \ket{q}$.\\
Result: $\ket{f^D} = \sum_{W^j_m\in \Lambda_T} \sum_{n=0}^{2^j-1} \left( \sum_{q=0}^{2^L - 1} F^A_{m2^j + n, q} \hat{f}[d(q)] \right) \ket{m2^j + n}$.\\
Notation:
\begin{itemize}
\item Quantum register $\mathcal{R}_q$ of $L$ qubits contains the input state; $\mathcal{R}_q[i]$ corresponds to $i$-th qubit in register $\mathcal{R}_q$, $i=0, \ldots, L-1$;
\item Quantum register $\mathcal{R}_h$ of $L-1$ qubits contains ancilla qubits in zero initial state with indices starting from~$1$, i.e. $\mathcal{R}_h[i]$ is initialized in the state $\ket{0}$, $i=1, \ldots, L-1$.
\end{itemize}
\begin{enumerate}
    \item Apply gate $R_0$ to qubit $\mathcal{R}_q[0]$.
    \item Compute $h_j$, $1 \le j < L$, by calling subroutines such that for basis state $\ket{q}$ 
    
    \hspace{1cm}$
    \underbrace{\ket{q}}_{\mathcal{R}_q}\underbrace{\ket{0}}_{\mathcal{R}_h} \mapsto \ket{q}\ket{h_1(q) h_2(q) \ldots h_{L-1}(q)},
    $
    
    \noindent in other words, qubit $\mathcal{R}_h[i]$ is in state $\ket{h_i(q)}$.
    \item For each $j=1, \ldots, L-1$ apply $(-iZ)$-gate to qubit $\mathcal{R}_q[0]$ controlled on $\mathcal{R}_q[j]=\ket{1}$, $\mathcal{R}_h[j]=\ket{1}$ and $\mathcal{R}_h[j+1]=\ket{0}$ (for $j=L-1$, the latter condition is omitted).
    \item For each $j=1, \ldots, L-1$ apply decoding circuit $\tilde{D}_j$ to qubits $\mathcal{R}_q[0], \ldots, \mathcal{R}_q[j]$ controlled on $\mathcal{R}_h[j]=\ket{1}$ and $\mathcal{R}_h[j+1]=\ket{0}$ (for $j=L-1$, the latter condition is omitted and $\tilde{D}_j$ is applied to $\mathcal{R}_q[0], \ldots, \mathcal{R}_q[j-1]$).
    \item For each $j=1, \ldots, L-1$, and each $i=0, \ldots, j-1$, apply $P_{j-i+1}$ gate to $\mathcal{R}_q[i]$ controlled on $\mathcal{R}_h[j]=\ket{1}$ and $\mathcal{R}_h[j+1]=\ket{0}$.
    \item For each integer $j$ starting from $j=L$ to $j=1$, apply QFT rotation $Q_j^R$ to qubits $\mathcal{R}_q[0], \ldots, \mathcal{R}_q[j-1]$ controlled on qubit $\mathcal{R}_h[j]=\ket{1}$.
    \item For each $j=2, \ldots, L$ apply QFT swap $Q_j^S$ to qubits $\mathcal{R}_q[0],\ldots, \mathcal{R}_q[j-1]$ controlled on $\mathcal{R}_h[j]=\ket{1}$ and $\mathcal{R}_h[j+1]=\ket{0}$ (the latter condition is omitted for $j=L-1$).
    \item Undo the computation of step 2, returning the register $\mathcal{R}_h$ to the state $\ket{0}$. This is feasible because the action of steps 3-7 on the register $\mathcal{R}_q$ is block-diagonal, i.e., it preserves each subspace $\text{span}\{\ket{q} \;|\; \tilde{h}(q)=j\}$.
\end{enumerate}
\end{algorithm}

\begin{proof}[Proof of correctness of Algorithm \ref{algo:wave-atoms-fa}]
Let's consider the evolution of $|\hat{f}\rangle$ within the subspace $\text{span}\{\ket{q} \;|\; \tilde{h}(q)=j\}$, 
$$
\sum_{\tilde{h}(q) = j} \hat{f}[d(q)]\ket{q}\ket{0} = \sum_{\tilde{h}(q)=j} \hat{f}[d(q)] \ket{q_{L-1}}\ldots \ket{q_0} \ket{0}.
$$
After the first step, the state is 
$$
\sum_{\tilde{h}(q)=j} (-1)^{q_0} e^{(-1)^{q_0+1}\i\pi/4}\hat{f}[d(q)]\ket{q}\ket{0}.
$$
After the second step, the state is 
$$
\sum_{\tilde{h}(q)=j} (-1)^{q_0} e^{(-1)^{q_0+1}\i\pi/4}\hat{f}[d(q)]\ket{q}|\underbrace{1\ldots1}_j0\ldots \rangle.
$$
After the third step, the state is 
$$
\sum_{\tilde{h}(q)=j} (-1)^{q_0 \lor q_j} \i^{q_j} e^{(-1)^{q_0+1}\i\pi/4}\hat{f}[d(q)]\ket{q}\ket{1\ldots10\ldots 0}.
$$
After the fourth step, the state is 
$$
\begin{gathered}
\sum_{\tilde{h}(q)=j} (-1)^{q_0 \lor q_j} \i^{q_j} e^{(-1)^{q_0+1}\i\pi/4}\hat{f}[d(q)]\ket{q_{L-1}}\ldots \ket{q_j} \otimes |\tilde{d}(q_{j-1}\ldots q_0, k, q_j\rangle \ket{1\ldots10\ldots 0}.
\end{gathered}
$$
After the fifth step, the state is 
$$
\begin{gathered}
\sum_{\tilde{h}(q)=j} (-1)^{q_0 \lor q_j} \i^{q_j} e^{(-1)^{q_0+1}\i\pi/4}\hat{f}[d(q)]\ket{q_{L-1}}\ldots \ket{q_j} 
\\ \otimes  e^{\i \cdot \pi 2^{-j}\tilde{d}(q_{j-1}\ldots q_0, k, q_j)}|\tilde{d}(q_{j-1}\ldots q_0, k, q_j\rangle \ket{1\ldots10\ldots 0},
\end{gathered}
$$
After the steps $6-8$, the state is 
$$
\begin{gathered}
\sum_{\tilde{h}(q)=j} (-1)^{q_0 \lor q_j} \i^{q_j} e^{(-1)^{q_0+1}\i\pi/4}\hat{f}[d(q)]\ket{q_{L-1}}\ldots \ket{q_j}  
\\ \otimes  e^{\i \cdot \pi 2^{-j}\tilde{d}(q_{j-1}\ldots q_0, k, q_j)}\left(2^{-j/2}\sum_{n=0}^{2^j-1} e^{\i2^{-j}n\cdot \tilde{d}(q_{j-1}\ldots q_0, j, q_j)} \ket{n} \right) \ket{0},
\end{gathered}
$$
if we denote $q_{L-1}\ldots q_j$ as $m$, and $q_{j-1} \ldots q_0$ as $\tilde{q}$ then we can re-write the final state as 
$$
(\star_1) = \sum_{m\,s.t.\, \mathrm{W}^j_m\in \Lambda_T} \sum_{n=0}^{2^j-1} \sum_{\tilde{q}=0}^{2^j-1} F^A_{m2^j + n, m2^j + \tilde{q}} \hat{f}[d(m2^j+\tilde{q})] |m2^j+n\rangle\ket{0} = (\star)
$$
However, $F^A_{m2^j+n, q} = 0$ for $q \not\in \{m2^j, \ldots, (m+1)2^j-1\}$, so 
$$
(\star) = \sum_{m\,s.t.\, \mathrm{W}^j_m\in \Lambda_T} \sum_{n=0}^{2^j-1} \sum_{q=0}^{2^L-1} F^A_{m2^j + n, q} \hat{f}[d(q)] |m2^j+n\rangle \ket{0}.
$$
The sum over all possible $j$'s gives the evolution claimed by the algorithm.
\end{proof}
\ifthenelse{\boolean{originalformat}}{
 \subsection{Efficient quantum circuits for $\tilde{G}^A$ (see section \ref{section:efficient-GA})}
}{
 \subsection{Efficient quantum circuits for $\tilde{G}^A$}
}
\label{appendix:alg-ga}

\begin{algorithm}[H]\caption{Operation $\tilde{G}^A$, using linear Pauli rotations}
\label{algo:wave-atoms-tilde-g-old}
\noindent Input: 
$$
\ket{\alpha} = \sum_{\mathrm{W}^j_m\in \Lambda_T} \smashoperator{\sum_{n=0}^{2^{j-1}-1}} \ket{m2^{j-1}+n}\left(\alpha_{m2^j+2n} \ket{0} + \alpha_{m2^j+2n+1} \ket{1} \right)
$$

\noindent Output: 
\begin{multline*}
\tilde{G}^A\ket{\alpha} = \sum_{\mathrm{W}^j_m\in \Lambda_T} \smashoperator{\sum_{n=0}^{2^{j-1}-1}} |m2^{j-1}+n\rangle \otimes 
\begin{pmatrix}
\cos(v(j, m, n)) & \i \sin(v(j, m, n))\\
\i \sin(v(j,m,n)) & \cos(v(j,m,n))
\end{pmatrix} \left(\alpha_{m2^j+2n} \ket{0} + \alpha_{m2^j+2n+1} \ket{1} \right).
\end{multline*}
\noindent Notation:
\begin{itemize}
\item Quantum register $\mathcal{R}_q$ of $L$ qubits contains the input state; $\mathcal{R}_q[i]$ corresponds to $i$-th qubit in register $\mathcal{R}_q$, $i=0,\ldots,L-1$
\item Quantum register $\mathcal{R}_h$ of $L-1$ qubits contains ancilla qubits in zero initial state with indices starting from~$1$, i.e., $\mathcal{R}_h[i]$ is initialized in the state $\ket{0}$, $i=1, \ldots, L-1$
\item Quantum register $\mathcal{R}_I$ of $1$ qubit contains an ancilla qubit in zero initial state.
\end{itemize}
\begin{enumerate}
    \item Apply $S$-gate to $\mathcal{R}_q[0]$.
    \item Compute $h_j$, $1 \le j < L$, by calling subroutines such that for basis state $\ket{q}$
    \begin{small}
    $$
    \underbrace{\ket{q}}_{\mathcal{R}_q} \underbrace{\ket{0}}_{\mathcal{R}_h}\underbrace{\ket{0}}_{\mathcal{R}_I} \mapsto \ket{q}\ket{h_1(q) \ldots h_{L-1}(q)}\ket{0},
    $$
    \end{small}
    in other words, qubit $\mathcal{R}_h[i]$ is in the state $\ket{h_i(q)}$.
    \item For each $j=1,\ldots, L-1$ controlled on $\mathcal{R}_h[j]=\ket{1}$, $\mathcal{R}_h[j+1]=\ket{0}$:
    \begin{enumerate}
        \item Compute $\mathrm{Comparator}_{j-1}(\mu_0(j, 0) + 1)$ applied to qubits $\mathcal{R}_q[1], \ldots, \mathcal{R}_q[j-1]$ with the result added to $\mathcal{R}_I$ ancilla qubit.
        \item Apply linear Pauli rotation (see Figure \ref{fig:linear-pauli}) with slope of $3\pi 2^{-j}$ and offset of $-\pi$ controlled on $\mathcal{R}_I=\ket{1}$ and $\mathcal{R}_q[j]=\ket{0}$.
        \item Apply linear Pauli rotation with slope of $3\pi 2^{-(j+1)}$ and offset of $-\tfrac{\pi}2$ controlled on $\mathcal{R}_I=\ket{0}$ and $\mathcal{R}_q[j]=\ket{0}$.
        \item Undo the computation of step 3a, returning the register $\mathcal{R}_I$ to the state $\ket{0}$. This is feasible because steps 3b-3c are limited to action on qubit $\mathcal{R}_q[0]$.
        \item Compute $\mathrm{Comparator}_{j-1}(\mu_0(j, 1) + 1)$ applied to qubits $\mathcal{R}_q[1], \ldots, \mathcal{R}_q[j-1]$ with the result added to $\mathcal{R}_I$ ancilla qubit.
        \item Apply linear Pauli rotation with slope of $3\pi 2^{-(j+1)}$ and offset of $-\tfrac{\pi}4$ controlled on  $\mathcal{R}_I=\ket{1}$ and $\mathcal{R}_q[j]=\ket{1}$.
        \item Apply linear Pauli rotation with slope of $3\pi 2^{-j}$ and offset of $-\tfrac{\pi}2$ controlled on that $\mathcal{R}_I=\ket{0}$ and $\mathcal{R}_q[j]=\ket{1}$.
        \item Undo the computation of step 3e, returning the register $\mathcal{R}_I$ to the state $\ket{0}$. This is feasible because steps 3f-3g are limited to action on qubit $\mathcal{R}_q[0]$.
    \end{enumerate}
    \item Undo the computation of step 2, returning the register $\mathcal{R}_h$ to the state $\ket{0}$. This is feasible because the action of step 3 on the register $\mathcal{R}_q$ is block-diagonal, i.e., it preserves each subspace $\text{span}\{\ket{q} \;|\; \tilde{h}(q)=j\}$.
    \item Apply inverse $S$-gate to $\mathcal{R}_q[0]$.
\end{enumerate}
\end{algorithm}

\begin{proof}[Proof of correctness of Algorithm \ref{algo:wave-atoms-tilde-g-old}]
For the proof, let us denote 
$$
R(\theta) = \begin{pmatrix}
\cos(\theta) & \sin(\theta)\\
-\sin(\theta) & \cos(\theta)
\end{pmatrix}.
$$
Let's consider the evolution within the subspace $\text{span}\{\ket{q} \;|\; \tilde{h}(q)=j\}$ for a fixed $j$, in other words,
$$
\sum_{m\,s.t.\, \mathrm{W}^j_m\in \Lambda_T} \smashoperator{\sum_{n=0}^{2^{j-1}-1}} |m2^{j-1}+n\rangle(\alpha_{m2^j + 2n} \ket{0} + \alpha_{m2^j+2n+1}\ket{1}) \ket{0}\ket{0}.
$$
After the first step, the state is
$$
\sum_{m\,s.t.\, \mathrm{W}^j_m\in \Lambda_T} \smashoperator{\sum_{n=0}^{2^{j-1}-1}} |m2^{j-1}+n\rangle(\alpha_{m2^j + 2n} \ket{0} + \i \alpha_{m2^j+2n+1}\ket{1})\ket{0}\ket{0}.
$$
After the second step, the state is
$$
\sum_{m\,s.t.\, \mathrm{W}^j_m\in \Lambda_T} \smashoperator{\sum_{n=0}^{2^{j-1}-1}} |m2^{j-1}+n\rangle(\alpha_{m2^j + 2n} \ket{0} + \i \alpha_{m2^j+2n+1}\ket{1})|\underbrace{1 \ldots 1}_j 0\ldots 0\rangle\ket{0}.
$$
After step 3a, the state is 
$$
\begin{gathered}
\sum_{m\,s.t.\, \mathrm{W}^j_m\in \Lambda_T} \smashoperator{\sum_{n=0}^{2^{j-1}-1}} |m2^{j-1}+n\rangle(\alpha_{m2^j + 2n} \ket{0} + \i \alpha_{m2^j+2n+1}\ket{1})|1 \ldots 1 0\ldots 0\rangle \\ \otimes \ket{I(n \ge \mu_0(j, 0) + 1)}.
\end{gathered}
$$
Note that $I(n \ge \mu_0(j,0) + 1) = I(n > 2^j / 3)$ as $n, j \in \mathbb{Z}$.
After step 3b, the state is 
\begin{align*}
\sum_{m\,s.t.\, \mathrm{W}^j_m\in \Lambda_T} &\smashoperator{\sum_{n=0}^{2^{j-1}-1}}|m2^{j-1}+n\rangle  
\\
&\otimes  \left( R\left( \tfrac{\pi}{2^{j+1}}(3n-2^j) I(m \text{ is even}) I(n>2^j/3)  \right) (\alpha_{m2^j + 2n}\ket{0} + \i \alpha_{m2^j+2n+1} \ket{1}) \right)  
\\
&\otimes |1 \ldots 1 0\ldots 0\rangle\ket{I(q_{j-1} \ldots q_1 \ge \mu_0(j, 0) + 1)}.
\end{align*}
After step 3c, the state is 
\begin{align*}
\sum_{m\,s.t.\, \mathrm{W}^j_m\in \Lambda_T} &\smashoperator{\sum_{n=0}^{2^{j-1}-1}}|m2^{j-1}+n\rangle \\
&\otimes \left( R\left( \tfrac{2^{I(n>2^j/3)}\pi }{2^{j+1}}(3n-2^j) I(m \text{ is even})  \right) (\alpha_{m2^j + 2n}\ket{0} + \i \alpha_{m2^j+2n+1} \ket{1}) \right) \\
&\otimes |1 \ldots 1 0\ldots 0\rangle\ket{I(q_{j-1} \ldots q_1 \ge \mu_0(j, 0) + 1)}.
\end{align*}
After step 3d, the state is 
\begin{align*}
\sum_{m\,s.t.\, \mathrm{W}^j_m\in \Lambda_T} &\smashoperator{\sum_{n=0}^{2^{j-1}-1}}|m2^{j-1}+n\rangle 
\\
&\otimes  \left( R\left( \tfrac{2^{I(n>2^j/3)}\pi }{2^{j+1}}(3n-2^j) I(m \text{ is even})  \right) (\alpha_{m2^j + 2n}\ket{0} + \i \alpha_{m2^j+2n+1} \ket{1}) \right)  
\\
&\otimes |1 \ldots 1 0\ldots 0\rangle\ket{0}.
\end{align*}
Steps 3a-3d actions are limited to even $m$'s. Steps 3e-3h are similar but applied to odd $m$'s. At the end of step 3, the state is 
\begin{align*}
\sum_{m\,s.t.\, \mathrm{W}^j_m\in \Lambda_T} &\smashoperator{\sum_{n=0}^{2^{j-1}-1}}|m2^{j-1}+n\rangle 
\\
&\otimes \Bigg( R\left( \tfrac{2^{I(n>2^j/3)}\pi }{2^{j+1}}(3n-2^j) I(m \text{ is even}) \right) R\left(\tfrac{2^{I(n<2^{j-1}/3)}\pi } {2^{j+2}}(3n-2^{j-1}) I(m \text{ is odd})  \right) \\ 
&(\alpha_{m2^j + 2n}\ket{0} + \i \alpha_{m2^j+2n+1} \ket{1}) \Bigg)  
\\
&\otimes |1 \ldots 1 0\ldots 0\rangle\ket{0},
\end{align*}
with the use of function $v(\cdot)$ we can re-write the state as 
$$
\sum_{m\,s.t.\, \mathrm{W}^j_m\in \Lambda_T} \smashoperator{\sum_{n=0}^{2^{j-1}-1}} |m2^{j-1}+n\rangle \otimes 
R(v(j, m, n)) (\alpha_{m2^j + 2n}\ket{0} + \i \alpha_{m2^j+2n+1} \ket{1})
 \otimes \ket{1\ldots 1 0\ldots 0}\ket{0}.
$$
After the fourth step, the state is 
$$
\sum_{m\,s.t.\, \mathrm{W}^j_m\in \Lambda_T} \smashoperator{\sum_{n=0}^{2^{j-1}-1}}|m2^{j-1}+n\rangle \otimes 
R(v(j, m, n)) (\alpha_{m2^j + 2n}\ket{0} + \i \alpha_{m2^j+2n+1} \ket{1})
\otimes \ket{0}\ket{0}.
$$
As 
$$
\alpha_{m2^j + 2n}\ket{0} + \i \alpha_{m2^j+2n+1} \ket{1} = S(\alpha_{m2^j + 2n}\ket{0} + \alpha_{m2^j+2n+1} \ket{1}),
$$
the final state is 
\begin{align*}
\sum_{m\,s.t.\, \mathrm{W}^j_m\in \Lambda_T} \smashoperator{\sum_{n=0}^{2^{j-1}-1}} |m2^{j-1}+n\rangle \otimes R^{(i)}(v(j,m,n)) (\alpha_{m2^j + 2n}\ket{0} + \alpha_{m2^j+2n+1}\ket{1}) \otimes \ket{0}\ket{0},
\end{align*}
where
$$
R^{(i)}(\theta) = \begin{pmatrix}
\cos(\theta) & \i \sin(\theta)\\
\i \sin(\theta) & \cos(\theta)
\end{pmatrix}.
$$
\end{proof}

\ifthenelse{\boolean{originalformat}}{
 \subsection{Quantum circuits for $R$ (see section \ref{section:permutations})}
}{
 \subsection{Quantum circuits for $R$}
}
\label{appendix:alg-r}

\begin{algorithm}
\caption{Computing the functions $h^\rho_j(\cdot)$}\label{algo:h-rho}
\noindent Input: $\ket{\alpha}\ket{0}\ket{0}\ket{0} = \sum_{q=0}^{2^{L-1}-1} \alpha_q \ket{q}\ket{0}\ket{0}\ket{0}$.\\
Result: $\sum_{q=0}^{2^{L-1}-1} \alpha_q \ket{q}\ket{h_1^\rho(q),\ldots,h_{L-2}^\rho(q)}\ket{0}\ket{0}$.\\
Notation:\\
\indent\hspace{0.25cm}Quantum register $\mathcal{R}_q$ of $L-1$ qubits containing the input state $\ket{\alpha}$; $\mathcal{R}_q[i]$ corresponds to $i$-th qubit in register $\mathcal{R}_q$, $i=0, \ldots, L-2$;\\
\indent\hspace{0.25cm}Quantum register $\mathcal{R}'_h$ of $L-1$ qubits in initial state $\ket{0}$; $\mathcal{R}'_h[i]$ corresponds to $i$-th qubit in register $\mathcal{R}'_h$, $i=1, \ldots, L-1$;\\
\indent\hspace{0.25cm}Quantum register $\mathcal{R}_h$ of $L-1$ ancilla qubits in initial state $\ket{0}$; $\mathcal{R}_h[i]$ corresponds to $i$-th qubit in register $\mathcal{R}_h$, $i=1, \ldots, L-1$;\\
\indent\hspace{0.25cm}Quantum register $\mathcal{R}_I$ of $2$ ancilla qubits in initial state $\ket{0}$.
\begin{enumerate}
\item Compute $h_j$, $1 \le j < L$, by calling subroutines on the register $\mathcal{R}_q$ and ancilla qubit $R_I[0]$ as the least significant qubit,
    $$
    \underbrace{\ket{q}}_{\mathcal{R}_q} \underbrace{\ket{0}}_{\mathcal{R}'_h}\underbrace{\ket{0}}_{\mathcal{R}_h}\underbrace{\ket{0}}_{\mathcal{R}_I} \mapsto \ket{q}\ket{0}\ket{h_1(q), \ldots, h_{L-1}(q)}\ket{0}.
    $$
\item Apply $\mathrm{Compare}_{L-1}(\mu_0(j_l, 0)+1)$ to $\mathcal{R}_q$ with the result added to $\mathcal{R}_I[0]$;
\item Apply $\mathrm{Compare}_{L-1}(m_r2^{j_r-1}+\mu_0(j_r, m_r)+1)$ to $\mathcal{R}_q$ with the result added to $\mathcal{R}_I[0]$;
\item For each $j=2, \ldots, L-1$ controlled on that $\mathcal{R}_h[j] = \ket{1}$, $\mathcal{R}_h[j+1]=\ket{0}$ and $\mathcal{R}_I[0]=\ket{1}$:
\begin{enumerate}
    \item Apply $\mathrm{Compare}_j(\mu_0(j, 0)+1)$ to the first $j-1$ qubits in $\mathcal{R}_q$ controlled on that $\mathcal{R}_q[j-1]=\ket{0}$ and add the result to $\mathcal{R}_I[1]$;
    \item Apply $\mathrm{Compare}_j(\mu_0(j, 1)+1)$ to the first $j-1$ qubits in $\mathcal{R}_q$ controlled on that $\mathcal{R}_q[j-1]=\ket{1}$ and add the result to $\mathcal{R}_I[1]$;
    \item Apply $X$-gate to qubit $\mathcal{R}'_h[j]$ controlled on 
    $\mathcal{R}_I[1]=\ket{0}$ and $\mathcal{R}_q[j-1]=\ket{0}$;
    \item Apply $X$-gate to qubit $\mathcal{R}'_h[j]$ controlled on $\mathcal{R}_I[1]=\ket{1}$ and $\mathcal{R}_q[j-1]=\ket{1}$;
    \item Apply $X$-gate to qubit $\mathcal{R}'_h[j-1]$ controlled on $\mathcal{R}_I[1]=\ket{0}$ and $\mathcal{R}_q[j-1]=\ket{1}$;
    \item Apply $X$-gate to qubit $\mathcal{R}'_h[j-1]$ controlled on  $\mathcal{R}_I[1]=\ket{1}$, $\mathcal{R}_q[j-1]=\ket{0}$;
    \item Un-compute steps (b) and (a).
\end{enumerate}
\item Un-compute steps 3, 2, 1.
\end{enumerate}
\end{algorithm}
\begin{proof}[Proof of correctness of Algorithm \ref{algo:h-rho}]
Here $\mathcal{R}_I[0]=\ket{1}$ allows us to avoid permuting the border cases and computed in steps 2-3. $\mathcal{R}_I[1]=\ket{1}$ computed in step 4a-4b controls the condition $n > \mu(j, m)$. This condition depends on the parity of $m$ which is equal to $\mathcal{R}_q[j-1]$. Finally, steps 4c-4f implement the logic outlined in the main text.
\end{proof}
 \end{appendices}
}{
 \pagebreak
 \setcounter{page}{1}
 \begin{center}
  {\LARGE Supplementary Material for\\ ``Quantum Wave Atom Transforms''}
  \vskip 12pt
  {\large Marianna Podzorova$^{1,3}$ and Yi-Kai Liu$^{2,3}$}
  \vskip 12pt
  $^1$Department of Computer Science, University of Maryland, College Park, MD
  \vskip 6pt
  $^2$Applied and Computational Mathematics Division, National Institute of Standards and Technology
(NIST), Gaithersburg, MD
  \vskip 6pt
  $^3$Joint Center for Quantum Information and Computer Science (QuICS), NIST/University of Maryland,
College Park, MD
  \vskip 6pt
 \end{center}
 \setcounter{section}{0}
 \renewcommand\thesection{\Alph{section}}
 
}

}{
}


\end{document}